\renewenvironment{comment}{}{}
\tikzset{cross/.style={cross out, draw=black, minimum size=2*(#1-\pgflinewidth), inner sep=0pt, outer sep=0pt}, cross/.default={1pt}}
\tikzset{cross/.style={cross out, draw=black, minimum size=2*(#1-\pgflinewidth), inner sep=0pt, outer sep=0pt}, cross/.default={1pt}}
\DeclareFontFamily{OMX}{MnSymbolE}{}
\DeclareSymbolFont{MnLargeSymbols}{OMX}{MnSymbolE}{m}{n}
\DeclareFontShape{OMX}{MnSymbolE}{m}{n}{
    <-6>  MnSymbolE5
   <6-7>  MnSymbolE6
   <7-8>  MnSymbolE7
   <8-9>  MnSymbolE8
   <9-10> MnSymbolE9
  <10-12> MnSymbolE10
  <12->   MnSymbolE12
}{}
\DeclareFontShape{OMX}{MnSymbolE}{b}{n}{
    <-6>  MnSymbolE-Bold5
   <6-7>  MnSymbolE-Bold6
   <7-8>  MnSymbolE-Bold7
   <8-9>  MnSymbolE-Bold8
   <9-10> MnSymbolE-Bold9
  <10-12> MnSymbolE-Bold10
  <12->   MnSymbolE-Bold12
}{}
\DeclareMathAlphabet{\mathscrbf}{OMS}{mdugm}{b}{n}
\let\llangle\@undefined
\let\rrangle\@undefined
\DeclareMathDelimiter{\llangle}{\mathopen}%
                     {MnLargeSymbols}{'164}{MnLargeSymbols}{'164}
\DeclareMathDelimiter{\rrangle}{\mathclose}%
                     {MnLargeSymbols}{'171}{MnLargeSymbols}{'171}
\definecolor{myPurple}{rgb}{0.5,0.1,0.6}
\definecolor{myOrange}{rgb}{1.0,0.5,0.0}
\definecolor{myRed}{rgb}{1.0,0.0,0.0}
\definecolor{myGreen}{rgb}{0.0,0.5,0.0}
\definecolor{LatexBlue}{rgb}{0.211765,0.227451,0.666667}
\definecolor{myBlue}{rgb}{0.0,0.0,1.0}
\definecolor{myBlack}{rgb}{0.0,0.0,0.0}
\definecolor{myGray}{rgb}{0.3,0.3,0.3}
\theoremstyle{plain}
\newtheorem{theorem}{Theorem}[section]
\newtheorem*{theorem*}{Theorem}
\newtheorem{proposition}[theorem]{Proposition}
\newtheorem*{proposition*}{Proposition}
\newtheorem{corollary}[theorem]{Corollary}
\theoremstyle{definition}
\newenvironment{remark}
  {\pushQED{\qed}\remarkx}
  {\popQED\endremarkx}
\DeclareMathOperator{\res}{res}
\DeclareMathOperator{\Tr}{Tr}
\newcommand{\lda}{\lambda}
\newcommand{\g}{\mathfrak{g}}
\newcommand{\Lg}{\widetilde{\mathfrak{g}}}
\newcommand{\LG}{\widetilde{G}}
\newcommand{\Lag}{\mathscr{L}}
\newcommand{\s}{\sigma}
\def\sl{\mathfrak{sl}}
\def\s{\mathfrak{s}}
\newcommand{\lau}[1]{(\kern-.2em( #1 )\kern-.2em)}
\newcommand{\ie}{{\it i.e.}\ }
\def\be{\begin{equation}}
\def\ee{\end{equation}}
\def\bea{\begin{eqnarray}}
\def\eea{\end{eqnarray}}
\def\RR{\mathbb{R}}
\def\1{\bm{1}}
\numberwithin{equation}{section}
\begin{document}
	
\begin{center}
{\LARGE\bf Lagrangian multiforms on coadjoint orbits for finite-dimensional integrable systems}\\
\end{center}

\vspace{1cm}
\begin{center}
{\large Vincent Caudrelier$^{1}$, Marta Dell'Atti$^{2}$, Anup Anand Singh$^{1}$}
\end{center}

\begin{center}
{\it $^{1}$ School of Mathematics, University of Leeds, UK\\
$^{2}$ School of Mathematics and Physics, University of Portsmouth, UK}
\end{center}

\vspace{3cm}

\begin{abstract}
Lagrangian multiforms provide a variational framework to describe integrable hierarchies. The case of Lagrangian $1$-forms covers finite-dimensional integrable systems. We use the theory of Lie dialgebras introduced by Semenov-Tian-Shansky to construct a Lagrangian $1$-form. Given a Lie dialgebra associated with a Lie algebra $\g$ and a collection $H_k$, $k=1,\dots,N$, of invariant functions on $\g^*$, we give a formula for a Lagrangian multiform describing the commuting flows for $H_k$ on a coadjoint orbit in $\g^*$. 
We show that the Euler-Lagrange equations for our multiform produce the set of compatible equations in Lax form associated with the underlying $r$-matrix of the Lie dialgebra. We establish a structural result which relates the closure relation for our multiform to the Poisson involutivity of the Hamiltonians $H_k$ and the so-called ``double zero'' on the Euler-Lagrange equations.
 The construction is extended to a general coadjoint orbit by using reduction from the free motion of the cotangent bundle of a Lie group.
 We illustrate the dialgebra construction of a Lagrangian multiform with the open Toda chain and the rational Gaudin model. The open Toda chain is built using two different Lie dialgebra structures on $\sl(N+1)$. The first one possesses a non-skew-symmetric $r$-matrix and falls within the Adler--Kostant--Symes scheme. The second one possesses a skew-symmetric $r$-matrix. In both cases, the connection with the well-known descriptions of the chain in Flaschka and canonical coordinates is provided. 
\end{abstract}

\newpage

\tableofcontents

\section{Introduction}

 The concept of Lagrangian multiforms was introduced in \cite{LN} with the objective of providing a variational criterion of integrability. The pioneering insight was inspired by the well-established criterion for integrability known as multidimensional consistency \cite{N1,ABS} which is the discrete analogue of the property of commuting (Hamiltonian) flows for a dynamical system sitting in an integrable hierarchy. It was proposed to introduce a generalised action and a variational principle, involving a new object (a Lagrangian multiform), to capture purely variationally multidimensional consistency. This idea grew quickly, first in the discrete realm, see \cite{HJN} and references therein. Over the last decade or so, the universality of this idea and its connections with more traditional features of integrability (Lax pair, Hamiltonian structures) has been illustrated in many other incarnations of integrable systems: finite-dimensional systems \cite{YKLN} followed by \cite{Su1,PS}, continuous infinite-dimensional systems -- field theories in $1+1$ dimensions \cite{SV,SNC1,SNC2,CS1,CS2,PV,CSV} and in $2+1$ dimensions \cite{SNC3,N2} -- and semi-discrete systems \cite{SV2}. The relations between discrete and continuous multiforms were explored in \cite{Ver}. The concept was even extended recently to non-commuting flows in \cite{CNSV}. In general, a Lagrangian multiform is a $d$-form which is integrated over a hypersurface of dimension $d$ in a so-called multi-time space of dimension greater than $d$ to yield an action functional depending not only on the field configurations but also on the hypersurface. This last point is the main departure from a traditional action and principle of least action. One postulates a principle of least action which must be valid for {\it any} hypersurface embedded in the multi-time space. This is the postulate which captures the idea of the commutativity of the flows and which was adopted as a definition of pluri-Lagrangians, see \cite{PS,PV} and references therein. In Lagrangian multiform theory, there is an additional postulate, the {\it closure relation} which is the direct counterpart of the Poisson involutivity of Hamiltonians, the Liouville criterion for integrability. 

The generalised variational principle produces equations that come in two flavours: 1) Euler-Lagrange equations associated with each of the coefficients of the Lagrangian multiform which form a collection of Lagrangian densities; 2) Corner or structure equations on the Lagrangian coefficients themselves which select possible models and ensure the compatibility of the various equations of motion imposed on a common set of fields. Classifying all possible Lagrangian multiforms along these lines would amount to classifying all integrable hierarchies. In practice, it is a nontrivial task to obtain all the Lagrangian coefficients of a multiform which produce compatible equations of motion. Beyond brute-force calculations to solve the corner equations~\cite{SNC1}, several works have used the idea of variational symmetries to achieve this goal \cite{PS,PV,SNC2}. This produces an algorithm to construct the Lagrangian coefficients one after the other from a given initial Lagrangian. Although perfectly fine in theory, this can become quickly unmanageable in practice and usually formulas for only a few Lagrangian coefficients are obtained. It also has the disadvantage of singling out some independent variables in the hierarchy which then appear as the so-called ``alien derivatives'' in the higher Lagrangian coefficients.  

More recently, another approach was introduced which takes a more global view on a hierarchy and provides an efficient way of describing all the Lagrangian coefficients in one formula~\cite{CS2,CSV}, see also \cite{N2}. A key insight in \cite{CS2,CSV} was the incorporation in the Lagrangian multiform of key ingredients known in the Hamiltonian framework for integrable hierarchies, in particular the classical $r$-matrix, as well as the ``compounding'' of hierarchies following \cite{N3}. This paper draws and expands upon this insight and is concerned with Lagrangian $1$-forms which allow one to treat integrable hierarchies of {\it finite}-dimensional systems. Specifically, we show how the theory of {\it Lie dialgebras} \cite{STS} can be used to construct systematically a Lagrangian multiform for {\it any} finite-dimensional system which falls within the Lie dialgebra framework. The latter incorporates and generalises the perhaps more well-known Adler-Kostant-Symes scheme~\cite{Ad,Sy,Kos}. In terms of versatility, this goes beyond the results of \cite{CS2,CSV} which were confined to skew-symmetric classical $r$-matrices. The Lie dialgebra framework can easily accommodate the non-skew-symmetric case. For conciseness, we only illustrate this versatility and our construction on two famous models: the open Toda chain and the (rational) Gaudin model. However, the construction can in principle cover a much larger range of models which falls into the $r$-matrix scheme, see \cite{RSTS} for a description of many such systems including classical tops. To our knowledge, only one instance of a Lagrangian description of the AKS scheme has been proposed before in \cite{FG}. Compared to the present paper, \cite{FG} is limited to the AKS scheme and provides only one Lagrangian corresponding to the quadratic Hamiltonian $\Tr L^2/2$ (the idea of Lagrangian multiforms was not yet available at that time). By using ideas from Hamiltonian reduction, we produce a Lagrangian multiform on a general coadjoint orbit which encompasses the results of~\cite{FG} as a special case. 

Our main results are:
\begin{enumerate}
    \item The definition \eqref{our_Lag}-\eqref{pot_part} of a Lagrangian multiform from the data of a Lie dialgebra and the proof that its multi-time Euler-Lagrange equations produce a hierarchy of compatible equations in Lax form, Theorem \ref{Th_multi_EL}.

    \item For this Lagrangian multiform, the derivation of an identity relating its closure relation, its Euler-Lagrange equations and the Poisson involutivity of associated Hamiltonians, Theorem \ref{prop_double_zero}. 

\item The construction of a Lagrangian multiform from the reduction of a ``free'' Lagrangian on the cotangent bundle $T^*A$ of a Lie group $A$ and the connection with the above Lie dialgebra case.

    \item Explicit Lagrangian multiforms for the open Toda chain and the rational Gaudin model. 
\end{enumerate} 

The paper is organised as follows. In Section \ref{background}, we briefly review the notions of Lagrangian multiforms and Lie dialgebras that we need. Section \ref{gen_res} introduces the Lagrangian multiform and contains two main results, Theorems \ref{Th_multi_EL} and \ref{prop_double_zero}. Section \ref{reduction} deals with another main result. We recast our results in the context of reduction from free motion on the cotangent bundle of a Lie group and produce a Lagrangian multiform on a general coadjoint orbit. We show how to recover the case of a Lagrangian multiform associated with a Lie dialgebra described in Section \ref{gen_res}. In Section \ref{Flaschka}, we illustrate the construction for the open Toda chain associated with a Lie dialgebra via a non-skew-symmetric $r$-matrix. We present explicit expressions for the Lagrangian coefficients and relate our results to the well-known formulations of the Toda chain in Flaschka and canonical coordinates. In Section \ref{Toda_pq}, the same open Toda chain is used to illustrate our construction in the case of a skew-symmetric $r$-matrix. We also relate our results to the description in Flaschka and canonical coordinates. Section \ref{Gaudin} is concerned with the rational Gaudin model and is the opportunity for us to show how our Lagrangian multiform operates in the case of an infinite-dimensional Lie algebra which accounts for the presence of a spectral parameter in the Lax matrices. Although it deals with a finite-dimensional Gaudin model, this section bears a lot of similarities with the framework introduced in \cite{CSV} for integrable field theories. We end with concluding remarks in Section \ref{ccl}.

\section{Background material}\label{background}

\subsection[Lagrangian \texorpdfstring{$1$-forms}{one-form}]{Lagrangian $\boldsymbol{1}$-forms }

We review in more details the notion of Lagrangian multiforms that we need, restricting our attention to Lagrangian $1$-forms since our aim is to describe integrable hierarchies of finite-dimensional systems. The basic object is a Lagrangian 1-form 
\begin{equation}
\label{def_Lag_multiform}
    \Lag[q]=\sum_{k=1}^N \, \Lag_k[q] \, dt_k
\end{equation}
and the related generalised action
\begin{equation}
    S[q,\Gamma]=\int_\Gamma \Lag[q]
\end{equation}
where $\Gamma$ is a curve in the multi-time $\RR^N$ with (time) coordinates $t_1,\dots,t_N$ and $q$ denotes generic configuration coordinates. For instance, $q$ could be a position vector in $\RR^d$ for some $d$, or as will be the case for us, an element of a (matrix) Lie group. The notations $\Lag[q]$ and $\Lag_k[q]$ mean that these quantities depends on $q$ and a finite number of derivatives of $q$ with respect to the times $t_1,\dots,t_N$. In this paper, we restrict ourselves only to the case of first derivatives and simply write $\Lag_k$ for the Lagrangian coefficients.
The application of the generalised variational principle leads to the following multi-time Euler-Lagrange equations \cite{Su1}
\begin{align}
    \label{simple_multitime_EL1}
  &\frac{\partial \Lag_k}{\partial q}-\partial_{t_k}  \frac{\partial \Lag_k}{\partial q_{t_k}}=0\,,\\[1ex]
  \label{simple_multitime_EL2}
  &\frac{\partial \Lag_k}{\partial q_{t_\ell}}=0\,,\qquad \ell\neq k\,,\\[1ex]
  \label{simple_multitime_EL3}
  &\frac{\partial \Lag_k}{\partial q_{t_k}}=\frac{\partial \Lag_\ell}{\partial q_{t_\ell}}\,,\qquad k,\ell=1,\dots,N\,.
\end{align}
Note that \eqref{simple_multitime_EL1} is simply the standard Euler-Lagrange equation for each $\Lag_k$. Condition \eqref{simple_multitime_EL2} states that the Lagrangian coefficient $\Lag_k$ cannot depend on the velocities $q_{t_\ell}$ for $\ell\neq k$. The last condition \eqref{simple_multitime_EL3} requires that the conjugate momentum to $q$ be the same with respect to all times $t_k$. 
The closure relation then stipulates that 
\begin{equation}
\label{closure}
    d\Lag[q]=0~~\Leftrightarrow ~~\partial_{t_k}\Lag_j-\partial_{t_j}\Lag_k=0\,,
\end{equation}
on solutions of \eqref{simple_multitime_EL1}-\eqref{simple_multitime_EL3}.

\subsection{Lie dialgebras and Lax equations}

Here we collect facts from the theory of Lie dialgebras as defined in \cite[Lecture 2]{STS}, see also \cite[Chapter 4]{BBT}. Proofs are omitted for brevity and the reader is referred to \cite{STS,BBT} for details. We emphasise that Lie dialgebras are different from the perhaps more familiar Lie bialgebras appearing in Drinfeld's theory of Poisson-Lie groups. Connections and differences between these two structures are discussed in \cite{STS} and \cite{KS}.

Let $\g$ be a matrix Lie algebra, with matrix Lie group $G$, and $\g^*$ its dual space. We have the usual (co)adjoint actions\footnote{For simplicity, we only work with matrix Lie algebras and corresponding Lie groups.} for all $\xi\in\g^*$, $X,Y\in\g$, $g\in G$,
\begin{align}
 {\rm ad}_X\cdot Y&=[X,Y]\,,\qquad ({\rm ad}^*_X\cdot \xi)(Y)=-\xi({\rm ad}_X\cdot Y)=-\xi([X,Y])\,,\\[1ex]
 \text{Ad}_{g}\cdot X&=g\,X\,g^{-1}\,,\qquad   \text{Ad}^*_g\cdot \xi(X)=\xi(\text{Ad}_{g^{-1}}\cdot X)\,.
\end{align}
The space $\g^*$ can be endowed with the Lie-Poisson bracket defined by
\begin{equation}
\label{def_Lie_Poisson}
    \{f,g\}(\xi)= (\xi,[\nabla f(\xi)\,,\,\nabla g(\xi)])\,,~~f,g\in C^\infty(\g^*)\,,
\end{equation}
where we introduced the convenient notation $(\,~,~)$ for the natural pairing between $\g^*$ and $\g$: $\xi(X)=(\xi,X)$. The gradient $\nabla f(\xi)$ is the element of $\g$ defined from the differential $\delta f(\xi)$ by using the pairing
\begin{equation}
    \delta f(\xi)(\eta)=\lim_{\epsilon\to 0}\frac{f(\xi+\epsilon \eta)-f(\xi)}{\epsilon}=(\eta,\nabla f(\xi))\,.
\end{equation}
\begin{comment}
    Introducing a basis $\{E_\alpha\}$ of $\g$ and the dual basis $\{E^*_\alpha\}$ of $\g^*$ and coordinates functions $\xi_\alpha$ on $\g^*$ we find $\nabla f(\xi)=\frac{\partial f}{\partial \xi_\alpha}E_\alpha$ and the well-known coordinate form of the Lie-Poisson bracket
\begin{equation}
    \{f,g\}(\xi)=C_{\alpha\beta}^\gamma\, \xi_\gamma \,\frac{\partial f}{\partial \xi_\alpha}\frac{\partial g}{\partial \xi_\beta}
\end{equation}
where $C_{\alpha\beta}^\gamma$ are the structure constants of $\g$. 
\end{comment}
The Lie-Poisson bracket is degenerate in general and the $\text{Ad}^*$-invariant functions on $\g^*$ are the Casimir functions. Its symplectic leaves are the coadjoint orbits of $G$ in $\g^*$. The restriction to a coadjoint orbit gives rise to the Lie-Kostant-Kirillov-Souriau symplectic form $\omega_{KK}$. 

Let $R:\g\to\g$ be a linear map. It is a solution of the {\it modified} classical Yang-Baxter equation (mCYBE) if it satisfies
\begin{equation}
\label{mCYBE}
[R(X),R(Y)]-R\left([R(X),Y]+[X,R(Y)]\right)=-[X,Y]\,,~~\forall~X,Y\in\g\,.
\end{equation}
By abuse of language, we will call a solution $R$ of \eqref{mCYBE} a (classical) $r$-matrix,  in relation to the fact that with $R$ one can associate $r\in\g\otimes \g$ (which is what is traditionally called the $r$-matrix) when $\g$ is equipped with a nondegenerate ad-invariant symmetric bilinear form $\langle\,~,~\rangle$ ({\it e.g.} the Killing form when $\g$ is a finite-dimensional semi-simple Lie algebra).
A famous example of an $r$-matrix arises in the case where $\g$ admits a direct sum decomposition (as a vector space) into two Lie subalgebras 
\begin{equation}
	\g=\g_+\oplus\g_-\,.
\end{equation}
Then, $R=P_+-P_-$ is a solution of \eqref{mCYBE}, where $P_\pm$ is the projector on $\g_\pm$ along $\g_\mp$. 

Given a solution $R$ of the mCYBE, one can define on the vector space $\g$ a second Lie bracket
\begin{eqnarray}
\label{def_R_bracket}
	[X,Y]_R=\frac{1}{2}\left([R(X),Y]+[X,R(Y)]\right)\,.
\end{eqnarray}
The corresponding Lie algebra is denoted by $\g_R$. We therefore have an adjoint action of $\g_R$ on itself and a coadjoint action of $\g_R$ on $\g^*$ ($\g$ and $\g_R$, being the same vector space, have the same dual space)
\begin{align}
&	{\rm \text{ad}}^R_X\cdot Y=[X,Y]_R\,,\qquad\forall~X,Y\in\g\,,\\[1ex] 
 &({\rm \text{ad}^*}^R_X\cdot \xi)(Y)=-(\xi\,,\,{\rm \text{ad}}^R_X\cdot Y)=-(\xi\,,\,[X,Y]_R)\,.
\end{align}

The algebraic significance of the mCYBE and of the second Lie bracket $[\,~,~]_R$ is given by the following results which lead to essential factorisation properties underlying integrable systems. The key objects are the maps
\begin{equation}
    R_\pm=\frac{1}{2}\left(R\pm {\rm id}\right)\,.
\end{equation}
\begin{proposition}
	Let $\g_\pm={\rm Im}\,R_\pm$. Then,
	\begin{enumerate}
		\item $R_\pm:\g_R\to\g$ are Lie algebra homomorphisms:
		\begin{equation}
			R_\pm\left([X,Y]_R\right)=\left[R_\pm(X),R_\pm(Y)\right]\,.
		\end{equation} 
  In particular, $\g_\pm\subset\g$ are Lie subalgebras of $\g$.
	
	\item The mapping $i_R:\g_R\to \g_+\oplus\g_-$, $i_R(X)=(R_+(X),R_-(X))$ is a Lie algebra embedding. Thus $\widetilde{\g}_R={\rm Im}\,i_R$ is a Lie subalgebra of $\g_+\oplus \g_-$\,.

 \item The composition of the maps
 \begin{equation}
   i_R:  \g_R ~{\to}~ \g_+\oplus \g_-\,,~~X\mapsto(R_+(X),R_-(X))\,,
 \end{equation}
 followed by
  \begin{equation}
a:\g_+\oplus \g_-~{\to}~\g\,,~~(X_+,X_-)\mapsto X_+-X_-\,,
 \end{equation}
provides a unique decomposition of any element $X\in\g$ as $X=R_+(X)-R_-(X)$.
	\end{enumerate}
\end{proposition}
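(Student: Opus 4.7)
The plan is to prove the three parts in sequence, as each builds on the previous, with all computations pivoting on the tautological identity $R_+ - R_- = \id$ (from $R_\pm = \tfrac{1}{2}(R\pm\id)$) together with the mCYBE \eqref{mCYBE}.

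For Part 1, I would expand $[R_\pm(X), R_\pm(Y)] = \tfrac{1}{4}[R(X)\pm X, R(Y)\pm Y]$ bilinearly and immediately use the mCYBE to substitute for the distinguished $[R(X), R(Y)]$ term. The $\pm[X,Y]$ contributions will cancel against the $-[X,Y]$ generated by the mCYBE, and the four surviving terms reorganise as $\tfrac{1}{2}(R\pm\id)\bigl([R(X),Y]+[X,R(Y)]\bigr)$, which is exactly $R_\pm([X,Y]_R)$ by the definition \eqref{def_R_bracket}. The subalgebra claim for $\g_\pm = {\rm Im}\,R_\pm$ then follows at once, since for any $A,B\in\g$ the bracket $[R_\pm(A), R_\pm(B)] = R_\pm([A,B]_R)$ again lies in ${\rm Im}\,R_\pm$.

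For Part 2, the map $i_R$ is linear by construction, and the componentwise bracket on $\g_+\oplus\g_-$ combined with Part 1 gives $i_R([X,Y]_R) = ([R_+(X),R_+(Y)], [R_-(X),R_-(Y)]) = [i_R(X), i_R(Y)]$, so $i_R$ is a Lie algebra morphism. Injectivity is immediate: if $i_R(X) = 0$ then $R_+(X) = R_-(X) = 0$, hence $X = (R_+ - R_-)(X) = 0$. Consequently $\widetilde{\g}_R = {\rm Im}\,i_R$ is a Lie subalgebra of $\g_+\oplus\g_-$. For Part 3, composing with $a$ gives $a\circ i_R(X) = R_+(X) - R_-(X) = X$ directly from $R_+ - R_- = \id$, which is the asserted decomposition with components $R_\pm(X)\in\g_\pm$; uniqueness is built into the formula, since the $R_\pm$ are specific linear maps assigning the components canonically.

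The only technical step requiring care is the algebraic simplification in Part 1, but even this is engineered to succeed: the mCYBE is precisely the condition that turns $R_\pm$ into Lie algebra morphisms out of $\g_R$. Parts 2 and 3 are then essentially bookkeeping, so I do not anticipate any genuine obstacle.
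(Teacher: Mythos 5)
Your proof is correct and is the standard argument: the paper itself omits the proof of this proposition (deferring to the references it cites), and your computation --- expand $[R_\pm(X),R_\pm(Y)]$ bilinearly, substitute the mCYBE for $[R(X),R(Y)]$ so the $[X,Y]$ terms cancel, and recognise $R_\pm([X,Y]_R)$ in what remains, with Parts 2 and 3 following from Part 1 and $R_+-R_-=\id$ --- is exactly the one found there. The only blemish is a harmless coefficient slip in your prose for Part 1: the surviving terms assemble into $\tfrac{1}{4}(R\pm\id)\bigl([R(X),Y]+[X,R(Y)]\bigr)$, not $\tfrac{1}{2}(\cdots)$, and this is indeed $R_\pm([X,Y]_R)$ because $R_\pm=\tfrac{1}{2}(R\pm\id)$ and $[X,Y]_R=\tfrac{1}{2}\bigl([R(X),Y]+[X,R(Y)]\bigr)$ each contribute a factor of $\tfrac{1}{2}$.
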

\noindent
Note that $R_+-R_-={\rm id}$ and
\begin{equation}
[X,Y]_R=	R_+\left([X,Y]_R\right)-R_-\left([X,Y]_R\right)=\left[R_+(X),R_+(Y)\right]-\left[R_-(X),R_-(Y)\right]\,.
\end{equation}
We can express the actions of $\g_R$ in terms of those of $\g$. For convenience, we write $X_\pm=R_\pm(X)$ for $X\in\g$. Then,
\begin{align}
&{\rm ad}^R_X\cdot Y=\frac{1}{2}{\rm ad}_{R(X)}\cdot Y+\frac{1}{2}{\rm ad}_{X}\cdot R(Y)={\rm ad}_{X_+}\cdot Y_+ -{\rm ad}_{X_-}\cdot Y_-\,,\\
\label{relation_ad_adR}
&{\rm \text{ad}^*}^R_X\cdot \xi=\frac{1}{2}{\rm \text{ad}}^*_{R(X)}\cdot\xi+\frac{1}{2}R^*({\rm \text{ad}}^*_{X}\cdot\xi)=R_+^*({\rm \text{ad}}^*_{X_+}\cdot\xi)-R_-^*({\rm \text{ad}}^*_{X_-}\cdot\xi)\,,
\end{align}
where the adjoint $A^*:\g^*\to\g^*$ of a linear map $A:\g\to\g$ is defined by $(A^*(\xi),X)=(\xi,A(X))$.

The application of this framework to integrable systems hinges on the interplay between the two Lie-Poisson brackets one can define on $\g^*$. Indeed, having a second Lie bracket, we can repeat the definition \eqref{def_Lie_Poisson} to obtain
\begin{equation}
\label{Lie_PB}
    \{f,g\}_R(\xi)= (\xi,[\nabla f(\xi)\,,\,\nabla g(\xi)]_R)\,.
\end{equation}
A similar conclusion holds: the symplectic leaves are coadjoint orbits of $G_R$, the Lie group of $\g_R$, in $\g^*$. The restriction to a coadjoint orbit gives rise to the symplectic form which we denote by $\omega_R$. 
It is the interplay between these two structures that provides integrable systems whose equations of motion take the form of a Lax equation. For this last part, one needs one more ingredient: an $\text{Ad}$-invariant nondegenerate bilinear symmetric form $\langle\,~,~\rangle$ on $\g$. It allows to identify $\g^*$ with $\g$ and the coadjoint actions with the adjoint actions.
Specifically, one has 
\begin{theorem}
The $\textup{Ad}^*\!$-invariant functions on $\g^*$ are in involution with respect to $\{\,~,~\}_R$. The equation of motion 
\begin{equation}
    \frac{d}{dt}L=\{L,H\}_R
\end{equation}
induced by an $\textup{Ad}^*\!$-invariant function $H$ on $\g^*$ takes the following equivalent forms, for an arbitrary $L\in\g^*$,
\begin{equation}
\label{Lax2}
   \frac{d}{dt}L=\textup{ad}^{R*}_{\nabla H(L)}\cdot L=\frac{1}{2}\,\textup{ad}^*_{R\nabla H(L)}\cdot L=\textup{ad}^*_{R_\pm \nabla H(L)}\cdot L\,.
\end{equation}
When there is an $\textup{Ad}$-invariant nondegenerate bilinear form $\langle\,~,~\rangle$ on $\g$ so that we can identify $\g^*$ with $\g$ and $\textup{ad}^*\!$ with $\textup{ad}$, the last equation takes the desired form of a Lax equation for $L\in\g$,
\begin{equation}
  \frac{d}{dt}L=[M_\pm,L]\,,\qquad M_\pm=R_\pm \nabla H(L)\,.
\end{equation}
\end{theorem}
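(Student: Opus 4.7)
The claim has three parts: involution of Ad$^*$-invariant functions under $\{\,,\,\}_R$, the chain of equivalent expressions for $dL/dt$, and the translation into Lax form. The plan is to address them in this order, after recording the infinitesimal form of Ad$^*$-invariance that drives everything: differentiating $H(\textup{Ad}^*_g\,\xi)=H(\xi)$ at $g=\textup{id}$ in the direction of $X\in\g$ gives $(\xi,[\nabla H(\xi),X])=0$ for every $X$, equivalently $\textup{ad}^*_{\nabla H(\xi)}\cdot\xi = 0$.

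\emph{Involution.} For Ad$^*$-invariant $f$ and $g$, I would expand
\[
\{f,g\}_R(\xi) \;=\; \tfrac{1}{2}(\xi,[R\nabla f,\nabla g]) \,+\, \tfrac{1}{2}(\xi,[\nabla f,R\nabla g]).
\]
Using $(\xi,[X,Y])=-(\textup{ad}^*_X\xi)(Y)=(\textup{ad}^*_Y\xi)(X)$, the first term becomes $\tfrac12(\textup{ad}^*_{\nabla g}\xi)(R\nabla f)$ and the second $-\tfrac12(\textup{ad}^*_{\nabla f}\xi)(R\nabla g)$; both vanish by the preliminary identity. Note that the mCYBE is not needed here, only the antisymmetry of $[\,,\,]_R$.

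\emph{Equivalent forms of $dL/dt$.} For an arbitrary $f\in C^\infty(\g^*)$, Hamilton's equations read $\tfrac{d}{dt}f(L)=\{f,H\}_R(L)=(L,[\nabla f(L),\nabla H(L)]_R)$. I would isolate $\nabla f(L)$ in the right slot of $(\cdot,\cdot)$ by swapping the commutator arguments and invoking the adjoint $R^*$ defined by $(R^*\xi,X)=(\xi,RX)$, obtaining
\[
\tfrac{d}{dt}f(L) \;=\; \bigl(\tfrac{1}{2}\textup{ad}^*_{R\nabla H(L)}\!\cdot\! L \,+\, \tfrac{1}{2}R^*\bigl(\textup{ad}^*_{\nabla H(L)}\!\cdot\! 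L\bigr),\;\nabla f(L)\bigr).
\]
As $f$ is arbitrary this yields $dL/dt = \textup{ad}^{R*}_{\nabla H(L)}\cdot L$ by comparison with \eqref{relation_ad_adR}. Next, Ad$^*$-invariance forces $\textup{ad}^*_{\nabla H(L)}L = 0$, which kills the $R^*$ term and leaves $\tfrac{1}{2}\textup{ad}^*_{R\nabla H(L)}L$. For the $R_\pm$ form I write $R=R_++R_-$ and $\textup{id}=R_+-R_-$; applied to $\nabla H(L)$ and combined with $\textup{ad}^*_{\nabla H(L)}L=0$, the identity $\textup{id}=R_+-R_-$ gives $\textup{ad}^*_{R_+\nabla H(L)}L = \textup{ad}^*_{R_-\nabla H(L)}L$, and this common value equals $\tfrac12\textup{ad}^*_{R\nabla H(L)}L$.

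\emph{Lax form and main obstacle.} Under the identification $\g^*\cong\g$ induced by $\langle\,,\,\rangle$, ad-invariance $\langle[X,Y],Z\rangle=\langle X,[Y,Z]\rangle$ shows that if $\xi\leftrightarrow L$ then $\textup{ad}^*_X\xi \leftrightarrow [X,L]=\textup{ad}_X L$; substituting into the third equivalent form above yields $dL/dt=[R_\pm\nabla H(L),L]=[M_\pm,L]$ with $M_\pm=R_\pm\nabla H(L)$, as required. The one non-routine step is the sign/slot bookkeeping that turns $(L,[\nabla f,\nabla H]_R)$ into the combination $\tfrac12\textup{ad}^*_{R\nabla H(L)}+\tfrac12 R^*\textup{ad}^*_{\nabla H(L)}$ matching \eqref{relation_ad_adR} exactly; everything else reduces to the infinitesimal Ad$^*$-invariance identity and the elementary algebra of $R_\pm$.
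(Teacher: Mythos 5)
Your proposal is correct and follows essentially the same route as the paper: the infinitesimal form of $\textup{Ad}^*$-invariance ($\textup{ad}^*_{\nabla H(\xi)}\cdot\xi=0$), the expansion of the $R$-bracket to get involutivity, the identification of $\{f,H\}_R(L)$ with $(\textup{ad}^{R*}_{\nabla H(L)}\cdot L,\nabla f(L))$ via \eqref{relation_ad_adR}, and the use of invariance to collapse this to $\tfrac12\textup{ad}^*_{R\nabla H(L)}\cdot L=\textup{ad}^*_{R_\pm\nabla H(L)}\cdot L$. Your side remark that the mCYBE is not needed for the involutivity computation (only for $[\,~,~]_R$ to satisfy Jacobi) is accurate and consistent with the paper's argument.
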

The proof can be found for instance in \cite{BBT} and we only elaborate on certain points which will be useful for our purposes below. The crucial point is to exploit the $\text{Ad}^*$-invariance of the function $H$ defining the time flow. The latter means 
 that the following property holds
\begin{equation} 
\label{ad_invariance_property}
    \text{ad}^*_{\nabla H(\xi)} \cdot\xi = 0 ~~\Leftrightarrow ~~ (\xi\,,\, [\nabla H(\xi),X])=0\qquad \forall \xi \in \mathfrak{g}^*\,,~~\forall X\in\g\,.  
\end{equation}
Thus, for any two $\text{Ad}^*$-invariant functions $H_1$ and $H_2$,
\begin{align}
\label{involutivity}
    \{H_1,H_2\}_R(\xi)&= (\xi,[\nabla H_1(\xi)\,,\,\nabla H_2(\xi)]_R)\\
    &=\frac{1}{2}(\xi,[R\nabla H_1(\xi)\,,\,\nabla H_2(\xi)]+[\nabla H_1(\xi)\,,\,R\nabla H_2(\xi)])=0\,.
\end{align}
For any function $f$ on $\g^*$, the time evolution associated with the $\text{Ad}^*$-invariant $H$ with respect to the Poisson bracket $\{\,~,~\}_R$ is defined by
\begin{align*}
    \frac{d}{dt}f(L)=\{f,H\}_R(L) \,, 
\end{align*}
\ie
\begin{align*}
    \left(\frac{d}{dt}L\,,\, \nabla f(L)  \right)&= (L,[\nabla f(L)\,,\,\nabla H(L)]_R)=-\frac{1}{2}\,(L,[R\nabla H(L)\,,\,\nabla f(L)])\\
    &=(\text{ad}^{R*}_{ \nabla H(L)}\cdot L\,,\,\nabla f(L))=\frac{1}{2}(\text{ad}^{*}_{R\nabla H(L)}\cdot L\,,\,\nabla f(L))\,.
\end{align*}
Finally, in view of \eqref{def_R_bracket} and \eqref{ad_invariance_property}, we have 
\begin{equation}
    \text{ad}^{*}_{R\nabla H(L)}\cdot L=2\,\text{ad}^{*}_{R_\pm \nabla H(L)}\cdot L\,,
\end{equation}
thus establishing the various equivalent forms of the equations in \eqref{Lax2} (by restricting $f$ to be any of the coordinate functions on $\g^*$).

The involutivity property \eqref{involutivity} ensures that we can define compatible time flows associated with a family of $\text{Ad}^*$-invariant Hamiltonian functions $H_k$, $k=1,\dots,N$.
If one can supply enough such independent functions, or work on a coadjoint orbit of low-enough dimension, one obtains an integrable system described by an integrable hierarchy of equations in Lax form (again using the identification provided by $\langle\,~,~\rangle$)
\begin{equation}
\label{Lax_system}
    \partial_{t_k}L=[R_\pm \nabla H_k(L),L]\,,~~ k=1,\dots,N\,.
\end{equation}
The typical example of an invariant function $H_k$ is given by $H_k=\frac{1}{k+1}\Tr(L^k)$.

For our purposes, the Lie groups associated with $\g$ and $\g_R$ will be important. We introduce $G$ and $G_R$ as the (connected, simply connected) Lie groups defined for $\g$ and $\g_R$ respectively. For simplicity, we only think of matrix groups in this paper. Only in special circumstances are $G$ and $G_R$ diffeomorphic. In general, this is only true in a neighbourhood of the identity where the crucial difference between the two groups lies in their multiplications induced by $[\,~,~]$ and $[\,~,~]_R$ respectively. The homomorphisms $R_\pm$ give rise to Lie group homomorphisms (which we denote by the same symbols) and we obtain a factorisation at the group level. With $g=\text{e}^X$, $X\in\g$, we have 
\begin{equation}
    R_\pm\, g=\text{e}^{R_\pm X}\,.
\end{equation}
Specifically, let $G_\pm=R_\pm(G_R)$ be the subgroups of $G$ corresponding to $\g_\pm$. The composition of the maps 
 \begin{equation}
   i_R:  G_R\,\to\,G_+ \times G_-\,,~~ g\mapsto(R_+(g),R_-(g))\,,
 \end{equation}
 followed by
  \begin{equation}
m:G_+\times G_- \,\to\,G\,,~~ (g_+,g_-)\mapsto g_+ \, g_-^{-1}\,,
 \end{equation}
allows us to factorise {\it uniquely} an arbitrary element $g\in G$ (sufficiently close to the identity) as 
\begin{equation}
    g=g_+ \, g_-^{-1}\,,~~(g_+,g_-)\in \widetilde{G}_R={\rm Im}\,i_R\,.
\end{equation}
An element $g\in G_R$ can be identified with its image $(g_+,g_-)\in \widetilde{G}_R\subseteq G_+\times G_-$ and the multiplication $\cdot_R$ in $G_R$ is most easily visualised using the homomorphism property
\begin{equation}
    i_R( g\cdot_R h)=i_R( g)*i_R( h)=(g_+\,h_+\,,\,g_-\,h_-)
\end{equation}
where $*$ is the direct product group structure of $G_+\times G_-$. This is usually shortened to
\begin{equation}
  g\cdot_R h= (g_+\,h_+\,,\,g_-\,h_-)\,.
\end{equation}
The group $G_R$ acts on $\g_R$ by the adjoint action and on $\g^*$ via the coadjoint action
\begin{align}
     &\text{Ad}^R_g\cdot X =g\cdot_R X\cdot_R g^{-1}\,,\qquad  \forall \, X\in\g_R\,,~~g\in G_R\,,\\[1ex]
    &\text{Ad}^{R*}_g\cdot \xi(X) =(\xi\,,\,\text{Ad}^R_{g^{-1}}\cdot X)\,,\qquad  \forall \,g\in G_R\,,\xi\in\g^*\,,~~X\in\g_R\,.
\end{align}
\begin{remark}
\label{rem_assoc}
    When writing using the suggestive notation $g\cdot_R X\cdot_R g^{-1}$ for the adjoint action, we tacitly view $\cdot_R$ as an associative product on the matrix Lie algebra and its Lie group. Strictly speaking, this is not possible if $R$ is a solution of \eqref{mCYBE}. It becomes possible for instance if $\g$ is an associative algebra and we require $R$ to be a solution of the associative Yang-Baxter equation $R(X)\, R(Y)-R(R(X)\, Y+ X\, R(Y))+ X\, Y=0$, see \cite{STS2}. This implies that $X\cdot_R Y=\frac{1}{2}\left(R(X)\, Y+X\, R(Y) \right)$ defines a second associative product on $\g$ and allows us to view $[X,Y]_R$ as the commutator $X\cdot_R Y-Y\cdot_R X$, in complete analogy with $[X,Y]=X\, Y-Y\, X$. We will assume that $\cdot_R$ is such an associative product in the rest of this paper and use the consequences, {\it e.g.} $[X,Y]_R=X\cdot_R Y-Y\cdot_R X$.
\end{remark}
The following relations are most useful in the practical calculations of the examples discussed below. With $g_\pm=R_\pm\,g$, $X_\pm=R_\pm\,X$, $g\in G_R$, $X\in \g_R$,
\begin{align}
   \text{Ad}^R_g\cdot X&=g_+\, X_+\,g_+^{-1}-g_-\, X_-\,g_-^{-1}\,,\\[1ex]
\label{Rcoadjoint_action} 
\text{Ad}^{R*}_g\cdot \xi&=R_+^*(\text{Ad}^*_{g_+} \xi)-R_-^*(\text{Ad}^*_{g_-}\xi)\,,~~  \forall \xi\in\g^*\,.
\end{align}
Thus, the dual space $\g^*$ hosts two coadjoint actions of $G$ and $G_R$, as it does with the two coadjoint actions of the Lie algebras $\g$ and $\g_R$. 
The last main result of this framework is known as the {\it factorisation theorem}, see {\it e.g.} \cite{STS,BBT,AVV}.
\begin{theorem}
 Consider the system of compatible equations with the given initial condition 
\begin{equation}
\label{system_Lax}
 \partial_{t_k}L=\textup{ad}^*_{R_\pm \nabla H_k(L)}\cdot L\,,\qquad k=1,\dots,N\,,
\qquad L(0,\dots,0)=L_0\in\g^*\,.
\end{equation} 
Denote $(t_1,\dots,t_N)={\bf t}$ for conciseness. Let $g_\pm({\bf t})$ be the smooth curves in $G_\pm$ which solves the factorisation problem
    \begin{equation}
    \label{factorisation}
     \text{e}^{-\sum_{k=1}^N t_k\nabla H_k(L_0)}=g_+({\bf t})^{-1}\,g_-({\bf t})\,,\qquad g_\pm({\bf 0})=e\,.
    \end{equation}
    Then, the solution to the initial-value problem \eqref{system_Lax} is given by
    \begin{equation}
    \label{solution_L}
        L({\bf t})=\textup{Ad}^*_{g_+({\bf t})}\cdot L_0=\textup{Ad}^*_{g_-({\bf t})}\cdot L_0\,,
    \end{equation}
    and $g_{\pm}({\bf t})$ satisfy 
     \begin{equation}
          \label{split}
  \partial_{t_k} g_\pm({\bf t}) =R_\pm \nabla H_k(L({\bf t}))\,g_\pm({\bf t})\,.
          \end{equation}
\end{theorem}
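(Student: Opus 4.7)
The plan is to deduce both claims of the theorem from the factorisation \eqref{factorisation} in three stages: first show that the two candidate expressions for $L(\mathbf{t})$ in \eqref{solution_L} coincide, then extract the ODEs \eqref{split} satisfied by $g_\pm$, and finally differentiate to recover \eqref{system_Lax}. The key kinematic fact is that $\chi(\mathbf{t}) := e^{-\sum_k t_k \nabla H_k(L_0)}$ stabilises $L_0$ coadjointly: by $\textup{Ad}^*\!$-invariance of each $H_k$, property \eqref{ad_invariance_property} gives $\textup{ad}^*_{\nabla H_k(L_0)} \cdot L_0 = 0$, hence $\textup{ad}^*_X \cdot L_0 = 0$ for $X := -\sum_k t_k \nabla H_k(L_0)$, and therefore $\textup{Ad}^*_{\chi(\mathbf{t})} L_0 = e^{\textup{ad}^*_X} L_0 = L_0$. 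Combined with $g_- = g_+ \chi$ and the composition law $\textup{Ad}^*_{gh} = \textup{Ad}^*_g \textup{Ad}^*_h$, this at once gives $\textup{Ad}^*_{g_-} L_0 = \textup{Ad}^*_{g_+} L_0$, proving \eqref{solution_L}. As a by-product, the equivariance relation $\textup{Ad}_g \nabla H_k(\xi) = \nabla H_k(\textup{Ad}^*_g \xi)$ yields $\textup{Ad}_{g_\pm} \nabla H_k(L_0) = \nabla H_k(L(\mathbf{t}))$, which will be used throughout.

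Next I would introduce the right-invariant velocities $M_k^\pm := (\partial_{t_k} g_\pm)\, g_\pm^{-1}$. These lie in $\g_\pm$; moreover, since $R_\pm : G_R \to G_\pm$ are Lie group homomorphisms intertwining the exponentials, the pair $(M_k^+, M_k^-)$ belongs to $\widetilde{\g}_R$. Differentiating $g_+^{-1} g_- = \chi$ yields $\chi^{-1} \partial_{t_k} \chi = \textup{Ad}_{g_-^{-1}}(M_k^- - M_k^+)$. On the other hand, Duhamel's formula gives $\chi^{-1} \partial_{t_k} \chi = -\int_0^1 \textup{Ad}_{e^{-(1-s)X}} \nabla H_k(L_0)\, ds$, and the integrand collapses to $\nabla H_k(L_0)$ because of the equivariance together with the stabilisation $\textup{Ad}^*_{e^{\sigma X}} L_0 = L_0$ for all $\sigma$. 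Combining, one obtains $M_k^+ - M_k^- = \textup{Ad}_{g_-} \nabla H_k(L_0) = \nabla H_k(L)$. Uniqueness of the decomposition $Y = R_+(Y) - R_-(Y)$ inside $\widetilde{\g}_R$ (item 3 of the preceding proposition) then forces $M_k^\pm = R_\pm \nabla H_k(L)$, which is exactly \eqref{split}.

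Finally, differentiating $L(\mathbf{t}) = \textup{Ad}^*_{g_\pm(\mathbf{t})} L_0$ and inserting \eqref{split} delivers $\partial_{t_k} L = \textup{ad}^*_{M_k^\pm} \cdot L = \textup{ad}^*_{R_\pm \nabla H_k(L)} \cdot L$, which is \eqref{system_Lax}. Compatibility of the $t_k$-flows is automatic, as $g_\pm$ are defined simultaneously in all parameters by \eqref{factorisation}, and the initial condition $L(\mathbf{0}) = L_0$ is immediate from $g_\pm(\mathbf{0}) = e$. I expect the main technical hurdle to be the Duhamel step, since the various $\nabla H_j(L_0)$ need not commute in $\g$; what saves the day is that $\textup{Ad}_{e^{\sigma X}}$ fixes each $\nabla H_k(L_0)$ as a consequence of the $\textup{Ad}^*\!$-invariance at $L_0$, so the integrand is actually constant. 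Making this collapse precise, rather than appealing to a naive ``exponentials commute'' intuition, is the single genuine point of care.
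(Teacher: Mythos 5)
The paper does not actually prove this statement: it is quoted as the known \emph{factorisation theorem} with a pointer to the references (Semenov-Tian-Shansky's lectures, Babelon--Bernard--Talon, Adler--van Moerbeke--Vanhaecke), so there is no internal proof to compare against. Your argument is correct and is essentially the standard proof from those sources: (i) $\mathrm{Ad}^*_{\chi(\mathbf{t})}L_0=L_0$ because each $\nabla H_k(L_0)$ stabilises $L_0$ by \eqref{ad_invariance_property}, which gives the equality of the two expressions in \eqref{solution_L}; (ii) logarithmic differentiation of $g_+^{-1}g_-=\chi$ plus the collapse of the Duhamel integrand identifies $M_k^+-M_k^-=\nabla H_k(L)$, and the unique decomposition $Y=R_+(Y)-R_-(Y)$ then yields \eqref{split}; (iii) differentiating $\mathrm{Ad}^*_{g_\pm}L_0$ recovers \eqref{system_Lax}. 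Your handling of the Duhamel step is the genuinely careful point and you treat it correctly -- the integrand is constant not because the $\nabla H_j(L_0)$ commute but because $\mathrm{Ad}_{e^{\sigma X}}$ fixes each of them via the equivariance $\mathrm{Ad}_g\nabla H_k(\xi)=\nabla H_k(\mathrm{Ad}^*_g\xi)$ together with the stabilisation of $L_0$. The one thing you leave implicit is that the uniqueness step in (ii) requires the pair $(g_+,g_-)$ to lie in $\widetilde{G}_R=\mathrm{Im}\,i_R$, not merely in $G_+\times G_-$: when $\g_+\cap\g_-\neq 0$ (as in the Cartan-decomposition Toda example of Section \ref{Toda_pq}, where both $\g_\pm$ contain $\h$) the pair $(M_k^+,M_k^-)$ only determines a unique $Y$ with $M_k^\pm=R_\pm(Y)$ if it lies in $\widetilde{\g}_R$. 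This is exactly what the paper's formulation of the factorisation problem via $i_R$ followed by $m$ guarantees, so it is a matter of stating the hypothesis rather than a gap in the logic.
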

This result shows that the solution lies at the intersection of coadjoint orbits of $G$ and $G_R$. Combined with the fact that the coadjoint orbits provide the natural symplectic manifolds associated with the corresponding Lie-Poisson bracket, this means that the natural arena to define our phase space, \ie where $L$ lives, is a coadjoint orbit of $G_R$ in $\g^*$
\begin{equation} \label{eq:coadj_orbit}
    {\cal O}_\Lambda=\{\text{Ad}^{R*}_{\varphi}\cdot \Lambda;\varphi\in G_R\}\,,~~\text{for some}~~\Lambda\in\g^*\,. 
\end{equation}
In the Lagrangian multiform theory, the prevalent idea is that one should think of an integrable system as an integrable hierarchy, in a way completely similar to the Hamiltonian integrable hierarchy we have just recalled. This leads us to work with the space where $(t_1,\dots,t_N)={\bf t}$ lives: the {\it multi-time} space. Since the flows commute, the multi-time is simply (a subspace of) $\mathbb{R}^{N_1}\times (S^1)^{N_2}$, $N_1+N_2=N$ (in general we should allow for the possibility of having periodicity in some of the independent variables $t_1,\dots,t_N$). The generalisation to the case where the vector fields giving the flows no longer commute but still form a Lie algebra was considered in \cite{CNSV} and leads to the consideration of the multi-time space being a (non-abelian) Lie group.

\subsection{Extension to loop algebras and some special cases}\label{subs:special_cases_AKS_gaudin}

The essential results of the Lie dialgebra construction discussed above extend to the infinite-dimensional setting, {\it e.g.} the case of loop algebras\footnote{There are several subtleties related to duals in infinite dimensions and completions which we do not touch, keeping a less rigorous but more approachable exposition.}. The latter is relevant when one needs Lax matrices with spectral parameters. This is typically the case for integrable field theories but it can also be required for some finite-dimensional systems such as the closed Toda chain or Gaudin models. We will present the extension of the Lie dialgebra construction to this infinite-dimensional setting via the Gaudin example in Section \ref{Gaudin} and we refer the reader to \cite[Lecture~3]{STS} for more details. 

There are special cases of the Lie dialgebra framework that may be more familiar to the reader and will play a role in our examples below. They both arise when $\g$ admits a direct sum decomposition (as a vector space) into two Lie subalgebras
\begin{equation}
	\g=\g_+\oplus\g_-\,,
\end{equation}
and we take $R=P_+-P_-$, where $P_\pm$ is the projector on $\g_\pm$ along $\g_\mp$. The decomposition of $\g$ induces the decomposition 
\begin{equation}
	\g^*=\g_+^*\oplus\g_-^*\,.
\end{equation}
Using a nondegenerate ad-invariant bilinear form on $\g$, we can identify $\g_\pm^*$ with $\g_\mp^\perp$.

The first special case, which historically is at the origin of the so-called Adler-Kostant-Symes scheme \cite{Ad,Kos,Sy} is obtained as follows. We fix $\Lambda$ to be in $\g_-^*$ and consider the coadjoint orbit of elements $L=\text{Ad}^{R*}_{\varphi} \cdot \Lambda$. As a result, only the subgroup $G_-$ in $G_R\simeq G_+\times G_-$ plays a role since $L=\text{Ad}^{R*}_{\varphi} \cdot \Lambda=-R_-^*(\text{Ad}^*_{\varphi_-}\cdot \Lambda)$ and the coadjoint orbit ${\cal O}_\Lambda$ lies in $\g_-^*$. This is the historic setup which can be used to formulate the open Toda chain in Flaschka coordinates. $R$ is {\it not} skew-symmetric in this case. We will present this example in Section \ref{Flaschka} where details on our Lagrangian multiform for this model will be given.

The second special case is a further specialisation where $\g_\pm$ are isotropic with respect to $\langle\,~,~\rangle$, meaning
$$\langle\g_\pm,\g_\pm\rangle=0$$
and implying that $\g_\pm^*$ can be identified with $\g_\mp=\g_\mp^\perp$. This case can arise with loop algebras and will be discussed in Section \ref{Gaudin} in relation to the Gaudin model. In this case, $R$ is skew-symmetric, \ie
$$\langle RX,Y\rangle=-\langle X,RY\rangle\,,~~\forall~X,Y\in\g\,.$$

Note that we will also illustrate the case where $R$ is not defined from a decomposition into two subalgebras but rather from a decomposition into nilpotent and Cartan subalgebras. This different setup is accommodated without problems into Lie dialgebras. Interestingly, it can also be used to describe the same open Toda chain as in the AKS scheme and this will be illustrated in Section \ref{Toda_pq}. The underlying algebraic structures are very different, though. In particular, $R$ is skew-symmetric in this case while it is not in the AKS formulation, showing that the same Toda chain can arise from two distinct constructions.

\section{Lagrangian multiform on a coadjoint orbit}\label{gen_res}

\subsection{The Lagrangian multiform and its properties}\label{gen_lm}

Recalling our comment about the coadjoint orbits of $G_R$ in $\g^*$ being the natural arena for an integrable hierarchy, let us introduce the following Lagrangian $1$-form
\begin{equation}
\label{our_Lag}
  \Lag[\varphi] = \sum_{k=1}^N \, \Lag_k \, dt_k ={\cal K}[\varphi]-{\cal H}[\varphi]
\end{equation}
with kinetic part
\begin{equation}
\label{kin_part}
  {\cal K}[\varphi] =  \sum_{k=1}^N \, \left(\, L\,,\,\partial_{t_k}\varphi \cdot_R \varphi^{-1}  \,\right) \, dt_k \,,~~   L = \text{Ad}^{R*}_{\varphi} \cdot \Lambda\,,~~\varphi \in G_R\,, 
\end{equation}
and potential part
\begin{equation}
\label{pot_part}
{\cal H}[\varphi]=\sum_{k=1}^N \, H_k(L) \, dt_k \,. 
\end{equation}
 The field $\varphi\in G_R$ contains the dynamical degrees of freedom and, as we will see, the Euler-Lagrange equation will take a natural form when expressed in terms of $L = \text{Ad}^{R*}_{\varphi} \cdot \Lambda$. $\Lambda$ is a fixed non-dynamical element of $\g^*$ which defines ${\cal O}_\Lambda$, the phase space of the model. Each Lagrangian $\Lag_k$ in the Lagrangian multiform has a structure comparable to the familiar Lagrangian $p\dot{q}-H$ in classical mechanics.
The potential part is expressed in terms of $\text{Ad}^*$-invariant functions $H_k \in C^{\infty}(\mathfrak{g}^*)$ and we suppose we have $N$ of them\footnote{At this stage, we do not necessarily have that $N$ is exactly half of the dimension of ${\cal O}_\Lambda$. As in the AKS scheme, this needs to be addressed in specific cases by choosing a coadjoint orbit of appropriate dimension to ensure Liouville integrability. We will not worry about this for now as our construction follows through anyway.}.
\begin{remark}
We emphasised that one important ingredient in producing equations of motion in Lax form from the coadjoint orbit construction is to use an $\text{Ad}$-invariant nondegenerate bilinear symmetric form $\langle\,~,~\rangle$ on $\g$ to identify $\g^*$ with $\g$ and the coadjoint action with the adjoint action. The reader could therefore wonder why we have written our Lagrangian multiform using the pairing $(\,~,~)$, an element $\Lambda\in\g^*$ and functions $H_k$ on $\g^*$. The point is that we found that it was less confusing to do so when deriving results in general and in examples, in order to identify correctly the subalgebras involved in the decomposition of $\g$ and $\g^*$. However, we cannot stress enough that ultimately we always use the bilinear form $\langle\,~,~\rangle$ to make all the identifications and indeed obtain equations in Lax form, whether this is clearly mentioned or not. Hopefully, this understanding will make the exposition easier to follow.        
\end{remark}        
We can now formulate our first main result.
\begin{theorem}
\label{Th_multi_EL}
The Lagrangian $1$-form \eqref{our_Lag} satisfies the corner equations \eqref{simple_multitime_EL2}-\eqref{simple_multitime_EL3} of the multi-time Euler-Lagrange equations. The standard Euler-Lagrange equations \eqref{simple_multitime_EL1} associated with the Lagrangian coefficients $\Lag_k$ take the form of compatible Lax equations
        \begin{equation}
\label{EL_Lax}
    \partial_{t_k}L=[R_\pm \nabla H_k(L),L]\,,\qquad k=1,\dots,N\,.
\end{equation}
The closure relation holds: on solutions of \eqref{EL_Lax} we have 
$$ \partial_{t_k}\Lag_j-\partial_{t_j}\Lag_k=0\,,\qquad j,k=1,\dots,N\,.$$
\end{theorem}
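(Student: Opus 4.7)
The plan is to address the three claims sequentially: verify the corner equations by inspection of the functional dependence of $\Lag_k$, derive the standard Euler--Lagrange equations through a right-trivialised variation on $G_R$, and finally reduce the closure relation to the involutivity of $\text{Ad}^*$-invariants already built into the Lie-dialgebra theorem recalled above. The corner equations \eqref{simple_multitime_EL2}--\eqref{simple_multitime_EL3} are immediate: since $L=\text{Ad}^{R*}_\varphi\cdot\Lambda$ depends on $\varphi$ alone and the kinetic part of $\Lag_k$ is linear in $\partial_{t_k}\varphi$ through the single factor $\chi_k:=\partial_{t_k}\varphi\cdot_R\varphi^{-1}$, no other velocity appears in $\Lag_k$, which is \eqref{simple_multitime_EL2}; and the coefficient of $\partial_{t_k}\varphi$ depends only on $L$ and $\varphi^{-1}$ with no explicit $k$-dependence, so the conjugate momentum $\partial \Lag_k / \partial(\partial_{t_k}\varphi)$ is the same covector for every $k$, which is \eqref{simple_multitime_EL3}.

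\textbf{Lax form.} I would use the right-trivialised variation $\delta\varphi=\eta\cdot_R\varphi$ with arbitrary $\eta\in\g$. A direct calculation gives $\delta L=\text{ad}^{R*}_\eta\cdot L$ and the $R$-Maurer--Cartan formula $\delta\chi_k=\partial_{t_k}\eta+[\eta,\chi_k]_R$. Adding the two contributions to the kinetic term, the $\pm(L,[\eta,\chi_k]_R)$ pieces cancel by the very definition of $\text{ad}^{R*}$, leaving the clean identity $\delta(L,\chi_k)=(L,\partial_{t_k}\eta)$. For the potential I would compute $\delta H_k(L)=-(L,[\eta,\nabla H_k(L)]_R)$, expand the $R$-bracket, and use the $\text{Ad}^*$-invariance identity \eqref{ad_invariance_property} to kill the $[R\eta,\nabla H_k(L)]$ contribution; the result is $\delta H_k(L)=-(\text{ad}^*_{R_\pm\nabla H_k(L)}\cdot L,\eta)$. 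Integration by parts in $t_k$, together with the identification $\g^*\cong\g$ via $\langle\,~,~\rangle$, then turns the stationarity condition for arbitrary $\eta$ into precisely \eqref{EL_Lax}. Compatibility of the resulting flows for different $k$ is the statement $\{H_j,H_k\}_R=0$, already established in \eqref{involutivity} for $\text{Ad}^*$-invariant Hamiltonians.

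\textbf{Closure.} Equating $\partial_{t_k}L=\text{ad}^{R*}_{\chi_k}\cdot L$ (a consequence of $L=\text{Ad}^{R*}_\varphi\cdot\Lambda$) with \eqref{EL_Lax} produces, on-shell, the stabiliser identity $\text{ad}^{R*}_{\chi_k-\nabla H_k(L)}\cdot L=0$. I would then expand
\[
\partial_{t_j}\Lag_k-\partial_{t_k}\Lag_j=(\partial_{t_j}L,\chi_k)-(\partial_{t_k}L,\chi_j)+(L,\partial_{t_j}\chi_k-\partial_{t_k}\chi_j)-\partial_{t_j}H_k(L)+\partial_{t_k}H_j(L).
\]
On-shell the last two terms vanish by the same $\text{Ad}^*$-invariance argument that underlies \eqref{involutivity}. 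The $R$-Maurer--Cartan identity $\partial_{t_j}\chi_k-\partial_{t_k}\chi_j=[\chi_j,\chi_k]_R$ (from equating $\partial_{t_j}\partial_{t_k}\varphi=\partial_{t_k}\partial_{t_j}\varphi$), together with one use of \eqref{EL_Lax} in the first two terms, brings everything into the shape $(L,[\,\cdot\,,\,\cdot\,]_R)$, where the stabiliser identity legitimises replacing each $\chi_i$ by $\nabla H_i(L)$. The three surviving contributions then collapse to a single multiple of $(L,[\nabla H_k(L),\nabla H_j(L)]_R)=\{H_k,H_j\}_R(L)$, which vanishes by \eqref{involutivity}.

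\textbf{Main difficulty.} The delicate point is the closure step: the Euler--Lagrange equations pin down $\chi_k$ only \emph{modulo} the $\text{ad}^{R*}$-stabiliser of $L$, so one cannot substitute $\chi_k=\nabla H_k(L)$ freely. The computation must be organised so that every occurrence of $\chi_k$ or $\chi_j$ already sits inside a pairing of the form $(L,[\,\cdot\,,\,\cdot\,]_R)$, where the stabiliser identity applies. Once that bookkeeping is handled honestly, closure reduces to a single application of the $r$-matrix involutivity recalled in \eqref{involutivity}; everything else is either right-invariant calculus on $G_R$ or a direct appeal to the Lie-dialgebra results of Section~\ref{background}.
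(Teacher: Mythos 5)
Your proposal is correct and follows essentially the same route as the paper: a right-trivialised variation on $G_R$, $\text{Ad}^*$-invariance of the $H_k$ to handle the potential part, and on-shell manipulation of the kinetic terms for closure. The only substantive organisational difference is in the closure step: the paper shows each of the three kinetic contributions vanishes separately by reducing each to $\pm\,\partial_{t}H(L)=0$, whereas you collapse all three into the single pairing $-\left(L,[\chi_j,\chi_k]_R\right)$ and then apply the stabiliser identity $\text{ad}^{R*}_{\chi_i-\nabla H_i(L)}\cdot L=0$ twice to land on $\{H_k,H_j\}_R(L)=0$; both work, and your version makes explicit the ``modulo stabiliser'' subtlety that the paper handles only implicitly by mixing the kinematic identity $\partial_{t_k}L=\text{ad}^{R*}_{\chi_k}\cdot L$ with its on-shell form. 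Your appeal to involutivity for the compatibility of the flows is also legitimate, though the paper establishes it by a direct computation of $[\partial_{t_k},\partial_{t_j}]L$ using the mCYBE and the Jacobi identity rather than by invoking the general fact that Hamiltonian flows of functions in involution commute.
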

\begin{proof} It is clear that each $\Lag_k$ does not depend on $\partial_{t_\ell} \varphi$ for $\ell\neq k$ so the corner equation \eqref{simple_multitime_EL2} is satisfied. To see that \eqref{simple_multitime_EL3} holds, it is convenient to introduce local coordinates $\phi_\alpha$, $\alpha=1,\dots,M$, on the group $G_R$. The only source of dependence on velocities is in the kinetic term of $\Lag_k$. Now 
\begin{equation} \label{kinetic_in_coords}
\begin{split}
  \left(\, \text{Ad}^{R*}_{\varphi} \cdot \Lambda\,,\,\partial_{t_k}\varphi \cdot_R \varphi^{-1}  \,\right)
  &=\left(\,  \Lambda\,,\,\text{Ad}^{R}_{\varphi^{-1}}\cdot \left(\partial_{t_k}\varphi \cdot_R \varphi^{-1}\right)\right) \\[1ex]
&=  \left(\,  \Lambda\,,\,\varphi^{-1} \cdot_R  \partial_{t_k}\varphi \right) \\[1ex]  
&= \sum_{\alpha=1}^M\left(\,  \Lambda\,,\,\varphi^{-1} \cdot_R  \frac{\partial \varphi}{\partial \phi_\alpha}    \,\right)\partial_{t_k}\phi_\alpha \equiv \sum_{\alpha=1}^M  \pi_\alpha\, \partial_{t_k}\phi_\alpha 
\end{split} 
\end{equation}
where we have introduced the momentum
\begin{equation}
    \pi_\alpha=\left(\,  \Lambda\,,\,\varphi^{-1} \cdot_R  \frac{\partial \varphi}{\partial \phi_\alpha}    \,\right)
\end{equation}
conjugate to the field $\phi_\alpha$. Thus, 
$$\frac{\partial \Lag_k}{\partial \left(\partial_{t_k}\phi_\alpha\right)}=\pi_\alpha$$
is independent of $k$.
The remainder of the multi-time Euler Lagrange equations consists of the standard Euler-Lagrange equations for each $\Lag_k$. We compute 
\begin{equation}
\begin{split}
    \delta \Lag_k  &= \left(\, \delta L\,,\,\partial_{t_k}\varphi \cdot_R \varphi^{-1}  \,\right) +  \left(\, L\,,\,\delta\,(\partial_{t_k}\varphi \cdot_R \varphi^{-1})  \,\right)-\delta H_k(L)  \,,
\end{split}
\end{equation}
with\footnote{More rigorously, the notation $\delta L$ means the tangent vector to ${\cal O}_\Lambda$ at the point $L$ induced by the element $X\in\g_R$ which we write more suggestively as $\delta \varphi\cdot_R\varphi^{-1}$. The latter notation is closer to the more familiar one in variational calculus using matrix-valued fields.}
\begin{equation}
    \delta L=\text{ad}^{R*}_{\delta \varphi\cdot_R\varphi^{-1}}\cdot L
\end{equation}
and
\begin{align*}
\delta H_k(L)&=\left(\delta L,\nabla H_k(L) \right) =-\left( L\,,\,\left[ \delta \varphi\cdot_R\varphi^{-1},\nabla H_k(L)\right]_R \right)\\[1ex]
&=\frac{1}{2}\left( L\,,\,\left[R\nabla  H_k(L)\,,\, \delta \varphi\cdot_R\varphi^{-1}\right] \right)=-\frac{1}{2}\left(\text{ad}^*_{R\nabla  H_k(L)}\cdot L\,,\,\delta \varphi\cdot_R\varphi^{-1} \right)\,.
\end{align*}
So, 
\begin{align*}
    \delta \Lag_k &= \left(\, \text{ad}^{R*}_{\delta \varphi\cdot_R\varphi^{-1}}\cdot L\,,\,\partial_{t_k}\varphi \cdot_R \varphi^{-1}  \,\right) +  \left(\, L\,,\,\delta\,(\partial_{t_k}\varphi) \cdot_R \varphi^{-1}  \,\right)\\[1ex] 
    &~~- \left(\, L\,,\,\partial_{t_k}\varphi \cdot_R \varphi^{-1} \cdot_R \delta \varphi\cdot_R \varphi^{-1}  \,\right)+\frac{1}{2}\left(\text{ad}^*_{RdH_k(L)}\cdot L\,,\,\delta \varphi\cdot_R\varphi^{-1} \right) \\[1ex]
    &= \left(\, \text{ad}^{R*}_{\delta \varphi\cdot_R\varphi^{-1}}\cdot L\,,\,\partial_{t_k}\varphi \cdot_R \varphi^{-1}  \,\right) - \left(\, \partial_{t_k}L\,,\,\delta\,\varphi \cdot_R \varphi^{-1} \,\right) + \left(\, L\,,\,\delta \varphi\cdot_R \varphi^{-1} \cdot_R \partial_{t_k}\varphi \cdot_R \varphi^{-1}  \,\right)\\[1ex]
    &~~ + \partial_{t_k}\left(\, L\,,\,\delta\,\varphi \cdot_R \varphi^{-1}  \,\right)- \left(\, L\,,\,\partial_{t_k}\varphi \cdot_R \varphi^{-1} \cdot_R \delta \varphi\cdot_R \varphi^{-1}  \,\right)+\frac{1}{2}\left(\text{ad}^*_{R\nabla H_k(L)}\cdot L\,,\,\delta \varphi\cdot_R\varphi^{-1} \right)\\[1ex]
    &= \left(\, \text{ad}^{R*}_{\delta \varphi\cdot_R\varphi^{-1}}\cdot L\,,\,\partial_{t_k}\varphi \cdot_R \varphi^{-1}  \,\right) - \left(\, \text{ad}^{R*}_{\partial_{t_k}\varphi\cdot_R\varphi^{-1}}\cdot L\,,\,\delta\,\varphi \cdot_R \varphi^{-1} \,\right) \\[1ex]
    &~~+ \left(\, L\,,\,\left[\delta \varphi\cdot_R \varphi^{-1} \,,\,\partial_{t_k}\varphi \cdot_R \varphi^{-1} \right]_R \,\right)+\frac{1}{2}\left(\text{ad}^*_{R\nabla H_k(L)}\cdot L\,,\,\delta \varphi\cdot_R\varphi^{-1} \right)+\partial_{t_k}\left(\, L\,,\,\delta\,\varphi \cdot_R \varphi^{-1}  \,\right).
\end{align*}
The first and third term cancel each other. In the second term we recognise $\text{ad}^{R*}_{\partial_{t_k}\varphi\cdot_R\varphi^{-1}}\cdot L=\partial_{t_k} L$. Hence, 
\begin{align*}
\label{delta_L}
    \delta \Lag_k=
    \left(\, -\partial_{t_k} L+\frac{1}{2}\text{ad}^*_{R\nabla H_k(L)}\cdot L\,,\,\delta\,\varphi \cdot_R \varphi^{-1} \,\right) +\partial_{t_k}\left(\, L\,,\,\delta\,\varphi \cdot_R \varphi^{-1}  \,\right)
\end{align*}
and we obtain the Euler-Lagrange equation for each $\Lag_k$ as 
\begin{equation}
  \partial_{t_k} L=  \frac{1}{2}\, \text{ad}^*_{R\nabla H_k(L)}\cdot L\,.
\end{equation}
Now recall that $  \frac{1}{2}\, \text{ad}^*_{R\nabla H_k(L)}\cdot L= \text{ad}^*_{R_\pm\nabla H_k(L)}\cdot L$ and that, with $\g$ being equipped with an $\text{Ad}$-invariant nondegenerate bilinear form, $\text{ad}^*_{R_\pm\nabla H_k(L)}\cdot L$ is identified with $[R_\pm\nabla H_k(L), L]$. Thus, we have obtained \eqref{EL_Lax} variationally as desired.
That this set of equations is compatible follows from the commutativity of the flows which is a consequence of the mCYBE and the $\text{Ad}$-invariance of $H_k$ as we now show. Going back to having $L\in\g^*$ and evaluating its derivatives on a fixed but arbitrary $X\in\g$, we have
\begin{align*}
    (\partial_{t_k}\partial_{t_j}L)(X)&=-\frac{1}{2}\partial_{t_k}\left( L\,,\, [R\nabla H_j(L),X] \right)\\
    &=\frac{1}{4}\left( L\,,\,[R\nabla H_k(L)\,,\, [R\nabla H_j(L)\,,\,X]] \right)-\frac{1}{4}\left( L\,,\,[R [R\nabla H_k(L)\,,\,\nabla H_j(L)]\,,\,X] \right)\,.
\end{align*}
Hence, using the Jacobi identity
\begin{align*}
    ([\partial_{t_k}\,,\,\partial_{t_j}]L)(X)&=\frac{1}{4}\left( L\,,\,[[R\nabla H_k(L)\,,\, R\nabla H_j(L)]\,,\,X]\right)\\
    &~~-\frac{1}{4}\left(L\,,\,R ([R\nabla H_k(L)\,,\,\nabla H_j(L)]+[\nabla H_k(L)\,,\,R\nabla H_j(L)])\,,\,X] \right)\\
    &=-\frac{1}{4}\left( L\,,\,[[\nabla H_k(L)\,,\, \nabla H_j(L)]\,,\,X]\right)=0
\end{align*}
where we use the mCYBE in the second equality and property \eqref{ad_invariance_property} in the last step.
We now establish the closure relation, \ie $d\Lag=0$ on shell. It turns out that the kinetic and potential contributions vanish separately. 
We have 
\begin{equation}
    \partial_{t_j} \Lag_k-\partial_{t_k} \Lag_j = \partial_{t_j} \left(\,L\,,\,\partial_{t_k}\varphi \cdot_R \varphi^{-1}  \,\right)-\partial_{t_k} \left(\,L\,,\,\partial_{t_j}\varphi \cdot_R \varphi^{-1}  \,\right) -\partial_{t_j}H_k(L)+\partial_{t_k}H_j(L)\,.
\end{equation}
Now, using \eqref{ad_invariance_property}, we find
\begin{equation}\label{dtjhkl}
\partial_{t_j}H_k(L)=\left(\partial_{t_j} L \,,\, \nabla H_k(L)\right)=-\frac{1}{2}\,\left( L \,,\, \left[R\nabla H_j(L)\,,\,\nabla H_k(L)\right]\right)=0.
\end{equation}
Thus, it is a direct consequence of the $\text{Ad}^*$-invariance of $H$ that the potential contribution to $d\Lag$ is zero on shell. We are now left with just the kinetic terms which can be rewritten as
\begin{align*}
  &\left(\,\partial_{t_j} L\,,\,\partial_{t_k}\varphi \cdot_R \varphi^{-1}  \,\right)- \left(\,\partial_{t_k} L\,,\,\partial_{t_j}\varphi \cdot_R \varphi^{-1}  \,\right)+ \left(\, L\,,\,\partial_{t_j}(\partial_{t_k}\varphi \cdot_R \varphi^{-1} ) \,\right)-\left(\, L\,,\,\partial_{t_k}(\partial_{t_j}\varphi \cdot_R \varphi^{-1} ) \,\right)\\
    &= \left(\,\partial_{t_j} L\,,\,\partial_{t_k}\varphi \cdot_R \varphi^{-1}  \,\right)- \left(\,\partial_{t_k} L\,,\,\partial_{t_j}\varphi \cdot_R \varphi^{-1}  \,\right)+ \left(\, L\,,\,\partial_{t_j}\partial_{t_k}\varphi \cdot_R \varphi^{-1}\,- \,\partial_{t_k}\partial_{t_j}\varphi \cdot_R \varphi^{-1}\,\right)\\
    &~~+ \left(\, L\,,\,\partial_{t_k}\varphi \cdot_R \partial_{t_j}\varphi^{-1}\,-\,\partial_{t_j}\varphi \cdot_R \partial_{t_k}\varphi^{-1}\,\right).
\end{align*}
From the commutativity of flows, we have $\partial_{t_j}\partial_{t_k}\varphi - \partial_{t_k}\partial_{t_j}\varphi = 0$, which leaves us with
\begin{eqnarray*}
\left(\,\partial_{t_j} L\,,\,\partial_{t_k}\varphi \cdot_R \varphi^{-1}  \,\right)- \left(\,\partial_{t_k} L\,,\,\partial_{t_j}\varphi \cdot_R \varphi^{-1}  \,\right)+ \left(\, L\,,\,\partial_{t_k}\varphi \cdot_R \partial_{t_j}\varphi^{-1}\,-\,\partial_{t_j}\varphi \cdot_R \partial_{t_k}\varphi^{-1}\,\right).
\end{eqnarray*}
The on-shell relation
$$\partial_{t_j} L= \frac{1}{2}\, \text{ad}^*_{R\nabla H_j(L)}\cdot L\,$$
allows us to express the first term as
\begin{align*}
\left(\,\partial_{t_j} L\,,\,\partial_{t_k}\varphi \cdot_R \varphi^{-1}\,\right) &= \frac{1}{2}  \left(\,\text{ad}^*_{R\,\nabla H_j(L)}\cdot L\,,\,\partial_{t_k}\varphi \cdot_R \varphi^{-1}\,\right)\\
&= -\frac{1}{2}  \left(\,\text{ad}^*_{\partial_{t_k}\varphi \cdot_R \varphi^{-1}}\cdot L\,,R\,\nabla H_j(L)\right).
\end{align*}
Since
$$\left(\,\text{ad}^{R*}_{\partial_{t_k}\varphi \cdot_R \varphi^{-1}}\cdot L\,,\nabla H_j(L)\right) = \frac{1}{2} \left(\,\text{ad}^*_{\partial_{t_k}\varphi \cdot_R \varphi^{-1}}\cdot L\,,R\nabla H_j(L)\right) + \frac{1}{2} \left(\,\text{ad}^*_{R\,\partial_{t_k}\varphi \cdot_R \varphi^{-1}}\cdot L\,,\nabla H_j(L)\right)$$
and
$$\left(\,\text{ad}^*_{R\,\partial_{t_k}\varphi \cdot_R \varphi^{-1}}\cdot L\,,\nabla H_j(L)\right)=-\left(\,\text{ad}^*_{\nabla H_j(L)}\cdot L\,,R\partial_{t_k}\varphi \cdot_R \varphi^{-1}\right) = 0\,,$$
we have a further simplification to
\begin{align*}
    \left(\,\partial_{t_j} L\,,\,\partial_{t_k}\varphi \cdot_R \varphi^{-1}\,\right) &= -  \left(\,\text{ad}^{R*}_{\partial_{t_k}\varphi \cdot_R \varphi^{-1}}\cdot L\,,\nabla H_j(L)\right)\\[1ex]
    &= - \left(\,\partial_{t_k}L\,,\nabla H_j(L)\right)\\[1ex]
    &= -\,\partial_{t_k}H_j(L) = 0
\end{align*}
where we have used the result from \eqref{dtjhkl} (with $k\leftrightarrow j$).
Similarly, we have for the second term
\begin{equation}
     \left(\,\partial_{t_k} L\,,\,\partial_{t_j}\varphi \cdot_R \varphi^{-1}\,\right) = -\,\partial_{t_j}H_k(L) = 0.
\end{equation}
For the last remaining term, we have
\begin{align*}
    &\left(\, L\,,\,\partial_{t_k}\varphi \cdot_R \partial_{t_j}\varphi^{-1}\,-\,\partial_{t_j}\varphi \cdot_R \partial_{t_k}\varphi^{-1}\,\right) \\[1ex]
   &\qquad\qquad=  \left(\, L\,,\,-\partial_{t_k}\varphi \cdot_R \varphi^{-1} \cdot_R \partial_{t_j}\varphi \cdot_R \varphi^{-1}\,+\,\partial_{t_j}\varphi \cdot_R \varphi^{-1} \cdot_R \partial_{t_k}\varphi \cdot_R \varphi^{-1}\,\right)\\[1ex]
     &\qquad\qquad= \left(\, L\,,\, \left[ \,\partial_{t_j}\varphi \cdot_R \varphi^{-1}\,,\, \partial_{t_k}\varphi \cdot_R \varphi^{-1}\, \right]_R\,\right)\\[1ex]
     &\qquad\qquad= -\left(\,\text{ad}^{R*}_{\partial_{t_j}\varphi \cdot_R \varphi^{-1}}\cdot L\,,\,\partial_{t_k}\varphi \cdot_R \varphi^{-1}\,\right)\\[1ex]
     &\qquad\qquad= - \left(\,\partial_{t_j} L\,,\,\partial_{t_k}\varphi \cdot_R \varphi^{-1}\,\right)= \,\partial_{t_k}H_j(L) = 0.
\end{align*}
\end{proof}
It is worth noting that the properties of our Lagrangian multiform heavily rely on the mCYBE for $R$. It is at the heart of the commutativity of the flows and the closure relation. The connection between the closure relation and the CYBE was first identified and established in \cite{CSV} in the context of integrable field theories. Here it is established in the finite-dimensional context and related to Lie dialgebras.

\subsection{Closure relation, Hamiltonians in involution, Kostant-Kirillov form}\label{closure_KK}

In this section, we derive a structural result which brings together Lagrangian multiforms and essential Hamiltonian aspects of integrable systems. It will be convenient and clearer to work with local coordinates $\phi_\alpha$, $\alpha=1,\dots,M$, on the group $G_R$, as we did in \eqref{kinetic_in_coords}. Then, our Lagrangian multiform can be written in the form 
\begin{equation} 
\label{eq:kin_term_canonical_coord}
  \Lag[\varphi] =\sum_{k=1}^N\left( \,\sum_{\alpha=1}^M  \pi_\alpha\, \partial_{t_k}\phi_\alpha - H_k\right) dt_k\\  
\end{equation}
where we recall that the momentum $\pi_\alpha$ is defined by
\begin{equation}
    \pi_\alpha=\left(\,  \Lambda\,,\,\varphi^{-1} \cdot_R  \frac{\partial \varphi}{\partial \phi_\alpha}    \,\right)\,.
\end{equation}
Each Lagrangian $\Lag_k$ in the multiform has the structure $p\dot{q}-H$ of a Lagrangian in phase space, 
\begin{equation}
    \Lag_k=\sum_{\alpha=1}^M  \pi_\alpha\, \partial_{t_k}\phi_\alpha - H_k\,,
\end{equation}
and yields its Euler-Lagrange equations from the variation
\begin{equation}
   \delta \Lag_k=\sum_{\beta=1}^M\left(\sum_{\alpha=1}^M \left(\frac{\partial \pi_\alpha}{\partial \varphi_\beta}-\frac{\partial \pi_\beta}{\partial \varphi_\alpha} \right)\, \partial_{t_k}\phi_\alpha - \frac{\partial H_k}{\partial \phi_\beta}\right)\delta\phi_\beta+\partial_{t_k}\left(\sum_{\alpha=1}^M \pi_\alpha\delta \phi_\alpha \right)\,.
\end{equation}
This is of course consistent with the general result of the previous section, and the comparison of the two expressions for $\delta \Lag_k$ gives 
\begin{equation}
\label{link}
\left(\, -\partial_{t_k} L+\frac{1}{2}\,\text{ad}^*_{R\nabla H_k(L)}\cdot L\,,\,\delta\,\varphi \cdot_R \varphi^{-1} \,\right)=    \sum_{\beta=1}^M\left(\sum_{\alpha=1}^M \Omega_{\alpha\beta}\, \partial_{t_k}\phi_\alpha - \frac{\partial H_k}{\partial \phi_\beta}\right)\delta\phi_\beta
\end{equation}
and 
\begin{equation}
\left(\, L\,,\,\delta\,\varphi \cdot_R \varphi^{-1}  \,\right)=\left(\sum_{\alpha=1}^M \pi_\alpha\delta \phi_\alpha \right)\,.
\end{equation}
Thus, we have natural coordinate versions of key components of the theory. In particular, let us denote by $\theta_R$ the vertical $1$-form
\begin{equation}
    \theta_R=-\sum_{\alpha=1}^M \pi_\alpha\delta \phi_\alpha=-\sum_{\alpha=1}^M \left(\,  \Lambda\,,\,\varphi^{-1} \cdot_R  \frac{\partial \varphi}{\partial \phi_\alpha}    \,\right)\delta \phi_\alpha=-\left(\,  \Lambda\,,\,\varphi^{-1} \cdot_R  \delta \varphi\,\right)\,,
\end{equation}
and let us introduce the vertical $2$-form
\begin{equation}
    \Omega_R=\sum_{\alpha<\beta} \Omega_{\alpha\beta}\,\delta \phi_\alpha\wedge \delta \phi_\beta\,,\qquad \Omega_{\alpha\beta}=\frac{\partial \pi_\alpha}{\partial \phi_\beta}-\frac{\partial \pi_\beta}{\partial \phi_\alpha}\,.
\end{equation}
Observe the important relation
\begin{equation}
\label{exact_form}
    \Omega_R=\delta \theta_R\,.
\end{equation}
The form $\Omega_R$ is the pullback to the group $G_R$ by the map
\begin{equation}
\begin{split}
    \chi:~&G_R \to {\cal O}_\Lambda \\[1ex] 
    &\varphi \mapsto \text{Ad}^{R*}_{\varphi} \cdot \Lambda
\end{split} 
\end{equation}
of the Kostant-Kirillov symplectic form $\omega_{R}$ on the coadjoint orbit through $\Lambda\in\g^*$. We recall here that we consider the coadjoint action of the group $G_R$, not the group $G$. 
Relation \eqref{exact_form} is the well-known fact that this pullback is an exact form. The expression $\varphi^{-1} \cdot_R  \delta \varphi$ appearing in $\theta_R$ can be interpreted as the Maurer-Cartan form on $G_R$. The structure of our Lagrangian coefficients, in particular their kinetic part, is now elucidated in terms of fundamental objects associated with $G_R$ and its coadjoint orbits in $\g^*$. 

It is known that the map $\chi$ is a submersion\footnote{We suppose that we are in a situation where this holds, for instance, excluding the trivial case where the orbit is reduced to a point and assuming that the $G_R$ action is proper.}. Also, a coadjoint orbit is always even-dimensional as it admits the nondegenerate symplectic form $\omega_{R}$. Let us introduce local coordinates $\xi_m$, $m=1,\dots, 2p$, on ${\cal O}_\Lambda$ ($2p\le M$). The tangent map $\chi_*$ is represented locally by the $2p\times M$ matrix $\left(\frac{\partial \xi_m}{\partial \phi_\alpha}\right)$. From now on, summation over repeated indices is understood. The pushforward of the vector fields $\frac{\partial }{\partial \phi_\alpha}$ on $G_R$
is given by
\begin{equation}
    \chi_*\!\left(\frac{\partial }{\partial \phi_\alpha}\right) =\frac{\partial \xi_m}{\partial \phi_\alpha}\,\frac{\partial }{\partial \xi_m}
\end{equation}
and the pullback of the differential $1$-forms $\delta \xi_m$ on ${\cal O}_\Lambda$ reads
\begin{equation}
    \chi^*(\delta \xi_m)=\frac{\partial \xi_m}{\partial \phi_\alpha}\, \delta \phi_\alpha\,.
\end{equation}
If we write for the Kostant-Kirillov form
\begin{equation}
    \omega_R=\omega_{mn}\,\delta \xi_m\wedge \delta \xi_n \,, 
\end{equation}
then we have the following relation with the coefficients of its pullback $\Omega_R=\chi^*(\omega_R)$,
\begin{equation}
    \Omega_{\alpha\beta}= \frac{\partial \xi_m}{\partial \phi_\alpha}\,\frac{\partial \xi_n}{\partial \phi_\beta}\,  \omega_{mn} \,. 
\end{equation}
In view of \eqref{link}, it remains to introduce the Euler-Lagrange vertical $1$-forms on $G_R$
\be
EL_k\equiv EL_k^\beta\,\delta \phi_\beta \equiv\left(\Omega_{\alpha\beta}\,\partial_{t_k}\phi_\alpha -\frac{\partial  H_k}{\partial \phi_\beta}\right)\,\delta \phi_\beta\,.
\ee
This is the pullback of the following vertical $1$-form on ${\cal O}_\Lambda$,
\be
EL_k=\chi_*(\Upsilon_k)= \Upsilon_k^n\,\,\chi_*(\delta \xi_n)=\left(\sum_{m}\omega_{mn}\, \partial_{t_k}\xi_m -\frac{\partial  H_k}{\partial \xi_n} \right)\chi_*(\delta \xi_n)
\ee
with the relation
\be
EL_k^\beta= \Upsilon_k^n\,\frac{\partial \xi_n}{\partial \phi_\beta}\,.
\ee
Since $\chi$ is a submersion, the matrix $\left(\frac{\partial \xi_m}{\partial \phi_\alpha}\right)$ has maximal rank $2p$, so the Euler-Lagrange equations $EL_k^\beta=0$ imply the equations $\Upsilon_k^n=0$ (and vice versa). This is of course just the confirmation in the present coordinate notations of the result we obtained previously that the (multi-time) Euler-Lagrange equations from our Lagrangian multiform produce Lax equations naturally living on coadjoint orbits of $G_R$. 

As a consequence, whenever we say that an equality holds ``on shell'', we mean that it holds modulo $EL_k^\beta=0$ or equivalently $\Upsilon_k^n=0$. We can take advantage of this in the following way. $\Omega_R$ is the pullback of the Kostant-Kirillov form $\omega_R$ on the coadjoint orbit ${\cal O}_R$. The latter is nondegenerate and therefore induces a Poisson bracket 
with bivector
\begin{equation}
    P_R=\sum_{m<n}P_{mn}\frac{\partial }{\partial \xi_m}\wedge \frac{\partial }{\partial \xi_n}\,,\qquad P_{mn}\, \omega_{nr}=\delta_{mr}\,.
\end{equation}
The corresponding Poisson bracket on ${\cal O}_\Lambda$ is known (see {\it e.g.} \cite[Chapter 14]{BBT}) to be the restriction of the Lie-Poisson bracket \eqref{Lie_PB} on $\g^*$ 
\begin{equation}
    \{f,g\}_R(\xi)=\Big(\xi,\left[\nabla f(\xi)\,,\nabla g(\xi)\right]_R\Big)\,.
\end{equation}
In other words, when $f$, $g$ are restricted to ${\cal O}_\Lambda$, we have 
\be
\label{def_PB}
\{f,g\}_R=P_{mn}\frac{\partial f}{\partial \xi_m}\frac{\partial g}{\partial \xi_n}\,.
\ee
With these notions introduced, we see that the Euler-Lagrange equations $\Upsilon_k^n=0$ take the form
\begin{equation}
\label{ELk}    \sum_{m}\omega_{mn}\,\partial_{t_k}\xi_m =\frac{\partial  H_k}{\partial \xi_n}\,,
\end{equation}
and can be written in Hamiltonian form 
\be
\label{syst_eqs_q}
\partial_{t_k}\xi_m=P_{mn}\frac{\partial H_k}{\partial \xi_n}=\{\xi_m,H_k\}_R\,.
\ee
The system of simultaneous equations \eqref{syst_eqs_q} on the $\xi_m$ admits a solution (at least locally) if and only if the flows are compatible, \ie $[\partial_{t_k},\partial_{t_\ell}]=0$. For an arbitrary function $f$, this means 
$$[\partial_{t_k},\partial_{t_\ell}]f=\{\{H_k, H_\ell\}_R,f\}_R=0\,.$$
The stronger condition $\{ H_k, H_\ell\}_R=0$ is the familiar Hamiltonian criterion for integrability (together with a sufficient number of independent such functions $H_k$, of course).

After these preliminary steps, we are now ready to state our second main result and its corollary, the significance of which will be discussed after the proofs.
\begin{theorem}\label{prop_double_zero}
The following {\bf identity} holds
	\be
\frac{\partial \Lag_k}{\partial t_\ell}-\frac{\partial \Lag_\ell}{\partial t_k}+\Upsilon_k^m\,P_{mn}\,\Upsilon_\ell^n=\{H_k,H_\ell\}_R\,.
	\ee    
\end{theorem}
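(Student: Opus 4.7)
The identity is purely algebraic, relating jet-bundle functions: both sides must agree as polynomials in the variables $\phi_\alpha$ and their first time-derivatives $\partial_{t_k}\phi_\alpha$, \emph{without} invoking the Euler-Lagrange equations. The plan is a direct computation, passing to the orbit coordinates $\xi_m$ where the Poisson bivector $P_{mn}$ and the one-forms $\Upsilon_k^n$ are naturally defined.

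First I would compute $\partial_{t_\ell}\Lag_k - \partial_{t_k}\Lag_\ell$ starting from the explicit form $\Lag_k = \pi_\alpha\partial_{t_k}\phi_\alpha - H_k$ that was derived in \eqref{eq:kin_term_canonical_coord}. Applying the total time derivative, chain rule on $\pi_\alpha(\phi)$ and $H_k(\phi)$, and using commutativity of mixed partials $\partial_{t_k}\partial_{t_\ell}\phi_\alpha = \partial_{t_\ell}\partial_{t_k}\phi_\alpha$ to kill the $\pi_\alpha\partial_{t_k}\partial_{t_\ell}\phi_\alpha$ pieces, one obtains
\begin{equation*}
\partial_{t_\ell}\Lag_k - \partial_{t_k}\Lag_\ell = \Omega_{\alpha\beta}\,\partial_{t_k}\phi_\alpha\,\partial_{t_\ell}\phi_\beta \,+\, \partial_{t_k}H_\ell - \partial_{t_\ell}H_k,
\end{equation*}
with $\Omega_{\alpha\beta}$ exactly the antisymmetric combination of $\partial\pi/\partial\phi$ defined in Section~\ref{closure_KK}. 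This right-hand side descends to the orbit ${\cal O}_\Lambda$: via the submersion $\chi$ we have $\Omega_{\alpha\beta}\partial_{t_k}\phi_\alpha\partial_{t_\ell}\phi_\beta = \omega_{mn}\partial_{t_k}\xi_m\partial_{t_\ell}\xi_n$, and $H_k$ already lives on the orbit, so we may work entirely in $\xi$-coordinates from this point on.

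Next I would use the definition $\Upsilon_k^n = \omega_{mn}\partial_{t_k}\xi_m - \partial H_k/\partial\xi_n$ in two complementary ways. In its direct form, $\omega_{mn}\partial_{t_k}\xi_m = \Upsilon_k^n + \partial H_k/\partial\xi_n$, substituted into the kinetic term, converts $\omega_{mn}\partial_{t_k}\xi_m\partial_{t_\ell}\xi_n$ into $\Upsilon_k^n\partial_{t_\ell}\xi_n + \partial_{t_\ell}H_k$, which cancels one of the Hamiltonian pieces and leaves
\begin{equation*}
\partial_{t_\ell}\Lag_k - \partial_{t_k}\Lag_\ell = \Upsilon_k^n\partial_{t_\ell}\xi_n + \partial_{t_k}H_\ell.
\end{equation*}
In its inverted form, using $P\omega = I$, it reads $\partial_{t_k}\xi_r = P_{rn}(\Upsilon_k^n + \partial H_k/\partial\xi_n)$ (up to the paper's sign convention). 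Substituting this into both $\Upsilon_k^n\partial_{t_\ell}\xi_n$ and $\partial_{t_k}H_\ell = (\partial H_\ell/\partial\xi_r)\partial_{t_k}\xi_r$ yields four structural terms: the pure-Hamiltonian term $P_{rn}\partial_r H_\ell \partial_n H_k$ which is precisely $\{H_k,H_\ell\}_R$; the pure-$\Upsilon$ term which is precisely $\Upsilon_k^m P_{mn}\Upsilon_\ell^n$; and two mixed terms of the form $\Upsilon_k P (\partial H_\ell)$ and $(\partial H_k) P \Upsilon_\ell$ which cancel pairwise by the antisymmetry of $P_{mn}$. Rearranging the surviving pieces gives the claimed identity.

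The main obstacle is the sign and summation-convention bookkeeping: consistent orientation of $\omega_{mn}$, of its inverse $P_{mn}$, of the Poisson bracket $\{\cdot,\cdot\}_R$, and of the antisymmetric quadratic form $\Upsilon P \Upsilon$ must be enforced throughout so that the two mixed cross-terms really do cancel rather than redouble. No deeper geometric input is needed: the submersion property of $\chi:G_R\to{\cal O}_\Lambda$ (already used to define $\Upsilon_k^n$) justifies passing to orbit coordinates, and from that point the identity is a purely algebraic rearrangement built on the inversion relation $P\omega = I$.
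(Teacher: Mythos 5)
Your proposal is correct and follows essentially the same route as the paper's proof: a direct coordinate computation of $\partial_{t_\ell}\Lag_k-\partial_{t_k}\Lag_\ell$ from $\Lag_k=\pi_\alpha\partial_{t_k}\phi_\alpha-H_k$, descent to orbit coordinates via the submersion $\chi$, and use of $P_{mn}\omega_{nr}=\delta_{mr}$ to split the result into the pure-$\Upsilon$, mixed, and $\{H_k,H_\ell\}_R$ pieces (the paper performs the split in one step by inserting $P_{nr}\bigl(\omega_{rs}\partial_{t_\ell}\xi_s+\partial H_\ell/\partial\xi_r-\partial H_\ell/\partial\xi_r\bigr)$ rather than via your two-stage substitution, but the bookkeeping is identical). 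One minor slip worth noting: with your intermediate form $\Upsilon_k^n\partial_{t_\ell}\xi_n+\partial_{t_k}H_\ell$, the two mixed terms that cancel by antisymmetry of $P$ are $\Upsilon_k^n P_{nm}\,\partial H_\ell/\partial\xi_m$ and $(\partial H_\ell/\partial\xi_r)P_{rn}\Upsilon_k^n$ --- both pair $\Upsilon_k$ with $\partial H_\ell$ --- not $\Upsilon_k P(\partial H_\ell)$ against $(\partial H_k)P\Upsilon_\ell$ as written; the cancellation mechanism you invoke is nonetheless the right one.
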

\begin{proof}
    The proof is by direct computation.
    \begin{align*}
        \frac{\partial \Lag_k}{\partial t_\ell}-\frac{\partial \Lag_\ell}{\partial t_k}&=\left(\frac{\partial \pi_\alpha}{\partial \phi_\beta}-\frac{\partial \pi_\beta}{\partial \phi_\alpha}\right)\partial_{t_\ell}\phi_\beta\,\partial_{t_k} \phi_\alpha-\frac{\partial H_k}{\partial \phi_\beta}\partial_{t_\ell}\phi_\beta+\frac{\partial H_\ell}{\partial\phi_\alpha}\partial_{t_k}\phi_\alpha\\
    &=\left(\Omega_{\alpha\beta} \,\partial_{t_k} \phi_\alpha-\frac{\partial H_k}{\partial \phi_\beta}\right)\partial_{t_\ell}\phi_\beta+\frac{\partial H_\ell}{\partial \xi_m}\frac{\partial \xi_m}{\partial \phi_\alpha}\partial_{t_k}\phi_\alpha\\
    &=\left(\omega_{mn} \,\partial_{t_k} \xi_m-\frac{\partial H_k}{\partial \xi_n}\right)\partial_{t_\ell}\xi_n+\frac{\partial H_\ell}{\partial \xi_m}\partial_{t_k}\xi_m\\
    &=\left(\omega_{mn} \,\partial_{t_k} \xi_m-\frac{\partial H_k}{\partial \xi_n}\right)P_{nr}\left( \omega_{rs}\,\partial_{t_\ell}\xi_s+\frac{\partial H_\ell}{\partial \xi_r}-\frac{\partial H_\ell}{\partial \xi_r}  \right)+\frac{\partial H_\ell}{\partial \xi_m}\partial_{t_k}\xi_m\\
    &=-\Upsilon_k^n\,P_{nr}\,\Upsilon_\ell^r+\frac{\partial H_k}{\partial \xi_n}\,P_{nr} \, \frac{\partial H_\ell}{\partial \xi_r},
    \end{align*}
    hence the result.
\end{proof}
\begin{corollary}
\label{coro}
The closure relation for the Lagrangian multiform $\Lag$ is equivalent to the involutivity of the Hamiltonians $H_k$ with respect to the Lie-Poisson $R$-bracket $\{\,~,~\}_R$.
\end{corollary}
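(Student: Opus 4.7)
The plan is to deduce the corollary directly from the identity established in Theorem \ref{prop_double_zero}, namely
\[
\frac{\partial \Lag_k}{\partial t_\ell}-\frac{\partial \Lag_\ell}{\partial t_k}+\Upsilon_k^m\,P_{mn}\,\Upsilon_\ell^n=\{H_k,H_\ell\}_R\,.
\]
The key observation, which I would make explicit at the start, is the meaning of ``on shell'' in the present context: as spelled out in Section \ref{closure_KK}, the multi-time Euler--Lagrange equations are exactly the equations $\Upsilon_k^n=0$ on ${\cal O}_\Lambda$ (equivalently $EL_k^\beta=0$ on $G_R$), since $\chi$ is a submersion. Thus evaluating the identity on solutions of the Euler--Lagrange equations makes the quadratic ``double-zero'' term $\Upsilon_k^m P_{mn}\Upsilon_\ell^n$ vanish identically.

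Once this is noted, the proof is a one-line reduction: on shell, Theorem \ref{prop_double_zero} collapses to
\[
\partial_{t_\ell}\Lag_k-\partial_{t_k}\Lag_\ell=\{H_k,H_\ell\}_R\,,
\]
for every pair $k,\ell$. Recalling from \eqref{closure} that the closure relation $d\Lag=0$ on shell is the statement that $\partial_{t_k}\Lag_\ell-\partial_{t_\ell}\Lag_k=0$ on solutions for all $k,\ell$, the equivalence with $\{H_k,H_\ell\}_R=0$ for all $k,\ell$ follows immediately. I would state both implications symmetrically: closure for all indices $\Rightarrow$ involutivity of the $H_k$; conversely, involutivity $\Rightarrow$ vanishing of the two-form coefficients $\partial_{t_k}\Lag_\ell-\partial_{t_\ell}\Lag_k$ on shell, hence $d\Lag=0$ on shell.

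There is no real obstacle here; the entire content of the corollary is packaged in Theorem \ref{prop_double_zero} together with the interpretation of ``on shell'' in terms of $\Upsilon_k^n=0$. The only point worth emphasising in the write-up, to avoid any misreading, is that the identity holds off shell as an equality of smooth functions on the (multi-time $\times$ $G_R$)-jet space, so that the quadratic term $\Upsilon_k^m P_{mn}\Upsilon_\ell^n$ really does drop out as soon as one restricts to solutions of the Euler--Lagrange equations, rather than requiring some subtler argument. I would therefore keep the proof essentially to two sentences: invoke Theorem \ref{prop_double_zero}, then specialise to $\Upsilon_k^n=0$ and compare with the definition of the closure relation.
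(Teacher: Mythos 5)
Your proposal is correct and follows exactly the paper's own argument: invoke the off-shell identity of Theorem \ref{prop_double_zero}, note that on shell the double-zero term $\Upsilon_k^m P_{mn}\Upsilon_\ell^n$ vanishes, and compare the resulting equality $\partial_{t_\ell}\Lag_k-\partial_{t_k}\Lag_\ell=\{H_k,H_\ell\}_R$ with the definition of the closure relation. Your additional remarks on the meaning of ``on shell'' simply make explicit what the paper established in the preceding discussion of $\Upsilon_k^n=0$ and the submersion $\chi$.
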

\begin{proof}
    The closure relation 
    requires that {\it on shell}, we have 
    \be
d\Lag=\sum_{k<\ell}\left(\frac{\partial \Lag_\ell}{\partial t_k}-\frac{\partial \Lag_k}{\partial t_\ell}\right)dt_k\wedge dt_\ell=0\,.
\ee
From the previous theorem, on shell we have 
	\be
\frac{\partial \Lag_k}{\partial t_\ell}-\frac{\partial \Lag_\ell}{\partial t_k}=\{H_k,H_\ell\}_R\,,
	\ee    
hence the result.
\end{proof}
\begin{remark}
    The connection between the closure relation for Lagrangian $1$-forms and the involutivity of Hamiltonians was first discussed in \cite{Su1}. The content of our Corollary establishes this result for all Lagrangian $1$-forms in the class that we have introduced in this paper. They include any system describable by the coadjoint orbit and $r$-matrix methods of Lie dialgebras. An extension of the connection between closure and involutivity to the field theory context (Lagrangian $2$-forms) was discussed in \cite{Ver2}. 
\end{remark}
\begin{remark}
    In the field theory context, the connection between the closure relation and the classical Yang-Baxter equation was elucidated in \cite{CSV}. In the present article on Lagrangian $1$-forms, this connection is also at the heart of our results since the entire construction is based on the availability of the second Lie bracket $[\,~,~]_R$ on $\g$, a feature ensured if $R$ satisfies the mCYBE. 
\end{remark}
\begin{remark}
    The content of the theorem sheds fundamental light on the link between the closure relation and the involutivity of the Hamiltonian as it establishes an {\it off-shell} identity which clearly shows the interplay between the coefficients of $d\Lag$, the Euler-Lagrange equations, the Poisson tensor on the coadjoint orbit, and the Poisson bracket of the Hamiltonians related to our Lagrangian coefficients. A particular point is that it shows in the present general setting that $d\Lag$ has a so-called ``double zero'' on the equation of motion. This idea was introduced in \cite{SNC2} and developed in \cite{SNC3,S1} as an important ingredient of Lagrangian multiform theory. However, the relation to Hamiltonians in involution was not noticed there. The status of the ``double zero'' term $\Upsilon_k^n\,P_{nr}\,\Upsilon_\ell^r$ is now clearly identified as well as its relation to the Euler-Lagrange equations. This term is the off-shell element linking the Hamiltonian integrability criterion $\{H_k,H_\ell\}_R=0$ and the integrability criterion advocated in Lagrangian multiform theory: the closure relation $d\Lag=0$ on shell. 
\end{remark}

\section{Lagrangian multiform on a coadjoint orbit from reduction}\label{reduction}

It is well-known that many integrable systems arise from Hamiltonian reduction on the cotangent bundle of a Lie group $A$ following the intuitive idea that the more intricate dynamics of the integrable system of interest on the reduced phase space comes from the simplest ``free'' dynamics on the cotangent bundle. In this section, we show how one can construct a general Lagrangian multiform on a coadjoint orbit by a Lagrangian analogue of the procedure of Hamiltonian reduction. The multiform of Section \ref{gen_res} is recovered as a special case. 

We follow mainly the exposition and ideas in \cite[Lectures 1 $\&$ 2]{STS} to summarise the key notions. By fixing a left trivialisation of $T^*A$ we can parametrise it with $(\alpha,a)\in\mathfrak{a}^*\times A$ where $\mathfrak{a}^*$ is the dual of the Lie algebra $\mathfrak{a}$ of $A$. The canonical symplectic form $\Omega$ is exact and derives from the canonical one-form $\theta$
\begin{equation}
    \Omega=\delta \theta~~\text{with}~~\theta=(\alpha,a^{-1}\delta a)\,.
\end{equation}
The cotangent lifts to $T^*A$ of the action of $A$ on itself by left and right translations read
\begin{equation}
    \lambda_b:(\alpha,a)\mapsto(\alpha,b\,a)\,,\quad \rho_b:(\alpha,a)\mapsto(\text{Ad}_b^* \cdot\alpha\,,a\,b^{-1})\,,~~b\in A\,.
\end{equation}
The canonical one-form and hence the symplectic form are invariant under these actions. The corresponding moment maps are given by
\begin{equation}
    \mu_{\ell}(\alpha,a)=\text{Ad}^*_{a}\cdot \alpha \,,~~\mu_{r}(\alpha,a)=-\alpha\,.
\end{equation}
In applications to integrable systems, one usually consider the case where only Lie subgroups $A_+$ and $A_-$ of $A$ act by left and right translations. In this case, the moment maps are the restriction of the above moment maps to $\mathfrak{a}_\pm$, the Lie algebras of $A_\pm$. Thus they are elements of $\mathfrak{a}^*_\pm$ and we denote them by
\begin{equation}
\label{projected_moments}
\mu_\ell(\alpha,a)=\Pi_{\mathfrak{a}^*_+}\left(\text{Ad}^*_{a}\cdot \alpha \right) \,,~~\mu_{r}(\alpha,a)=-\Pi_{\mathfrak{a}^*_-}\alpha\,.
\end{equation}
In the special case where $A_+$ is the trivial group and $A_-=A$, it is known that the quotient Poisson manifold $T^*A/A$ is isomorphic to $\mathfrak{a}^*$ equipped with the Lie-Poisson bracket, see {\it e.g.} \cite[Proposition 1.24]{STS}.

Since our emphasis is on the Lagrangian formalism, let us describe the translation of the above situation into this framework. We consider the following Lagrangian on $T^*A$ 
\begin{equation}
    \Lag^0=\left(\alpha,a^{-1}\frac{da}{dt} \right)\,.
\end{equation}
The importance of $\Lag^0$ is that the Cartan form arising from its variation is precisely the canonical one-form on $T^*A$. Indeed, we have
\begin{align*}
    \delta \Lag^0&=\left(\delta \alpha,a^{-1}\frac{da}{dt} \right)-\left(\alpha,a^{-1}\delta a a^{-1}\frac{da}{dt} \right)+\left( \alpha,a^{-1}\delta\frac{da}{dt} \right)\\
    &=\left(\delta \alpha,a^{-1}\frac{da}{dt} \right)-\left(\alpha,a^{-1}\delta a a^{-1}\frac{da}{dt} \right)-\left(\frac{d\alpha}{dt},a^{-1}\delta a \right)+\left(\alpha,a^{-1}\frac{da}{dt} a^{-1}\delta a  \right)+\frac{d}{dt}\left(\alpha,a^{-1}\delta a \right)\\
    &=\left(\delta \alpha,a^{-1}\frac{da}{dt} \right)-\left(\frac{d}{dt}\left(\text{Ad}^*_a\cdot \alpha\right),\delta a a^{-1}\right)+\frac{d}{dt}\left(\alpha,a^{-1}\delta a \right)
\end{align*}
In the last term we recognise that the Cartan form is $\theta$ (up to a conventional sign). Also, we see that this Lagrangian yields trivial (free) equations of motion
\begin{equation}
\label{free_eqs}
    a^{-1}\frac{da}{dt}=0\,,~~\frac{d}{dt}\left(\text{Ad}^*_a\cdot \alpha\right)=0\quad \Leftrightarrow \quad  \frac{da}{dt}=0\,,~~\frac{d\alpha}{dt}=0\,.
\end{equation}
The Lagrangian $\Lag^0$ is invariant under the global transformations $(\alpha,a)\mapsto(\alpha,b\,a)$ and $(\alpha,a)\mapsto(\text{Ad}_b^*\cdot \alpha,a\,b^{-1})$ where $b\in A$ is constant. The conserved currents produced by Noether's theorem are the moment maps $\mu_{\ell,r}$. It is immediate from \eqref{free_eqs} that they are indeed conserved currents. The symmetry group $A\times A$ of this free theory is too large to produce systems of interest. One easy way to reduce the symmetry group to $A_+\times A_-=\{e\}\times A$ acting by right translations only is to include a potential term where the potential function depends on $(\alpha,a)$ only through $\mu_\ell$:
\begin{equation}
    \Lag=\left(\alpha,a^{-1}\frac{da}{dt} \right)-H\left(-\mu_\ell(\alpha,a)\right)=\left(\alpha,a^{-1}\frac{da}{dt} \right)-H\left(\text{Ad}^*_{a}\cdot \alpha\right)\,,
\end{equation}
where $H$ is a function on $\mathfrak{a}^*$. By Noether's theorem, we expect that $\mu_r=-\alpha$ is still a conserved current. Indeed, a computation analogous to that in the proof of Theorem \eqref{Th_multi_EL} gives 
\begin{align}
    \delta \Lag
    &=\left(\delta \alpha,a^{-1}\frac{da}{dt} -\text{Ad}_{a^{-1}}\cdot\nabla H\left(\text{Ad}^*_{a}\cdot \alpha\right)\right)\nonumber\\
    &-\left(\frac{d}{dt}\left(\text{Ad}^*_a\cdot \alpha \right)-\text{ad}^*_{\nabla H\left(\text{Ad}^*_{a}\cdot \alpha\right)}\cdot \left(\text{Ad}^*_{a}\cdot\alpha\right),\delta a a^{-1}\right)+\frac{d}{dt}\left(\alpha,a^{-1}\delta a \right)
\end{align}
Thus the equations of motion read
\begin{equation}
    \frac{da}{dt}\,a^{-1} -\nabla H\left(\text{Ad}^*_{a}\cdot \alpha\right)=0\,,~~
\frac{d}{dt}\left(\text{Ad}^*_a\cdot \alpha \right)-\text{ad}^*_{\nabla H\left(\text{Ad}^*_{a}\cdot \alpha\right)}\cdot \left(\text{Ad}^*_{a}\cdot\alpha\right)=0
\end{equation}
or equivalently
\begin{equation}
    \frac{da}{dt}\,a^{-1} -\nabla H\left(\text{Ad}^*_{a}\cdot \alpha\right)=0\,,~~
\frac{d}{dt}\alpha=0\,.
\end{equation}
The analogue of fixing the moment map $\mu_r=-\alpha$ to some fixed value $-\Lambda\in\mathfrak{a}^*$ in the Hamiltonian reduction approach consists of ``integrating out degrees of freedom'' by solving $\frac{d}{dt}\alpha=0$ to $\alpha=\Lambda\in\mathfrak{a}^*$, and inserting back into the Lagrangian to get the effective Lagrangian of the reduced model. This yields 
\begin{equation}
\label{eff_Lag}
    \Lag_{\text{eff}}=\left(\Lambda,a^{-1}\frac{da}{dt} \right)-H\left(\text{Ad}^*_{a}\cdot \Lambda\right)\,.
\end{equation}
This Lagrangian describe a system on the coadjoint orbit of $\Lambda\in\mathfrak{a}^*$ under $A$. At this stage, if we equip $\mathfrak{a}$ with a nondegenerate symmetric bilinear form to identify $\mathfrak{a}^*$ with $\mathfrak{a}$, we obtain as before that the equations of motion take the Lax form for $L=\text{Ad}_{a}^*\cdot \Lambda$
\begin{equation}
    \frac{dL}{dt}=\left[\nabla H(L)\,,\, L\right]\,.
\end{equation}
Note that we did not assume anything special about the function $H$ so that strictly speaking there is no notion of integrability at this stage, only that the equations for the system under consideration are written in Lax form. 

Applying this construction to the case $A=G_R$, $\mathfrak{a}=\g_R$ we see that each of our Lagrangian $\Lag_k$ in \eqref{our_Lag} is of the form of $\Lag_{\text{eff}}$. Of course, in that case, each function $H_k$ was assumed to have the additional property of being invariant under the coadjoint action of $G$ so that the closure relation, the Lagrangian criterion for integrability\footnote{Modulo, as always, the requirement of having enough such independent functions.}, was valid. However, let us go back to the general situation above and suppose we form a Lagrangian 1-form by assembling $N$ effective Lagrangians of the form \eqref{eff_Lag} with $N$ independent (arbitrary smooth) functions $H_k$ defined on $\mathfrak{a}^*$ (or possibly only on ${\cal O}_\Lambda$):
\begin{equation}
\label{general_form_A}
  \Lag = \sum_{k=1}^N \, \Lag_k \, dt_k = \sum_{k=1}^N \, \left(\left(\Lambda,a^{-1}\partial_{t_k}a \right)-H_k\left(\text{Ad}^*_{a}\cdot \Lambda\right) \right)\, dt_k\,.
\end{equation}
The arguments of Section \ref{closure_KK} can be repeated verbatim and lead to the same conclusion as in Theorem \ref{prop_double_zero}.
\begin{theorem}
\label{gen_th}
The following {\bf identity} holds
	\be
 \label{general_off_shell}
\frac{\partial \Lag_k}{\partial t_\ell}-\frac{\partial \Lag_\ell}{\partial t_k}+\Upsilon_k^m\,P_{mn}\,\Upsilon_\ell^n=\{H_k,H_\ell\}_{\mathfrak{a}^*}\,,
	\ee    
 where $\{~,~\}_{\mathfrak{a}^*}$ is the Lie-Poisson bracket on $\mathfrak{a}^*$ and $P_{mn}$ the corresponding Poisson tensor on ${\cal O}_\Lambda$.
\end{theorem}
Note that so far we have not assumed anything on the functions $H_k$. The exact analogue of Corollary \ref{coro} follows from Theorem \ref{gen_th} in the present context. We stress its importance in this general situation: if we can solve for $\Lag_k$ such that the multi-time Euler-Lagrange equations and the closure relation hold for the $1$-form \eqref{general_form_A} then it qualifies as a Lagrangian multiform and \eqref{general_off_shell} implies that the corresponding functions $H_k$ are in involution. Conversely, if we use functions $H_k$ that are in involution with respect to $\{~\,,~\}_{\mathfrak{a}^*}$ then the 1-form \eqref{general_form_A} satisfies the closure relation and is a Lagrangian multiform. In Section \ref{gen_res}, we used the latter point of view in a special situation: we took advantage of the Lie dialgebra construction which uses $\text{Ad}^*_G$-invariant functions to produce functions in involution with respect to $\{~\,,~\}_R$ on the dual of $\g_R$. In the present section, the above results imply the stronger statement that one can associate a Lagrangian multiform with any family of Hamiltonians in involution on any coadjoint orbit of a Lie group. 

The perspective of the reverse procedure consisting of solving the multi-time Euler-Lagrange equations and the closure relations to produce (new?) integrable systems and Hamiltonians in involution is tantalising. However, it is far from clear whether such a philosophy is more (or less) promising than the established integrable system classification tools such as symmetry analysis or classification of the solutions of the (modified) classical Yang-Baxter equation.  

It is instructive to recover our Lagrangians $\Lag_k$ as effective Lagrangians via a slightly different, but related, mechanism. Suppose now that we reduce the symmetry group to $A_+\times A_-$ such that, at least locally, any element $a\in A$ factorise uniquely as $a=a_+^{-1}a_-$, $a_\pm\in A_\pm$. At the Lie algebra level, we have a unique decomposition $X=X_+-X_-$, $X_\pm\in \mathfrak{a}_\pm$ and by duality $\alpha=\Pi_{\mathfrak{a}^*_+}\alpha-\Pi_{\mathfrak{a}^*_-}\alpha$. Identifying $a$ with $(a_+,a_-)$, we see that the action of $A_+\times A_-$ on $a$ amounts to an action by right translations $(a_+,a_-)\mapsto(a_+b_+^{-1},a_-b_-^{-1})$. Thus,
\begin{equation}
    s:T^*A\to \mathfrak{a}^*\,,~~(\alpha,a)\mapsto \text{Ad}^*_{a_-}\cdot \alpha
\end{equation}
is invariant under the action of $A_+\times A_-$. Equipped with this, as before, it is easy to introduce a Lagrangian which has $A_+\times A_-$ as symmetry group 
\begin{equation}
    \Lag=\left(\alpha,a^{-1}\frac{da}{dt} \right)-H\left(-s(\alpha,a)\right)\,,
\end{equation}
where $H$ is a function on $\mathfrak{a}^*$. By Noether's theorem we expect that $\mu_{\ell,r}$ in \eqref{projected_moments} are conserved currents.
The direct verification from the Euler-Lagrange equations follows by noticing that 
$$\left(\alpha,a^{-1}\frac{da}{dt} \right)=-\left(s,\frac{da_+}{dt}a_+^{-1}-\frac{da_-}{dt}a_-^{-1} \right)\,.$$
Therefore 
\begin{equation}
\begin{aligned}
    \delta \Lag=&-\left(\delta s,\frac{da_+}{dt}a_+^{-1}-\frac{da_-}{dt} a_-^{-1}-\nabla H(-s)\right)+\left( \frac{d}{dt}\left( \text{Ad}^*_{a_+^{-1}}\cdot s\right)\,,\,\delta a_+a_+^{-1} \right)\\[1ex]
    &-\left( \frac{d}{dt}\left( \text{Ad}^*_{a_-^{-1}}\cdot s\right)\,,\,\delta a_-a_-^{-1} \right) - \frac{d}{dt}\left(s,\delta a_+a_+^{-1}-\delta a_-a_-^{-1} \right)
\end{aligned}
\end{equation} 
giving 
\begin{equation}
\begin{cases}
    \dfrac{d}{dt}\,\Pi_{\mathfrak{a}^*_+}\left(\text{Ad}^*_{a_+^{-1}}\cdot s \right)=\dfrac{d}{dt}\,\Pi_{\mathfrak{a}^*_+}\left(\text{Ad}^*_{a}\cdot \alpha \right)=\dfrac{d}{dt}\,\mu_\ell=0\,,\\[2ex]
    \dfrac{d}{dt}\,\Pi_{\mathfrak{a}^*_-}\left(\text{Ad}^*_{a_-^{-1}}\cdot s \right)=\dfrac{d}{dt}\,\Pi_{\mathfrak{a}^*_-} \alpha =-\dfrac{d}{dt}\,\mu_r=0\,,\\[2ex]
    \dfrac{da_+}{dt}\,a_+^{-1}-\dfrac{da_-}{dt}\,a_-^{-1} -\nabla H(-s)=0\,.
\end{cases}
\end{equation}
    The first two equations are indeed the conservation of the Noether currents, as expected. We use them to integrate out the corresponding degrees of freedom. Namely, we set 
    \begin{equation}
        \mu_\ell=-\Lambda_+\in\mathfrak{a}_+\,,~~\mu_r=\Lambda_-\in\mathfrak{a}_-,
    \end{equation}
    and substitute back into the Lagrangian. To obtain the effective Lagrangian, note that 
    \begin{equation}
        s=\Pi_{\mathfrak{a}^*_+}\left(\text{Ad}^*_{{a_+}}\cdot \mu_\ell \right)+\Pi_{\mathfrak{a}^*_-}\left(\text{Ad}^*_{{a_-}}\cdot \mu_r \right)\,.
    \end{equation}
    This can be seen by the following computation\footnote{This computation is the generalisation to the present context of the analogous argument used in \cite{FG} where our $s$ corresponds to their $X$.}. For any $X\in\mathfrak{a}$,
\begin{align*}
    \left(s,X\right)=\left(s,X_+-X_-\right)=&\left(\text{Ad}^*_{a_+^{-1}}\cdot s,\text{Ad}_{a_+^{-1}}\cdot X_+\right)-\left(\text{Ad}^*_{a_-^{-1}}\cdot s,\text{Ad}_{a_-^{-1}}\cdot X_-\right)\\
    =&\left(\mu_\ell,\text{Ad}_{a_+^{-1}}\cdot X_+\right)+\left(\mu_r,\text{Ad}_{a_-^{-1}}\cdot X_-\right)\\
=&\left(\Pi_{\mathfrak{a}^*_+}\!\left(\text{Ad}^*_{{a_+}}\cdot \mu_\ell \right)+\Pi_{\mathfrak{a}^*_-}\!\left(\text{Ad}^*_{{a_-}}\cdot \mu_r \right),X\right)\,.
\end{align*}
Putting everything together, and setting 
\begin{equation}
    L=-s=\Pi_{\mathfrak{a}^*_+}\left(\text{Ad}^*_{{a_+}}\cdot \Lambda_+ \right)-\Pi_{\mathfrak{a}^*_-}\left(\text{Ad}^*_{{a_-}}\cdot \Lambda_- \right)
\end{equation}
the effective Lagrangian is
\begin{equation}
    \Lag_{\text{eff}}=\left(L\,,\frac{da_+}{dt}\,a_+^{-1} -\frac{da_-}{dt}\,a_-^{-1}\right)-H\left(L\right)\,.
\end{equation}
In the special case where $A=G$, $\mathfrak{a}=\g$, $\mathfrak{a}_\pm=\g_{\pm}$ with $\g_\pm=R_\pm(\g)$, this effective Lagrangian is exactly of the form of our Lagrangian coefficients $\Lag_k$. This alternative construction amounts to reducing a free system on $T^*G$ by acting with $G_+\times G_-\simeq G_R$. We refer the interested reader to \cite[Section 2.4]{STS} for a discussion of the connection between the reduction on $T^*G_R$ by left translations of $G_R$ and the reduction on $T^*G$ by left and right translations of $G_+\times G_-$. The main reason why we discussed the alternative construction here is because it suggests that one may be able to construct Lagrangian multiforms for systems of Calogero-Moser type. In the terminology of \cite[Section 2.4]{STS}, one would have to drop the assumption that the subgroup $H$ of $G\times G$ used for reduction is transversal to the diagonal subgroup and implement instead a reduction corresponding for instance to the action of $G$ on itself by conjugation. We leave this open problem for future investigation.

\section{Open Toda chain in the AKS scheme}\label{Flaschka}

As this is our first example, we first spend some time reviewing the known Adler-Kostant-Symes Lie algebraic construction of the Lax matrix and the Lax equation reproducing Flaschka's approach. Then, we will make the connection with our variational approach.

	\paragraph{Algebraic setup:} Let us choose $\g=\sl(N+1)$, the Lie algebra of $(N+1)\times (N+1)$ traceless real matrices, $\g_+$ the Lie subalgebra of skew-symmetric matrices and $\g_-$ the Lie subalgebra of upper triangular traceless matrices, yielding
	\be
	\g=\g_+\oplus \g_-\,.
	\ee
	Here $R=P_+-P_-$ and $R_\pm=\pm P_\pm$ with $P_\pm$ the projector on $\g_\pm$ along $\g_\mp$. 
	The following $\text{Ad}$-invariant nondegenerate bilinear form
	\be
 \label{bilinear_form}
	\langle\,X,Y\rangle={\rm Tr}(XY)
	\ee
	allows the identification $\g^*\simeq \g$, and it induces the decomposition  
	\be
	\g^*=\g_-^*\oplus \g_+^*\simeq\g_+^\perp\oplus \g_-^\perp\,,
	\ee
	where $\g_\pm^\perp$ is the orthogonal complement of $\g_{\pm}$ with respect to $\langle\,~,~\rangle$: $\g_+^\perp$ is the subspace of traceless symmetric matrices and $\g_-^\perp$ the subspace of strictly upper triangular matrices. Let us choose
	\be \label{eq:Lambda}
	\Lambda =\begin{pmatrix}
		0 & 1 & 0 & 0 &\dots & 0\\
		1 & 0 & 1 & 0 &\dots & 0\\
		0 & 1 & 0 & 1 &\dots & 0\\
		0 & 0 & 1 &\ddots &\ddots & \vdots\\
		\vdots & & & \ddots & \ddots & 1\\
		0 & 0 & 0 & \dots &1 & 0\\
	\end{pmatrix}\in \g_-^*\simeq\g_+^\perp
	\ee
	and consider its orbit under the (co)adjoint action of $G_-$, the Lie subgroup associated to $\g_-$ consisting of upper triangular matrices with unit determinant. 
	
	\paragraph{Lax matrix and Lax equations for the first two flows:} As explained in Section \ref{subs:special_cases_AKS_gaudin}, the AKS case corresponds to the particular case where $\varphi\in G_-$ so that 
	\begin{equation*}
		L=\text{Ad}^{R*}_{\varphi} \cdot \Lambda=-R_-^*(\text{Ad}^*_{\varphi_-}\cdot \Lambda),
	\end{equation*}
	and the coadjoint orbit ${\cal O}_{\Lambda}$ lies in $\g_-^*$. 
	Using $\langle\,~,~\rangle$ we can identify the adjoint and coadjoint actions. Also, we use it to identify the transpose $A^*:\g^*\to\g^*$ of any linear map $A:\g\to\g$ with the transpose of $A$ with respect to $\langle\,~,~\rangle$ defined on $\g$. Writing $(\xi,X)=\langle Y,X\rangle$, this means that we have
	$$(A^*(\xi),X)=(\xi,A(X))=\langle Y,A(X) \rangle=\langle A^*(Y),X \rangle\,.$$
	This allows us to work with 
	\begin{equation}
		L=-R_-^*(\varphi_-\, \Lambda\,\varphi_-^{-1})=-R_-^*(\varphi\, \Lambda\,\varphi^{-1}) \,,
	\end{equation}
	where we have dropped the redundant subscript on $\varphi$ in the second equality with $\varphi=\varphi_-\in G_-$.
From the definitions $\langle\,X\,,\,R_\pm Y\,\rangle=\langle\,R^*_\pm X\,,\,Y\,\rangle$ and  $\langle\,X\,,\,P_\pm Y\,\rangle=\langle\,\Pi_\mp X\,,\,Y\,\rangle$, where we denote by $\Pi_\pm$ the projector onto $\g_\pm^\perp$ along $\g_\mp^\perp$, we find $R^*_\pm=\pm\Pi_\mp\,$. Note that this is an example of non-skew-symmetric $r$-matrix since
 \begin{equation}
     R^*=\Pi_--\Pi_+\neq -R=P_--P_+\,.
 \end{equation}
	Now, $\varphi\,\Lambda \,\varphi^{-1}$ is of the form
	\be
	\varphi\,\Lambda \,\varphi^{-1}=\begin{pmatrix}
		~a_1~ & * & * & * &\dots & *\\[1ex]
		b_1 & a_2 & * & * &\dots & *\\[1ex]
		0 & b_2 & a_3 & * &\dots & *\\[1ex]
		0 & 0 & b_3 &\ddots &\ddots & \vdots\\[1ex]
		\vdots & & & \ddots & \ddots & *\\[1ex]
		0 & 0 & 0 & \dots &b_{N} & ~a_{N+1}~\\
	\end{pmatrix}\,.
	\ee
So, we find
	\be
	\label{form_L_Flaschka}
	L=\Pi_+(\varphi\,\Lambda \,\varphi^{-1})=\begin{pmatrix}
		~a_1~ & b_1 & 0 & 0 &\dots & 0\\[1ex]
		b_1 & a_2 & b_2 & 0 &\dots & 0\\[1ex]
		0 & b_2 & a_3 & b_3 &\dots & 0\\[1ex]
		0 & 0 & b_3 &\ddots &\ddots & \vdots\\[1ex]
		\vdots & & & \ddots & \ddots & b_{N}\\[1ex]
		0 & 0 & 0 & \dots &b_{N} & ~a_{N+1}~\\
	\end{pmatrix}\,,
	\ee
	\ie it is symmetric tridiagonal. Using the Hamiltonian 
	\begin{equation} \label{eq:Ham_1_Toda}
		H_1(L) = -\frac{1}{2} \,{\rm Tr}\, L^2\,, 
	\end{equation}
	we then find 
	\be
	R_+\nabla H_1(L)=P_+(-L)=\begin{pmatrix}
		0 & b_1 & 0 & 0 &\dots & 0\\[1ex]
		~-b_1~ & 0 & b_2 & 0 &\dots & 0\\[1ex]
		0 & -b_2 & 0 & b_3 &\dots & 0\\[1ex]
		0 & 0 & -b_3 &\ddots &\ddots & \vdots\\[1ex]
		\vdots & & & \ddots & \ddots & ~b_{N}~\\[1ex]
		0 & 0 & 0 & \dots &-b_{N} & 0\\
	\end{pmatrix}\,.
	\ee
	A direct substitution in \eqref{EL_Lax} with $k=1$, \ie
	$$    \partial_{t_1}L=[R_\pm \nabla H_1(L),L]$$
	reproduces the open finite Toda lattice equations in Flaschka's coordinates $a_n$, $b_n$
	\be \label{eq:Flaschka_coord}
	\begin{cases}
		\partial_{t_1}a_1=2b_1^2\,,\qquad \partial_{t_1}a_{N+1}=-2b_{N}^2\,,\\[1.5ex]		\partial_{t_1}a_j=2(b_{j}^2-b_{j-1}^2)\,,\qquad j=2,\dots,N\,, \\[1.5ex]
            \partial_{t_1}b_j=b_j(a_{j+1}-a_j)\,,\qquad j=1,\dots,N\,.
	\end{cases}
	\ee
	The next flow generated by the Hamiltonian 
	\begin{equation}\label{eq:Ham_2_Toda}
		H_2(L) = -\frac{1}{3} \,{\rm Tr}\, L^3\,, 
	\end{equation}
	with gradient $\nabla H_2(L)=-L^2$ yields $R_+\nabla H_2(L)=P_+(-L^2)$ as
	\begin{equation*}
		\begin{pmatrix}
			0 & b_1(a_1+a_2) & b_1\,b_2 & 0 &\dots & 0\\[1ex]
			-b_1(a_1+a_2) & 0 & b_2(a_2+a_3) & b_2\,b_3 &\dots & 0\\[1ex]
			-b_1\,b_2 & -b_2(a_2+a_3) & 0 & b_3(a_3+a_4) &\dots & 0\\[1ex]
			0 & -b_2\,b_3 & -b_3(a_3+a_4) &\ddots &\ddots & \vdots\\[1ex]
			\vdots & & & \ddots & \ddots & b_{N}(a_{N}+a_{N+1})\\[1ex]
			0 & 0 & 0 & \dots &-b_{N}(a_{N}+a_{N+1}) & 0\\
		\end{pmatrix}\,.
	\end{equation*}
	The corresponding equations from \eqref{EL_Lax} with $k=2$ read
	\be \label{eq:second_flow}
	\begin{cases}
\partial_{t_2}a_1=2\,b_1^2(a_1+a_2)\,,\qquad \partial_{t_2}a_{N+1}=-2\,b_{N}^2(a_{N}+a_{N+1})\,,\\[1.8ex]
		\partial_{t_2}a_j=2b_j^2(a_j+a_{j+1})-2\,b_{j-1}^2(a_{j-1}+a_j)\,,\qquad j=2,\dots,N, \\[1.8ex] 
		\partial_{t_2}b_1= b_1(a_2^2-a_1^2+b_2^2\,)  \,,\qquad \partial_{t_2}b_{N}=b_{N}(a_{N+1}^2-a_{N}^2-b_{N-1}^2)\,,\\[1.8ex]
		\partial_{t_2}b_j=b_j(a_{j+1}^2-a_j^2+b_{j+1}^2-b_{j-1}^2)\,,\qquad j=2,\dots,N.
	\end{cases}
	\ee
	\paragraph{Lagrangian description:} We need to choose a convenient parametrisation of $\varphi$ since this is the essential ingredient in the Lagrangians $\Lag_k$. We choose
	\begin{equation}
		\varphi=U\,Y \,, 
	\end{equation}
	where $Y={\rm diag}(y_1,\dots,y_{N+1})$ is the diagonal matrix of diagonal elements of $\varphi$ (\ie $y_i=\varphi_{ii}$) and $U=\varphi\,Y^{-1}$ is the upper triangular matrix with $1$ on the diagonal and arbitrary elements $u_{ij}$, $1\! \le \! i\!<j\!\le \!N$. Since $\varphi$ has non-zero determinant, $y_i\neq 0$, $i=1,\dots,N+1$, and with this parametrisation, we find $L$ as in \eqref{form_L_Flaschka} with
	\begin{equation}\label{ab_uy_vars}
		\begin{cases} 
		a_1=\dfrac{y_{2}}{y_1}\,u_{12}\,,\qquad    
		a_{N+1}=-\dfrac{y_{N+1}}{y_{N}}\,u_{N,N+1}\,, \\[2ex]
            a_i=\dfrac{y_{i+1}}{y_i}\,u_{i,i+1}-\dfrac{y_{i}}{y_{i-1}}\,u_{i-1,i}   \,,\qquad i=2,\dots,N\,,\\[2ex]  
            b_i=\dfrac{y_{i+1}}{y_i}\,,\qquad i=1,\dots,N    \,.
            \end{cases} 
	\end{equation}
	Note that $\displaystyle \sum_{j=1}^{N+1}a_j=0$, so we have $2N$ independent variables on the coadjoint orbit ${\cal O}_\Lambda$. We compute the kinetic part of $\Lag_k$ defined in \eqref{kin_part} as
	\begin{equation}
 \begin{split} 
		K_k&= \langle-R^*_-(\varphi \,\Lambda\,\varphi^{-1})\,,\,\partial_{t_k}\varphi \cdot_R \varphi^{-1}\rangle  = -\langle\varphi \,\Lambda\,\varphi^{-1}\,,\,R_-(\partial_{t_k}\varphi \cdot_R \varphi^{-1})\rangle\\[1ex] 
		&=-\langle\varphi \,\Lambda\,\varphi^{-1}\,,\,\partial_{t_k}\varphi \, \varphi^{-1}\rangle=-{\rm Tr}\left(\Lambda\,\varphi^{-1}\,\partial_{t_k}\varphi \right)\,,
  \end{split} 
	\end{equation}
	where in the third step we have used the morphism property of $R_-$
	\begin{equation}
      R_-(\partial_{t_k}\varphi \cdot_R \varphi^{-1})=\partial_{t_k}\varphi_- \, \varphi_-^{-1}
	\end{equation} 
and $\varphi_-=\varphi\in G_-$.
	It remains to express it in terms of our chosen coordinates to get
	\begin{eqnarray}
		K_k =-{\rm Tr}\left(\Lambda\, Y^{-1}\,U^{-1}\,\partial_{t_k}(UY) \right)=-{\rm Tr}\left(Y\,\Lambda\, Y^{-1}\,U^{-1}\,\partial_{t_k}U \right)= - \sum_{j=1}^{N}\frac{y_{j+1}}{y_j}\,\partial_{t_k}u_{j,j+1}\,.
	\end{eqnarray}
	From these results, it becomes apparent that the convenient coordinates are $b_i$ as given in \eqref{ab_uy_vars} and $u_{i}\equiv u_{i,i+1}$, $i=1,\dots,N$. 
	The first two Lagrangians involve the Hamiltonians \eqref{eq:Ham_1_Toda} and \eqref{eq:Ham_2_Toda} respectively, and can now be expressed in the $u_i,b_i$ coordinates as follows
	\begin{align*}
		\Lag_1= K_1 - H_1 &= -\sum_{j=1}^{N}b_j\,\partial_{t_1}u_{j}+ \frac{1}{2}\sum_{j=2}^{N}(b_j\,u_j-b_{j-1}\,u_{j-1})^2+\sum_{j=1}^{N}b_j^2+ \frac{1}{2}b_1^2\,u_1^2+ \frac{1}{2}b_{N}^2\,u_{N}^2\,, \\
		\Lag_2=K_2 - H_2 &= -\sum_{j=1}^{N}b_j\,\partial_{t_2}u_{j}+\frac{1}{3}\sum_{j=2}^{N}(b_j\,u_j-b_{j-1}\,u_{j-1})^3+\frac{1}{3}(b_1\,u_1)^3+\frac{1}{3}(b_{N}\,u_{N})^3\nonumber\\
		& +\sum_{j=2}^{N-1}b_j^2(b_{j+1}\,u_{j+1}-b_{j-1}\,u_{j-1})+b_1^2(b_2\,u_2)-b_{N}^2(b_{N-1}\,u_{N-1}) \,. 
	\end{align*}
	The variation of $\Lag_1$ reads
	\begin{align*}
		\delta\Lag_1&= -\sum_{j=1}^{N}  \partial_{t_1}u_{j}\,\delta b_j+\sum_{j=1}^{N}  \partial_{t_1}b_j\,\delta u_{j}-  \partial_{t_1} \sum_{j=1}^{N}b_j\,\delta u_{j} \\
		&~~~+ \sum_{j=2}^{N}(b_ju_j-b_{j-1}u_{j-1})(u_j\,\delta b_j+b_j\,\delta u_j)- \sum_{j=1}^{N-1}(b_{j+1}u_{j+1}-b_{j}u_{j})(u_j\,\delta b_j+b_j\,\delta u_j)\\
		&~~~+2\sum_{j=1}^{N}b_j\,\delta b_j+ b_1u_1^2\,\delta b_1+b_1^2u_1\,\delta u_1+ b_{N}u_{N}^2\,\delta b_{N} + b_{N}^2u_{N}\,\delta u_{N} \,, 
	\end{align*}
	and gives the following Euler-Lagrange equations
	\begin{equation}
		\begin{cases}
			\partial_{t_1}u_{1}=u_1^2\,b_1-u_1\,(b_{2}\,u_{2}-b_{1}\,u_{1})+2\,b_1\,,\\[1.5ex]
			\partial_{t_1}u_{N}=u_N\,(b_N\,u_N-b_{N-1}\,u_{N-1})+u_N^2\,b_N+2\,b_N\,,\\[1.5ex]
			\partial_{t_1}u_{j}=u_j(b_j\,u_j-b_{j-1}\,u_{j-1})-u_j(b_{j+1}\,u_{j+1}-b_{j}\,u_{j})+2\,b_j\,,\\[1.5ex]
			\partial_{t_1}b_1=b_1(b_2u_2-b_{1}u_{1})-b_1^2\,u_1       \,,~~\partial_{t_1}b_{N}=-b_{N}(b_{N}u_{N}-b_{N-1}u_{N-1})-b_{N}^2\,u_{N}\,,\\[1.5ex]
			\partial_{t_1}b_j=b_j(b_{j+1}u_{j+1}-b_{j}u_{j})-b_j(b_ju_j-b_{j-1}u_{j-1})\,,    
		\end{cases}
	\end{equation}
 for $j=2,\dots,N-1$. It is easy to see that these equations give exactly \eqref{eq:Flaschka_coord} using the identification (see \eqref{ab_uy_vars})
	\begin{equation}	\label{identification}
 \begin{cases}
     a_1=b_1u_1\,,\qquad a_{N+1}=-b_{N}u_{N}\,,\\[1ex] 
     a_j=b_ju_j-b_{j-1}u_{j-1}\,,\qquad j=2,\dots,N\,.
 \end{cases}
	\end{equation}
	This provides a very explicit check that our Lagrangians produce the corresponding Lax equations, in coordinates naturally dictated by the coadjoint orbit construction of the kinetic term, here $u_j$, $b_j$. As recalled in Section \ref{closure_KK}, the kinetic part of a Lagrangian provides the (pullback of the) symplectic form of the model via the Cartan form $\theta_R$. Here, we have (see the total derivative term in $\delta \Lag_1$)
	\begin{equation}
		\label{omega_b_u}
		\theta_R=\sum_{j=1}^{N}b_j\,\delta u_{j}~\Rightarrow ~\Omega_R=\sum_{j=1}^{N}\delta b_j\wedge\delta u_{j}\,.
	\end{equation}
	This shows that the coordinates $u_j$, $b_j$ are canonical. 
	In the present case, choosing $b_j$, $a_j$ for $j=1,\dots,N$, as the coordinates on the coadjoint orbit ${\cal O}_\Lambda$, we can also express the Kostant-Kirillov form explicitly using the formula
	$$u_j=\frac{1}{b_j}\sum_{\ell=1}^j a_{\ell}$$ to get
	\begin{equation}
		\omega_R=\sum_{j=1}^{N}\frac{1}{b_j}\sum_{\ell=1}^j\delta b_j\wedge\delta a_{\ell}\,.
	\end{equation}
	It is instructive to see how the usual Hamiltonian formulation of the open Toda chain in canonical coordinates $q_i,p_i$ is derived from our Lagrangian formulation. From the symplectic form \eqref{omega_b_u}, we deduce the following (canonical) Poisson brackets\footnote{Here we drop the subscript $R$ when referring to the Poisson bracket $\{\,~,~\}_R$ since there will be no confusion with another Poisson bracket.}
	\begin{equation}
		\label{PB_b_u}
		\{b_j\,,\,u_k\}=\delta_{jk}\,,\qquad \{b_j\,,\,b_k\}=0=\{u_j\,,\,u_k\}\,,\qquad j,k=1,\dots,N\,.
	\end{equation}
	The Legendre transformation
	$$\frac{\partial\Lag_1}{\partial (\partial_{t_1} u_j)}=b_j$$
	reproduces, as it should, the Hamiltonian 
	\begin{equation*}
		\sum_{j=1}^{N}\frac{\partial\Lag_1}{\partial (\partial_{t_1} u_j)}\,\partial_{t_1}u_{j}-\Lag_1=   -\,\frac{1}{2}\sum_{j=2}^{N}(b_j\,u_j-b_{j-1}\,u_{j-1})^2-\sum_{j=1}^{N}b_j^2- \frac{1}{2}b_1^2\,u_1^2- \frac{1}{2}b_{N}^2\,u_{N}^2=H_1(L) \,. 
	\end{equation*}
The matrix $L$ for $\mathfrak{sl}(N+1)$ in canonical coordinates $(q_i,\, p_i)$ is given by 
\begin{equation}\label{form_L_pq}
    L = \begin{pmatrix}
    p_1 & \text{e}^{q_1-q_2} & \hspace{2ex} 0 \hspace{2ex} & 0 & 0 & \dots \\[2ex]
    \text{e}^{q_1-q_2} & p_2 & \text{e}^{q_2-q_3} & 0 & 0 & \dots \\[2ex] 
    0 & \text{e}^{q_2-q_3} & p_3 & \text{e}^{q_3-q_4} & 0 & \dots\\[2ex] 
    0 & 0 & \ddots & \ddots & \ddots & & \\[2ex]
    \vdots & & & \text{e}^{q_{N-1}-q_{N}} & p_n &  \text{e}^{q_{N}-q_{N+1}}\!\!\!\! \\[2ex]
    0 & & &  &  \text{e}^{q_{N}-q_{N+1}} & p_{N+1}
    \end{pmatrix} \,,
\end{equation}
and by comparison with \eqref{form_L_Flaschka}, we set the change of variables
	\begin{equation}
         \begin{cases} 
		    \,q_j=\displaystyle\sum_{k=j}^{N}\ln b_k\,,\qquad j=1,\dots,N\,,\\[-.1ex]
		  \,p_j=b_j\,u_j-b_{j-1}\,u_{j-1}\,,\qquad j=2,\dots,N\,,\\[1.5ex]
		  \,p_1=b_1\,u_1\,,\qquad p_{N+1}=-b_{N}\,u_{N}\,.
         \end{cases}
	\end{equation}
From \eqref{PB_b_u}, we deduce by direct calculation
	\begin{equation}
 \begin{split}
		\label{PB_q_p}
		&\{q_j\,,\,p_k\}=\delta_{jk}\,,~~\{q_j\,,\,q_k\}=0=\{p_j\,,\,p_k\}\,,~~j,k=1,\dots,N\,, \\[1ex]
		&\{q_j\,,\,p_{N+1}\}=-1\,,~~\{p_j\,,\,p_{N+1}\}=0\,,~~j=1,\dots,N\,.
  \end{split} 
	\end{equation}
Note that $p_{N+1}$ is redundant for the description of the dynamics since we only need the map $(u_j,b_j)\mapsto (q_j,p_j)$ for $j=1,\dots,N$. This is captured by the fact that the previous relations imply that $C=\displaystyle \sum_{j=1}^{N+1} p_j$ is a Casimir on the $2N$ phase space with coordinates $(q_1,\dots,q_{N},p_1,\dots,p_{N+1})$ and we can work with $C=0$. The coordinate $p_{N+1}$ is still useful to write the Hamiltonian in the compact familiar form as
\begin{equation} \label{eq:Ham_1_Toda_p_q}
		H_1= -\frac{1}{2}\sum_{j=1}^{N+1}p_j^2-\sum_{j=1}^{N-1}\text{e}^{2(q_j-q_{j+1})}-\text{e}^{2q_{N}}\,.
\end{equation}
Hamilton's equations $\partial_{t_1}q_j=\{q_j,H_1\}$, $\partial_{t_1}p_j=\{p_j,H_1\}$ yield
	\begin{equation}\label{eq:Toda_p_q}
		\begin{cases}
			\partial_{t_1}p_1=2\text{e}^{2(q_{1}-q_2)}\,,\qquad \partial_{t_1}p_{N+1}=-2\text{e}^{2q_{N}}\,,\\[1.2ex]
			\partial_{t_1}p_j=2\left(\text{e}^{2(q_j-q_{j+1})}-\text{e}^{2(q_{j-1}-q_j)}\right)\,,\qquad j=2,\dots,N-1\,,\\[1.2ex]
			\partial_{t_1}q_j=p_{N+1}-p_j\,,~~j=1,\dots,N\,.
		\end{cases}
	\end{equation}  
These can be seen to be equivalent to \eqref{eq:Flaschka_coord}, thus completing the Hamiltonian description of the first flow for the open Toda chain, from our Lagrangian formulation. The same analysis can be performed with $\Lag_2$ although the calculations are longer. We simply record here the Euler-Lagrange equations obtained from $\delta \Lag_2$
	\begin{equation}
 \begin{cases}
		\partial_{t_2}u_1= u_1\left((b_{1}\,u_{1})^2-(b_{2}\,u_{2}-b_{1}\,u_{1})^2-b_{2}^2\,\right)  +2\,b_1\,b_{2}\,u_{2}  \,,\\[1.5ex]
		\partial_{t_2}u_{N}=u_N\left((b_{N}\,u_{N}-b_{N-1}\,u_{N-1})^2-(b_{N}\,u_{N})^2+b_{N-1}^2\,\right) -2\,b_{N}\,b_{N-1}\,u_{N-1}\,,\\[1.5ex]
		\partial_{t_2}u_j=u_j\left((b_{j}\,u_{j}-b_{j-1}\,u_{j-1})^2-(b_{j+1}\,u_{j+1}-b_{j}\,u_{j})^2\right)+u_j\,(b_{j-1}^2-b_{j+1}^2) \\[1ex] 
        \hspace{10ex} +2\,b_j\,(b_{j+1}\,u_{j+1}-b_{j-1}\,u_{j-1})\,,\\[1.5ex]
		\partial_{t_2}b_1= b_1\left((b_{2}\,u_{2}-b_{1}\,u_{1})^2-(b_{1}\,u_{1})^2+b_{2}^2\,\right) \,,\\[1.5ex]
		\partial_{t_2}b_{N}= b_N\left((b_{N}\,u_{N})^2-(b_{N}\,u_{N}-b_{N-1}\,u_{N-1})^2-b_{N-1}^2\,\right) \,,\\[1.5ex]
		\partial_{t_2}b_j= b_j\left((b_{j+1}\,u_{j+1}-b_{j}\,u_{j})^2-(b_{j}\,u_{j}-b_{j-1}\,u_{j-1})^2\right)-b_j\,(b_{j-1}^2-b_{j+1}^2)  \,,
        \end{cases} 
	\end{equation}
 for $j=2,\dots,N-1$. We leave it to the reader to check that these correctly reproduce \eqref{eq:second_flow} again using \eqref{identification}. 
		To conclude this example we establish the closure relation for the first two flows, \ie
	$$\partial_{t_2}\Lag_1-\partial_{t_1}\Lag_2=0~~\text{on shell}\,.$$
	We know from our general results that this must hold, so this is simply an explicit check. We know that the kinetic and potential contributions give zero separately, so we split the calculations accordingly. For the potential terms, it is more expedient to use the $a_j,b_j$ coordinates\footnote{Note that for conciseness, we treated the equations for $j=1$ and $j=N$ on the same level as for $j=2,\dots,N-1$ by formally introducing $b_0=0$ and $b_{N+1}$.} and equations  \eqref{eq:Flaschka_coord} and \eqref{eq:second_flow}
	\begin{align*}
		\partial_{t_2}H_1-\partial_{t_1}H_2 &= \partial_{t_1}\left(\sum_{j=1}^{N+1}\frac{a_j^3}{3}+\sum_{j=1}^{N}b_j^2(a_j+a_{j+1})  \right)-\partial_{t_2}\left(\sum_{j=1}^{N+1}\frac{a_j^2}{2}+\sum_{j=1}^{N}b_j^2  \right)\\
		&= \sum_{j=1}^{N+1} 2 a_j^2(b_{j-1}^2-b_j^2) +
  \sum_{j=1}^{N}2b_j^2(a_j^2-a_{j+1}^2+b_{j-1}^2-b_{j+1}^2)\\
		&\hspace{2ex} -\sum_{j=1}^{N+1} 2 a_j(b_{j-1}^2(a_{j-1}+a_j)-b_j^2(a_j+a_{j+1})) -\sum_{j=1}^{N}2b_j^2(a_j^2-a_{j+1}^2+b_{j-1}^2-b_{j+1}^2)\\
		&=\sum_{j=1}^{N+1} 2 (a_ja_{j+1}b_j^2-a_ja_{j-1}b_{j-1}^2 )\\
		&=0\,,
	\end{align*}
	where in the last step we recognise a telescopic sum. For the kinetic terms, we also use the $a_j$, $b_j$ coordinates wherever possible to expedite the calculations
	\begin{align*}
		    & \partial_{t_1}K_2 - \partial_{t_2}K_1 = \sum_{j=1}^N(\partial_{t_1} (b_j\partial_{t_2}u_j)-\partial_{t_2}(b_j\partial_{t_1}u_j )) \\[-.5ex]
		&\hspace{5ex} =\sum_{j=1}^N(\partial_{t_1}((a_{j+1}^2-a_j^2+b_{j+1}-b_{j-1}^2)u_jb_j-2b_j^2(a_j+a_{j+1}) )-\partial_{t_2}((a_{j+1}-a_j)u_jb_j-2b_j^2 ))\\
		&\hspace{5ex}=\sum_{j=1}^N((\partial_{t_1}(a_{j+1}^2-a_j^2+b_{j+1}-b_{j-1}^2)- \partial_{t_2}(a_{j+1}-a_j)) u_jb_j  - 2b_j^2(b_{j+1}^2-b_{j-1}^2))\\
		&\hspace{5ex}=0 \,, 
	\end{align*}
	where in the last step the first term gives zero for each $j$ upon using the equations of motion and the remaining terms form a telescopic sum adding up to zero. 

 A Lagrangian multiform for the Toda chain was first constructed in \cite{PS} using variational symmetries of a given starting Lagrangian, which would be $\Lag_1$ in our context, to construct higher Lagrangian coefficients which constitute a multiform when assembled together. The infinite Toda chain was studied more recently in \cite{SV2} to illustrate the newly introduced theory of Lagrangian multiforms over semi-discrete multi-time. In \cite{PS}, the analogue of our $\Lag_2$ and $\Lag_3$ were constructed. The Noether integrals $J_1$ and $J_2$ (equations (10.11) and (10.12) in \cite{PS}) which constitute the potential part of their Lagrangians are nothing but $H_2(L)$ and $H_3(L)$ with $L$ parametrised as in \eqref{form_L_pq}, up to an irrelevant change of convention $\text{e}^{q_i-q_{i+1}}\to \text{e}^{q_{i+1}-q_i}$ and setting $q_i=x_i$ and $p_i=\dot{x}_i$. The kinetic part of the higher Lagrangians in \cite{PS} involves the so-called alien derivatives which are symptomatic of constructing a multiform from a starting Lagrangian and building compatible higher Lagrangian coefficients. Our construction prevents the problem of alien derivatives altogether, putting all the Lagrangian coefficients on equal footing. This was also achieved previously in the context of field theories in \cite{CS2,CSV}.

\section[Open Toda chain with a skew-symmetric \texorpdfstring{$\boldsymbol{r}$-matrix}{r-matrix}]{Open Toda chain with a skew-symmetric $\boldsymbol{r}$-matrix }\label{Toda_pq}
We now present the same model for the same algebra $\g=\sl(N+1)$ but endowed with a different Lie dialgebra structure. This is based on the Cartan decomposition of $\g$ and leads to a skew-symmetric $r$-matrix. One attractive feature of this setup, that we only illustrate for $\sl(N+1)$, is that it allows for a generalisation to any finite semi-simple Lie algebra, see \cite[Chapter 4]{BBT}. 
\paragraph{Algebraic setup:}	Consider the decomposition 
	\begin{equation}  
 \label{decomp_g}
		\mathfrak{g} =  \mathfrak{n}_+ \oplus \mathfrak{h} \oplus \mathfrak{n}_- \,, 
	\end{equation}
	where $\mathfrak{h}$ is the Cartan subalgebra of diagonal (traceless) matrices and $\mathfrak{n}_\pm$ the nilpotent subalgebra of strictly upper/lower triangular matrices. 
 Let $P_\pm$, $P_0$ be the projectors onto $\mathfrak{n}_\pm$ and $\mathfrak{h}$ respectively, relative to the decomposition \eqref{decomp_g} and set $R=P_+-P_-$. It can be verified that $R$ satisfies the mCYBE. Here $R_{\pm}=\pm(P_\pm+P_0/2)$ and 	
 \begin{equation}
		\mathfrak{g}_{\pm} = \text{Im}(R_{\pm}) = \mathfrak{b}_{\pm} = \mathfrak{h} \oplus \mathfrak{n}_{\pm}\,. 
	\end{equation}
We have the following action of $R_\pm$ on the elements $y \in \mathfrak{h}$ and $w_{\pm} \in \mathfrak{n}_{\pm}$,
	\begin{equation} \label{eq:R_action_BBT}
		R_{\pm} (y) = \pm \frac{1}{2}\, y \,,\qquad 
		R_{\pm} (w_{\pm}) = \pm \, w_{\pm} \,,\qquad 
		R_{\pm} (w_{\mp}) = 0 \,.
	\end{equation}
 Taking the same bilinear form as in \eqref{bilinear_form}, \ie $\langle X, Y\rangle=\Tr(XY)$, we see that 
 \begin{eqnarray}
 \label{adjoints}
     P_\pm^*=P_\mp\,,~~P_0^*=P_0~~\text{so that}~~ R^*=-R\,.
 \end{eqnarray}
 Thus, we have a skew-symmetric $r$-matrix here. 
	For the related Lie groups, we have the following factorisations close to the identity, 
	\begin{equation} \label{eq:phi_BBT}
		\varphi =  \varphi_{+}\,\varphi_-^{-1} \,, \qquad \varphi_{\pm} =  W_{\pm}\,Y^{\pm 1} \,, \qquad Y\in \text{exp}(\mathfrak{h}) \,,\,\, W_{\pm}\in\text{exp}(\mathfrak{n}_{\pm}) \,. 
	\end{equation}
\paragraph{Lax matrix and Lax equations for the first two flows:}	For $\Lambda \in \mathfrak{g}^*\simeq \g$, the expression of $L$ as a coadjoint orbit of $\Lambda$ is given by
	\begin{equation}
		\begin{split} \label{eq:L_mf_toda_complete}
			L &= \text{Ad}^{R*}_{\varphi} \cdot \Lambda = R^{^*}_+( W_+\,Y\,\Lambda\,Y^{-1}\,W_+^{-1}) - R^{^*}_-(W_-\,Y^{-1} \,\Lambda\,Y\,W_-^{-1}) \,.
		\end{split}
	\end{equation}
	We choose $\Lambda$ as in \eqref{eq:Lambda}, emphasising that in this case it is an element of the full $\mathfrak{g}^* \simeq \mathfrak{g}$, and $Y \in \text{exp}(\mathfrak{h})$, $W_{\pm}\in \text{exp}(\mathfrak{n}_{\pm})$ given by
	\begin{align} \label{eq:Y_BBT}
		&\hspace{20ex}Y = \text{diag} \left( \eta_1 \,, \eta_2 \dots, \eta_{N+1} \right) \,, \qquad \det Y = 1\,, \\[2ex]
	 \label{eq:w_pm_BBT}
		&W_- \!=\! {\small \begin{pmatrix} 
			1 & 0  & 0 &\dots & 0\\[1.5ex]
			\omega^-_{2,1} & 1  & 0 &\dots & 0\\[1.5ex]
			\omega^-_{3,1} & \omega^-_{3,2} & 1 &\dots & 0\\[1.5ex]
			\vdots & \ddots & \ddots & \ddots & ~~0~~\\[1.5ex]
			~\omega^-_{N,1}~ & \omega^-_{N,2}  & \dots &\omega^-_{N,N-1} & 1\\
		\end{pmatrix}} , ~~ W_+ \!=\! {\small \begin{pmatrix}
			1 & \omega^+_{1, 2} & \omega^+_{1, 3}  &\dots & \omega^+_{1, N}\\[1.5ex]
			0 & 1 & \omega^+_{2, 3}  &\dots & \omega^+_{2 ,N}\\[1.5ex]
			0 & 0 & 1 &\ddots & \vdots\\[1.5ex]
			\vdots & &  & \ddots & ~\omega^+_{N-1, N}~\\[1.5ex]
			~~0~~ & 0  & \dots & 0 & 1\\
		\end{pmatrix}} .
	\end{align}
	From \eqref{adjoints}, we deduce that $R^*_{\pm}=\pm(P_\mp+P_0/2)$ so that
	\begin{equation}
			R_{\pm}^* (y) = \pm \frac{1}{2}\, y \,,\qquad 
			R_{\pm}^* (w_{\pm}) = 0 \,,\qquad  
			R_{\pm}^* (w_{\mp}) =  \pm w_{\mp} \,,
	\end{equation}
 for $y \in \mathfrak{h}\,,~w_{\pm} \in \mathfrak{n}_{\pm}$.  Let us introduce the variables $(w_i,z_i)$, defined as  
	\begin{equation} \label{eq:variables_w_z}
		w_i = \frac{\omega^+_{i,i+1} - \omega^-_{i+1,i}}{2}\,, \qquad z_i=2\,\frac{\eta_{i+1}}{\eta_i} \,,
	\end{equation}
from which we determine the Flaschka coordinates as 
	\begin{equation} \label{eq:BBT_Flash_coord}
 \begin{cases}
     	a_i = \dfrac{w_i\,z_i-w_{i-1}\, z_{i-1}}{2} \,, \qquad i = 2, \dots, N-1\,, \\[2ex]
      a_1 = \dfrac{w_1\,z_1}{2} \,, \qquad a_{N+1} = -\dfrac{w_N\,z_N}{2} \,,\\[2ex]
      b_i = \dfrac{z_i}{2} \,,\qquad  i=1, \dots, N\,.  
 \end{cases}
\end{equation}
The evaluation of \eqref{eq:L_mf_toda_complete} in those coordinates reproduces the tridiagonal form as in \eqref{form_L_Flaschka}. One can then check that the equations for the first two flows \eqref{eq:Flaschka_coord} and \eqref{eq:second_flow} in the previous section derive from the Lax equation 
\begin{equation*}
    \partial_{t_k} L = \left[\, R_{+}(\nabla H_k(L)),L \,\right] \,, \qquad k = 1,2 \,,
\end{equation*}
where the Hamiltonians are taken as 
\begin{equation}
    H_1(L) = \Tr\,(L^2)\,, \qquad  H_2(L) = \frac{2}{3} \Tr\,(L^3)\,,
\end{equation}
and we recall that $R_+=P_+ +P_0/2$ here.

\paragraph{Lagrangian description:} The Lagrangian multiform takes the form
	\begin{equation} \label{eq:Lagrangian_Toda_a_la_BBT}
		\Lag = \sum_k \Lag_{k} \, dt_k = \sum_k \left( {K}_k(L)  - {H}_k(L) \right) dt_k \,, 
	\end{equation}
for $L \in \mathcal{O}_{\Lambda} \,, \varphi \in G_R\,$, with the kinetic and the potential terms given by
	\begin{equation} \label{eq:kinetic_potential_terms_BBT}
		{K}_k(L) = \text{Tr} \,( L \,  \partial_{t_k} \varphi \cdot_R \varphi^{-1} )\,, \qquad  {H}_k(L) = \frac{2}{k+1} \,\text{Tr}\,(L^{k+1}\,)\,, 
	\end{equation} 
 respectively. As in the previous section, the kinetic term will allow us to recognise natural canonical variables of the system in this description. Recalling \eqref{eq:phi_BBT}, \eqref{eq:Y_BBT}, \eqref{eq:w_pm_BBT} and \eqref{eq:variables_w_z}, we find
	\begin{equation}
 \label{eq:BBT_kinetic_term}
 \begin{split} 
			{K}_k(L)  &=  \,\text{Tr}\,(\, \Lambda \, \varphi^{-1} \cdot_R \partial_{t_k} \varphi  \,) \\[1.2ex]
			&= \,  \,\text{Tr}\,(\, \Lambda \, \varphi^{-1}_+ \cdot \partial_{t_k} \varphi_+  \, ) -  \,\text{Tr}\,(\, \Lambda \, \varphi^{-1}_- \cdot \partial_{t_k} \varphi_-  \, ) \\
   &=\sum_{i=1}^N \dfrac{\eta_{i+1}}{\eta_i} \, \partial_{t_k} \omega_{i,i+1}^+ -\sum_{i=1}^N \dfrac{\eta_{i+1}}{\eta_i} \, \partial_{t_k} \omega_{i+1,i}^-  \\
   &=\sum_{i=1}^N z_i \, \partial_{t_k} w_i \,.
   \end{split} 
	\end{equation} 
	The $k$-th Lagrangian coefficient expressed in terms of the coordinates $(w_i,z_i)$ reads
	\begin{equation}
		\Lag_k = {K}_k - {H}_k = \sum_{i=1}^N     z_i\, \partial_{t_k}w_i - {H}_k  \,,
	\end{equation}
	with $(w_i,z_i)$ being canonical coordinates, and for $k=1,2$,
	\begin{equation*}
		\begin{split}
			{H}_1(L) &=  \,\text{Tr}\,(L^2) = \sum_{i=1}^{N} \frac{1}{2}\left(z_i^2+w_i^2\,z_i^2 \right)- \sum_{i=1}^{N-1} \frac{1}{2}\,w_i\,z_i\,w_{i+1}\,z_{i+1}    \,, \\
			{H}_2(L) &= \dfrac{2}{3} \,\text{Tr}\,(L^3) = \sum_{i=1}^{N} \frac{1}{4}\left(z_i^2\,w_{i+1}\,z_{i+1}-z_{i+1}^2\,w_i\,z_i + w_i^2\,z_i^2 \,w_{i+1}\,z_{i+1} -w_i\,z_i\,w_{i+1}^2\,z_{i+1}^2 \right)  \,.
		\end{split}
	\end{equation*}
To obtain the latter expressions of the Hamiltonians, it suffices to use the following expression for $L$ in the $w_i,z_i$ coordinates
\begin{equation} 
		L =  \frac{1}{2}\begin{pmatrix}
			~w_1\,z_1 & \,z_1 & 0 & 0 &\dots & 0\\[2ex]
			\,z_1 & w_2\,z_2-w_1\,z_1 & \,z_2 & 0 &\dots & 0\\[2ex]
			0 & \,z_2 & w_3\,z_3-w_2\,z_2 & \,z_3 &\dots & 0\\[2ex]
			0 & 0 & \,z_3 &\ddots &\ddots & \vdots\\[2ex]
			\vdots & & & \ddots & \ddots & \,z_N\\[2ex]
			0 & 0 & 0 & \dots & \,z_N & -w_N\,z_N ~
		\end{pmatrix} \,.
  \label{eq:L_BBT_R_skew}
	\end{equation}
 Note that we can just as easily determine the higher Hamiltonians and hence the higher Lagrangian coefficients $\Lag_k$, although the expressions become long. 
The variation $\delta \Lag_1$ yields the following Euler-Lagrange equations
	\begin{equation}
		\begin{cases} \label{eq:BBT_t1_w_z_R}
  \partial_{t_1} w_1 = z_1 - \dfrac{w_1}{2} \,  \left((w_{2}\, z_{2} - w_{1}\, z_{1}) -  w_{1}\, z_{1}  \right)  \,, \\[2ex] 
  \partial_{t_1} w_N =   z_N - \dfrac{w_N}{2}\,  \left( - w_{N}\, z_{N} - ( w_{N}\, z_{N} - w_{N-1}\, z_{N-1}   ) \right)  \,, \\[2ex]
\partial_{t_1} w_i =  z_i - \dfrac{w_i}{2}\,  \left( (w_{i+1}\, z_{i+1} -w_{i}\, z_{i}  ) -(w_{i}\, z_{i} - w_{i-1}\, z_{i-1} ) \right)\,,  \\[2ex]
			\partial_{t_1} z_1 = \dfrac{z_1}{2} \,  \left((w_{2}\, z_{2} - w_{1}\, z_{1}) - w_{1}\, z_{1} \right)\,, \\[2ex] 
   \partial_{t_1} z_N =  \dfrac{z_N}{2} \,  \left( -w_{N}\, z_{N} +(w_{N}\, z_{N} - w_{N-1}\, z_{N-1}) \right)\,,\\[2ex]
        \partial_{t_1} z_i =  \dfrac{z_i}{2} \,  \left((w_{i+1}\, z_{i+1}-  w_{i}\, z_{i}) - ( w_{i}\, z_{i} - w_{i-1}\, z_{i-1} )  \right)\,,
		\end{cases}
	\end{equation}
for $i=2,\,\dots,\,N-1$, while the variation of $\Lag_2$ gives the Euler-Lagrange equations for the second flow
\begin{equation} 
    \begin{cases}
        \partial_{t_2} z_1 = \dfrac{z_1}{4} \big( (w_{2}\,z_{2}-w_1\,z_1)^2 -(w_{1}\,z_{1})^2 +  z_{2}^2  \big) \,,  \\[2ex]
        \partial_{t_2} z_N = \dfrac{z_N}{4} \left( (-w_N\,z_N)^2 - (w_N\,z_N - w_{N-1}\,z_{N-1})^2 -z_{N-1}^2 \right) \,, \\[2ex]
        \partial_{t_2} z_i = \dfrac{z_i}{4}\big( (w_{i+1}\,z_{i+1}-w_i\,z_i)^2 -(w_{i}\,z_{i}-w_{i-1}\,z_{i-1})^2 +  z_{i+1}^2- z_{i-1}^2  \big)\,,\\[2ex] 
        \partial_{t_2} w_1 =  \dfrac{z_1}{2}\big(w_{2}\,z_{2}\big) -\dfrac{w_1}{4} \Big( (w_{2}\,z_{2}-w_1\,z_1)^2 -(w_{1}\,z_{1})^2 +  z_{2}^2  \Big) \,,\\[2ex]
        \partial_{t_2} w_N =  \dfrac{z_N}{2}\big(w_{N-1}\,z_{N-1}\big)  -\dfrac{w_N}{4} \Big( (w_N\,z_N)^2 -(w_{N}\,z_{N}-w_{N-1}\,z_{N-1})^2- z_{N-1}^2  \Big) \,,\\[2ex]
        \partial_{t_2} w_i = \dfrac{z_i}{2}\big((w_{i+1}\,z_{i+1}-w_i\,z_i)-(w_i\,z_i-w_{i-1}\,z_{i-1})\big)  -\dfrac{w_i}{4} \Big( (w_{i+1}\,z_{i+1}-w_i\,z_i)^2 \\[2ex] 
        \hspace{15ex} -(w_{i}\,z_{i}-w_{i-1}\,z_{i-1})^2 +  z_{i+1}^2- z_{i-1}^2  \Big) \,,
    \end{cases}
\end{equation}
with $i=1,\dots,N-1$. One can check that these reproduce the more familiar equations \eqref{eq:Flaschka_coord}-\eqref{eq:second_flow} in Flaschka coordinates, using \eqref{eq:BBT_Flash_coord}.
As in the previous section, we can relate our results with the Hamiltonian formulation of the Toda chain in traditional canonical coordinates $(q_i,p_i)$. With 
\begin{equation}
    \theta_R = \sum_{i=1}^N z_i\,\delta w_i ~~ \implies ~~ \{z_i,w_j\} = \delta_{ij} \,, \qquad \{w_i,w_j\} = 0 = \{z_i,z_j\} \,,
\end{equation}
we see that it suffices to set
\begin{equation}
\begin{cases}
    q_i = \displaystyle \sum_{\ell=i}^N \ln \dfrac{z_i}{2}\,, \qquad i = 1, \dots, N \\[4ex]
    p_i = \dfrac{w_i\,z_i-w_{i-1}\,z_{i-1}}{2}\,, \qquad  i = 2, \dots, N \\[2ex]
    p_1 = \dfrac{w_1\,z_1}{2} \,, \qquad p_{N+1} = -\dfrac{w_N\,z_N}{2} \,.
\end{cases}
\end{equation}

The explicit verification of the closure relation in the first two flows is completely analogous to that given at the end of the previous section.

\section{Rational Gaudin model}\label{Gaudin}

Gaudin models are a general class of integrable systems associated with Lie algebras with a nondegenerate invariant bilinear form. Unlike the case of the open Toda lattice, the Lax matrix of a Gaudin model is a Lie algebra-valued rational function of a variable $\lambda$, the spectral parameter. We will only look at finite Gaudin models here, which describe certain spin chains and mechanical systems. To accommodate this, we need to extend our construction to certain infinite-dimensional Lie algebras.

Before diving into the required algebraic machinery, it is useful to recall the usual presentation of the equations of the model that we are aiming at describing variationally. We do so in the simplest case of a rational Lax matrix with simple poles. Many generalisations are known, including elliptic and non-skew-symmetric cases \cite{Skr}. The Lax matrix of a (rational) Gaudin model associated with a finite Lie algebra $\g$ and a set of points $\zeta_r \in \mathbb{C}$ $(r=1, \ldots, N)$ and the point at infinity is given by the following $\g$-valued rational function
\begin{equation}
\label{Lax_matrix_Gaudin}
    L(\lda)=  \sum_{r=1}^N \frac{X_r}{\lambda - \zeta_r} + X_\infty\,,~~X_1,\dots,X_N,X_\infty\in\g\,.
\end{equation}
The coefficients $H^n_{k, r}$ of $(\lda-\zeta_r)^{-n-1}$, $n\ge 0$, in $\Tr(L(\lda)^{k+1})/{(k+1)}$, $ k\geq 1$, 
are Hamiltonians in involution (with respect to the Sklyanin bracket). Of course, only a finite subset of them are independent and generate nontrivial flows. In the rest of this paper, we will focus on the coefficients corresponding to $n=0$ and drop the extra label by simply writing $H^0_{k, r}=H_{k, r}$. 
The most famous ones are the quadratic Gaudin Hamiltonians which are the coefficients $H_{1, r}$ in
\begin{equation}
    \frac{1}{2}\Tr(L(\lda)^2) = \frac{1}{2}\sum_{r=1}^N\frac{\Tr (X_r^2)}{(\lambda - \zeta_r)^2} + \sum_{r=1}^N\frac{H_{1, r}}{\lambda - \zeta_r} + \frac{1}{2}\Tr(X_\infty^2)\,,
\end{equation}
and read
\begin{eqnarray}
\label{Gaudin_Ham1}
H_{1, r}=\sum_{s\neq r}\frac{\Tr (X_rX_s)}{\zeta_r-\zeta_s}+\Tr(X_rX_\infty)\,,~~r=1,\dots,N\,.
\end{eqnarray}
The functions $H_{k, r}$ give rise to a hierarchy of compatible equations in Lax form
\begin{eqnarray}\label{generating_Gaudin_eqs}
    \partial_{t_k^r}L(\lda)=\left[M_{k, r}(\lda)\,,\,L(\lda)\right]\,.
\end{eqnarray}
For $k=1$, we have
\begin{equation}
\label{first_M}
    M_{1, r} = -\frac{X_r}{\lda - \zeta_r}\,,
\end{equation}
and \eqref{generating_Gaudin_eqs} gives the following equations of motion for the degrees of freedom in $X_1,\dots,X_N,X_\infty$
\begin{align}
\label{Gaudin_1}
   & \partial_{t_1^r}X_s =  \frac{[X_r\,,\,X_s]}{\zeta_r-\zeta_s}\,,\qquad s\neq r\,,\\[1ex]
   \label{Gaudin_2}
   & \partial_{t_1^r}X_r =  -\sum_{s\neq r}\frac{[X_r\,,\,X_s]}{\zeta_r-\zeta_s}- [X_r\,,\,X_\infty] \,,\\[1ex]
\label{Gaudin_3}   & \partial_{t_1^r}X_\infty = 0\,.
\end{align}
We proceed to derive a Lagrangian multiform description of the set of equations \eqref{Gaudin_1}-\eqref{Gaudin_3}, as well as those corresponding to the next higher Hamiltonians with $k=2$. In principle, we could also include all higher Hamiltonians, but the first two levels are enough to illustrate our method. To do so, we need to be able to interpret $L(\lda)$ as living in a coadjoint orbit and use the framework of Lie dialgebras. This is described in \cite[Lecture 3]{STS} which we now review and adapt to our purposes.

\paragraph{Algebraic setup:} Let $Q = \{\zeta_1, \ldots, \zeta_N, \infty\} \subset \mathbb{C}P^1$ be a finite set of points in $\mathbb{C}P^1$ including the point at infinity, and denote by $\mathcal{F}_Q(\g)$ the algebra of $\g$-valued rational functions in the formal variable $\lambda$ with poles in $Q$. Further, define the local parameters
\begin{equation}
  \lambda_r = \lambda - \zeta_r,\quad \zeta_r \neq \infty\,,\qquad  \lambda_\infty = \frac{1}{\lambda}\,,
\end{equation}
and let $S=\{1,\dots,N,\infty\}$. This is to be used as an index set, so $\infty$ is viewed here purely as a label for an index, not as the point at infinity. For each $r\in S$, consider the algebra $\Lg_r$ of formal Laurent series in variable $\lambda_r$ with coefficients in $\g$,
\begin{equation}
  \Lg_r = \g\otimes \mathbb{C}((\lambda_r))\,,
\end{equation}
with Lie bracket
\begin{equation}
    [X\lambda_r^i, Y\lambda_r^j] = [X, Y] \lambda_r^{i+j},\qquad X,Y \in \g\,.
\end{equation}
We have the vector space decomposition into Lie subalgebras
\begin{equation}
    \Lg_r = \Lg_{r+} \oplus \Lg_{r-}\,,
\end{equation}
where
\begin{equation}
    \Lg_{r+}=\g\otimes \mathbb{C}[[\lambda_r]]\,,\quad r\neq \infty\,,\qquad \Lg_{\infty +}=\g\otimes \lda_\infty\mathbb{C}[[\lambda_\infty]]\,,
\end{equation}
and 
\begin{equation}
    \Lg_{r-}=\g\otimes \lambda_r^{-1}\mathbb{C}[\lambda_r^{-1}]\,,\quad r\neq \infty\,,\qquad \Lg_{\infty -}=\g\otimes \mathbb{C}[\lambda_\infty^{-1}]\,.
\end{equation}
In other words, $\Lg_{r+}$ is the algebra of formal Taylor series in $\lda_r$ (without constant term when $r=\infty$) and $\Lg_{r-}$ is the algebra of polynomials in $\lda_r^{-1}$ without constant term (except when $r=\infty$). Associated with this decomposition, we have projectors $P_{r\pm}$ onto $\Lg_{r\pm}$ relative to $\Lg_{r\mp}$.
Let us now consider $\Lg_Q$ defined as the following direct sum of Lie algebras
\begin{equation}
  \Lg_Q = \bigoplus_{r\in S} \Lg_r\,.
\end{equation}
The above decompositions yield the decomposition of $\Lg_Q$ as
\begin{equation}
\label{decomp_Lg_Q} 
\Lg_{Q}=\Lg_{Q+} \oplus \Lg_{Q-}~~\text{with}~~ \Lg_{Q+} = \bigoplus_{r\in S} \Lg_{r+}  ~~ \text{and} ~~ \Lg_{Q-} = \bigoplus_{r\in S} \Lg_{r-}\,,
\end{equation}
and the related projectors $P_\pm$. 
Although useful, as we will see below, the decomposition \eqref{decomp_Lg_Q} is not what we need to interpret \eqref{generating_Gaudin_eqs} within the Lie dialgebra setup. So, let us consider the map 
\begin{equation}\label{eq:embed-rat-loop}
  \iota_\lambda: \mathcal{F}_Q(\g) \to \Lg_Q, \qquad  f \mapsto \left(\iota_{\lambda_1} f, \ldots, \iota_{\lambda_N} f, \iota_{\lambda_\infty} f \right),
\end{equation}
where $\iota_{\lambda_r} f \in \Lg_r$ is the formal Laurent series of $f \in \mathcal{F}_Q(\g)$ at $\zeta_r \in \mathbb{C}P^1$ and $\iota_{\lambda_\infty} f \in \Lg_r$ that of $f \in \mathcal{F}_Q(\g)$ at $\zeta_\infty$. This is an embedding of Lie algebras. In addition, we have the vector space decomposition 
\begin{equation}
\label{decomp2}
  \Lg_Q = \Lg_{Q+} \oplus \iota_\lambda \mathcal{F}_Q(\g).
\end{equation}
 Let us introduce the projectors $\Pi_\pm$ associated with this decomposition. They are different from $P_\pm$ related to \eqref{decomp_Lg_Q}. The following relation is useful in practical calculations (see below when computing gradients or in \eqref{formula_orbit_elem})
 \begin{eqnarray}
    \Pi_-(X)=\iota_\lda \circ \pi_\lda \circ P_-(X)\,,~~X\in\Lg_Q\,,
\end{eqnarray}
where the map $\pi_\lda:\Lg_{Q-}\to   \mathcal{F}_Q(\g)$ given by
\begin{eqnarray}
\pi_\lda\,(Y_1(\lda_1),\dots, Y_N(\lda_N),Y_\infty(\lda_\infty))  =\sum_{r\in S}Y_r(\lda_r)
\end{eqnarray}
puts elements of $\Lg_{Q-}$ and $\mathcal{F}_Q(\g)$ in one-to-one correspondence.
This amounts to decomposing an $f\in \mathcal{F}_Q(\g)$ into the sum of its partial fractions $Y_r(\lda_r)$.
 
We define the $r$-matrix we need as
\begin{equation}
    R=\Pi_+-\Pi_-
\end{equation}
and use it to define on $\Lg_Q$ the structure of a Lie dialgebra to which we will apply the results of that theory. Since we want to work with rational fractions which we have naturally embedded as  $\iota_\lambda \mathcal{F}_Q(\g)$ into $\Lg_Q$, we need to identify the dual space this corresponds to, so that we can identify the coadjoint action and its orbits appropriately. 
The nondegenerate invariant symmetric bilinear form on $\g$, given by $(X, Y) \mapsto \Tr(XY)$, can be used to define a nondegenerate invariant symmetric bilinear form on  $\Lg_Q$ by setting
\begin{equation}\label{eq:bf-lg-q}
  \langle X, Y \rangle = \sum_{r\in S} \mathrm{res}_{\lambda_r = 0} \Tr(X_r(\lambda_r) Y_r(\lambda_r))\,.
\end{equation}
Both $\Lg_{Q+}$ and $\iota_\lambda \mathcal{F}_Q(\g)$ are Lie subalgebras which are (maximally) isotropic with respect to the bilinear form $\langle \, ~~,~~ \rangle$ in \eqref{eq:bf-lg-q}. This tells us that 
\begin{equation}
\label{identification2}
  \Lg_{Q+}^{*} \simeq \iota_\lambda \mathcal{F}_Q(\g)\,,
\end{equation}
so that elements of $\Lg_{Q+}^{*}$ are those we should work with if we want to deal with Lax matrices which are rational fractions of the spectral parameter. Accordingly, coadjoint orbits of $\LG_{Q+}$ in $\Lg_{Q+}^{*}$ are the natural arena for the description of Gaudin Lax matrices. $\LG_{Q+}$
is the group associated with the algebra $\Lg_{Q+}$, with elements of the form
\begin{equation}
    \varphi_+=\left(\varphi_{1+}(\lda_1),\dots,\varphi_{N+}(\lda_N),\varphi_{\infty+}(\lda_\infty) \right)\,.
\end{equation}
Each component $\varphi_{r+}(\lda_r)$ is a Taylor series in the local parameter $\lda_r$ with values in $G$ whose Lie algebra is $\g$,
\begin{equation}
    \varphi_{r+}(\lda_r)=\sum_{n=0}^\infty \phi_r^{(n)}\lda_r^n\,,\quad r\neq \infty\,,\qquad \varphi_{\infty+}(\lda_\infty)=\1+\sum_{n=1}^\infty\phi_\infty^{(n)} \lda_\infty^n\,.
\end{equation}
As always, in practice we use the identification \eqref{identification2} (identifying the action and coadjoint actions accordingly) and the (co)adjoint orbit of an element $f\in \iota_\lambda \mathcal{F}_Q(\g)$ can be seen to be given by the elements
\begin{eqnarray}
\label{formula_orbit_elem}
    F=\Pi_-(\text{Ad}_{\varphi_+}\cdot f)\equiv \iota_\lda L\,.
\end{eqnarray}
In \eqref{formula_orbit_elem}, the adjoint action of $\varphi_+$ on $f$ is defined component-wise
\begin{equation}
    (\text{Ad}_{\varphi_+}\cdot f)_r(\lda_r)=\varphi_{r+}(\lda_r)\, f_r(\lda_r)\,\varphi_{r+}(\lda_r)^{-1}\,,~~r\in S\,.
\end{equation}
Thus, we have a construction that allows us to interpret a rational Lax matrix $L(\lda)$ as an element of a (co)adjoint orbit and recast \eqref{generating_Gaudin_eqs} as the following Lax equation in $\iota_\lambda \mathcal{F}_Q(\g)$
\begin{eqnarray}
\label{embedded_Lax}
    \partial_{t_k^r}\iota_\lda L=[R_\pm\nabla H_{k, r}(\iota_\lda L),\iota_\lda L]\,,
\end{eqnarray}
where $H_{k, r}$ are the following invariant functions on $\Lg_Q$
\begin{equation}
\label{Hamiltonians}
    H_{k, r}:\, X\in\Lg_Q\mapsto \res_{\lda_r=0}\frac{\Tr(X_r(\lda_r)^{k+1})}{k+1}\,,~~k\ge 1\,.
\end{equation}
We now apply the described framework to show how \eqref{generating_Gaudin_eqs} is derived in this context for $k=1,2$. Then we construct explicitly the corresponding Lagrangian coefficients of our multiform and check that their Euler-Lagrange equations produce the correct equations of motion.

\paragraph{Lax matrix and Lax equations for the first two flows:}   
Let us choose 
\begin{equation}
\Lambda(\lda) =\sum_{r=1}^N\frac{\Lambda_r}{\lda-\zeta_r}+\Omega \,, 
\end{equation}
and apply \eqref{formula_orbit_elem} to $f=\iota_\lda \Lambda$
to get 
\begin{equation}\label{orbit_L}
\begin{split}
\iota_\lda L=\Pi_-\left(\text{Ad}_{\varphi_+}\cdot \iota_\lda \Lambda\right)&= \iota_\lda\circ \pi_\lda\circ P_-\left(\text{Ad}_{\varphi_+}\cdot \iota_\lda \Lambda\right)\\[1ex] 
&=   \iota_\lda\circ \pi_\lda\left(\frac{\phi_1^{(0)}\,\Lambda_1\,(\phi_1^{(0)})^{-1}}{\lda-\zeta_1},\dots, \frac{\phi_N^{(0)}\,\Lambda_N\,(\phi_N^{(0)})^{-1}}{\lda-\zeta_N},\Omega \right)  \\[.5ex] 
 &\equiv\iota_\lda\circ \pi_\lda \left(\frac{A_1}{\lda-\zeta_1},\dots, \frac{A_N}{\lda-\zeta_N},\Omega \right) \\
&=\iota_\lda \left(\sum_{r=1}^N\frac{A_r}{\lda-\zeta_r}+\Omega\right)\,.
\end{split} 
\end{equation}
This is the desired form of \eqref{Lax_matrix_Gaudin} where now each $X_r$ is of the form $A_r=\phi_r^{(0)}\,\Lambda_r\,(\phi_r^{(0)})^{-1}$ with $\Lambda_r\in\g$ fixed and $\phi_r^{(0)}$ containing the dynamical degrees of freedom. This is the (co)adjoint description required to compute our Lagrangian coefficients, see below. 

Next, we derive the Lax equations in $\iota_\lambda \mathcal{F}_Q(\g)$ associated with the functions $H_{k, r}(\iota_\lda L)$ for $k=1, 2$. The gradient of $H_{k, r}$ at the point $\iota_\lda L$ is defined as the element of $\Lg_Q$ satisfying
\begin{equation}
\label{def_gradient}
    \lim_{\epsilon\to 0}\frac{H_{k, r}(\iota_\lda L+\epsilon \eta)-H_{k, r}(\iota_\lda L)}{\epsilon}=\langle \eta\,,\, \nabla H_{k, r}(\iota_\lda L)\rangle\,,
\end{equation}
for all $\eta \in \Lg_Q$. It is enough for our purposes to calculate $R_-(\nabla H_{k, r}(\iota_\lda L))$, therefore, we can restrict $\eta$ to $\Lg_{Q+}$. Thus, writing
\begin{equation}
    \nabla H_{k, r}(\iota_\lda L)=N^{(k)}+\iota_\lda h^{(k)}\,,\qquad N^{(k)}\in\Lg_{Q+}\,,~~h^{(k)}(\lda)\in \mathcal{F}_Q(\g)\,,
\end{equation}
recalling that $\Lg_{Q+}$ and $\iota_\lda \mathcal{F}_Q(\g)$ are isotropic with respect to the bilinear form in \eqref{eq:bf-lg-q}, \eqref{def_gradient} becomes
\begin{equation}
    \res_{\lda_r=0}\Tr\left( \eta_r \,\iota_{\lda_r} L^k \right)=\sum_{s \in S}\res_{\lda_s=0}
    \Tr \left( \eta_s \,\iota_{\lda_s} h^{(k)} \right),
\end{equation}
for any $\eta_s \in \Lg_{s+}$, $s\in S$, implying 
\begin{align}
    ( \iota_{\lda_s}h^{(k)})_- &= 0\,,~~\forall\,s\neq r\,,\\[1ex] 
    (\iota_{\lda_r}h^{(k)})_- &= (\iota_{\lda_r} L^k)_-\,.
\end{align}
This means that the rational function $h^{(k)}(\lda)$ has a nonzero principal only at $\zeta_r$ which equals $(\iota_{\lda_r} L^k)_-$, so
\begin{equation}
    h^{(k)}(\lda) = (\iota_{\lda_r} L^k)_-(\lda)\,,
\end{equation}
and we find
\begin{equation}
    R_-(\nabla H_{k, r}(\iota_\lda L)) = -\Pi_-(\nabla H_{k, r}(\iota_\lda L))= -\iota_\lda h^{(k)} = -\iota_\lda \left( (\iota_{\lda_r} L^k)_- \right)\,.
\end{equation}
    For $k=1, 2$, this gives us
\begin{eqnarray}
    R_-(\nabla H_{1, r}(\iota_\lda L)) = -\iota_\lda \frac{A_r}{\lda-\zeta_r}\,,
\end{eqnarray}
and
\begin{eqnarray}
    R_-(\nabla H_{2, r}(\iota_\lda L)) = -\iota_\lda \left( \frac{A_r^2}{(\lda-\zeta_r)^2} +\sum_{s \neq r} \frac{A_rA_s+A_sA_r}{(\lda - \zeta_r)(\zeta_r - \zeta_s)} + \frac{A_r\Omega +\Omega A_r}{\lda - \zeta_r} \right)\,, 
\end{eqnarray}
respectively.
As a consequence, we find the Lax equations for the two levels of flows as
\begin{eqnarray}
    \partial_{t_1^r}\iota_\lda L=\left[-\iota_\lda \frac{A_r}{\lda-\zeta_r}\,,\,\iota_\lda L\right]\,,
\end{eqnarray}
\begin{eqnarray}\label{eq:gaudin-Laxeqtwo}
    \partial_{t_2^r}\iota_\lda L=\left[-\iota_\lda \left(\frac{A_r^2}{(\lda-\zeta_r)^2} +\sum_{s \neq r} \frac{A_rA_s+A_sA_r}{(\lda - \zeta_r)(\zeta_r - \zeta_s)} + \frac{A_r\Omega +\Omega A_r}{\lda - \zeta_r} \right)\,,\,\iota_\lda L\right]\,.
\end{eqnarray}
Explicitly, they yield the following equations on the $A_s$,
\begin{equation}
\label{flow1}
    \begin{split}
  &   \partial_{t_1^r}A_s=  \frac{[A_r\,,\,A_s]}{\zeta_r-\zeta_s}\,,~~s\neq r\,,\\
   & \partial_{t_1^r}A_r=  -\sum_{s\neq r}\frac{[A_r\,,\,A_s]}{\zeta_r-\zeta_s}- [A_r\,,\,\Omega]\,,
        \end{split}
\end{equation}
thus reproducing \eqref{Gaudin_1}-\eqref{Gaudin_2} (\eqref{Gaudin_3} is automatic here since $\Omega$ is a constant element of $\g$), and 
\begin{equation}
\label{flow2}
    \begin{split}
        \partial_{t_2^r}A_s &= - \frac{[A_r^2\,,\,A_s]}{(\zeta_r-\zeta_s)^2} + \sum_{{s^\prime} \neq r} \frac{[A_r A_{s^\prime} + A_{s^\prime}A_r, A_s]}{(\zeta_r-\zeta_s)(\zeta_r-\zeta_{s^\prime})} + \frac{[A_r\Omega + \Omega A_r\,,\,A_s]}{\zeta_r-\zeta_s} \,,~~ s \neq r\,,\\
        \partial_{t_2^r}A_r &= \sum_{s \neq r} \frac{[A_r^2\,,\,A_s]}{(\zeta_r-\zeta_s)^2} -\sum_{s\neq r}\sum_{{s^\prime}\neq r}\frac{[A_r\,,\,A_s A_{s^\prime}]}{(\zeta_r-\zeta_s)(\zeta_r-\zeta_{s^\prime})} - \sum_{s \neq r} \frac{[A_r\,,\, A_s \Omega + \Omega A_s]}{\zeta_r-\zeta_s} - [A_r\,,\, \Omega^2]\,.
        \end{split}
\end{equation}

\paragraph{Lagrangian description:} Applying our formula for the Lagrangian coefficients, we obtain the following multiform
on the orbit of $\Lambda(\lda)$, with elements $\iota_\lda L$ given in \eqref{orbit_L},
\begin{equation}
    \Lag = \sum_{k=1}^{N} \sum_{r \in S} \Lag_{k, r}\, dt_k^r \,, 
\end{equation}
with
\begin{equation}
    \Lag_{k, r} = \sum_{s\in S} \res_{\lambda_s = 0} \Tr\left(\iota_{\lda_s}L\, \partial_{t_k^r}\varphi_{s+}(\lda_s)\,\varphi_{s+}(\lda_s)^{-1}\right)-H_{k, r}(\iota_\lda L)\,,
\end{equation}
where $H_{k, r}(\iota_\lda L)$ is the restriction of $H_{k, r}$ to $\iota_{\lda}L$. For the kinetic part, we have
\begin{equation}
    \res_{\lambda_s = 0} \Tr(\iota_{\lda_s}L\, \partial_{t_k^r}\varphi_{s+}(\lda_s)\,\varphi_{s+}(\lda_s)^{-1})=\Tr\left(\Lambda_s (\phi_s^{(0)})^{-1}\partial_{t_k^r}\phi_s^{(0)}\right)\,,~~s=1,\dots,N,
\end{equation}
and
\begin{equation}
    \res_{\lambda_\infty = 0} \Tr(\iota_{\lda_\infty}L\, \partial_{t_k^r}\varphi_{\infty+}(\lda_\infty)\,\varphi_{\infty+}(\lda_\infty)^{-1})=\Tr\left(\Omega \partial_{t_k^r}\phi_\infty^{(1)}\,\phi_\infty^{(1)}\right)=\frac{1}{2}\partial_{t_k^r}\Tr\left(\Omega (\phi_\infty^{(1)})^2\right).
\end{equation}
The contribution at $\infty$ is a total derivative, so it will not enter the Euler-Lagrange equations and hence we discard it. Thus, only the term $\phi_s^{(0)}$ in the Taylor series of $\varphi_{s+}(\lda_s)$ appears in the kinetic term. We will simply denote it by $\phi_s$ to lighten notations. 
The Lagrangian coefficients of the Gaudin multiform take the form
\begin{equation}
    \Lag_{k, r} = \sum_{s=1}^N\Tr\left(\Lambda_s \phi_s^{-1}\partial_{t_k^r}\phi_s\right) - H_{k, r}(\iota_\lda L).
\end{equation}
More explicitly, for $k=1,2$, we have
\begin{equation}
    H_{1, r}(\iota_{\lda}L) = \sum_{s\neq r}\frac{\Tr (A_rA_s)}{\zeta_r-\zeta_s}+\Tr(A_r\Omega)\,,
\end{equation}
and 
\begin{equation}
      H_{2, r}(\iota_{\lda}L) = \Tr\left(A_r \left(\sum_{s\neq r}\frac{A_s}{\zeta_r-\zeta_s}  +\Omega\right)^2\right) - \Tr\left(A_r^2 \left(\sum_{s\neq r}\frac{A_s}{(\zeta_r-\zeta_s)^2}\right) \right)\,.
\end{equation}
Varying $\Lag_{1, r}$ and $\Lag_{2, r}$ with respect to $\phi_s$, $s=1,\dots,N$ (recalling that $A_s=\phi_s\,\Lambda_s\,\phi_s^{-1}$), one can check by direct calculations that the Euler-Lagrange equations give exactly \eqref{flow1}-\eqref{flow2}.

\begin{remark}
    The algebraic framework we have used to describe the Lagrangian multiform for the Gaudin is to a very large extent similar to that used in \cite{CSV} to construct Lagrangian multiforms of Zakharov-Mikhailov type. Therefore, in hindsight, it is perhaps not so surprising that the Lagrangian 
    \begin{equation}
\label{Lag_1k}
    \Lag_{1, r} = \sum_{s=1}^N\Tr\left(\Lambda_s  \phi_s^{-1}\partial_{t_1^r}\phi_s\right) - \sum_{s\neq r}\frac{\Tr (A_rA_s)}{\zeta_r-\zeta_s} - \Tr(A_r\Omega)\,,
\end{equation}
appears to be the direct analogue in the finite-dimensional case of the Zakharov-Mikhailov Lagrangians which describe integrable field theories with rational Lax matrices \cite{ZM}. It is a rather satisfying outcome that we have unravelled the unifying structure underlying such Lagrangians, whether in finite or infinite dimensions. They are all connected to Lie dialgebras which control the structure of their kinetic part and tell us which potentials to include (invariant functions on $\g^*$). Note that in \cite{VW}, a very similar Lagrangian, their Equation (24), was constructed by a completely different method: an adaptation of the idea of 4d Chern-Simons theory, see \cite{CY} and references therein, and of the construction in \cite{CSV2} to the case of a BF theory in 3d. This suggests the tantalising direction of deriving our Lagrangian multiforms from an appropriately adapted BF theory. This could perhaps offer an interpretation for the appearance of Lie dialgebras from this point of view, instead of introducing them ad hoc as we do in the present paper. 
\end{remark}

We know from the general theory that the closure relation $d\Lag = 0$ holds on shell. This implies
\begin{equation}
\label{eq:gaudin-closure}
    \partial_{t_j^s} \Lag_{k, r} - \partial_{t_k^r} \Lag_{j, s} = 0\,,
\end{equation}
for all possible combinations of $j,k$ and $r,s$. As we know, the kinetic and potential contributions give zero separately in each case. Let us illustrate the main steps here for $k=1$, $j=2$ and $r\neq s$ in \eqref{eq:gaudin-closure}, the left-hand side of which will then read
\begin{align*}
   \sum_{s^\prime=1}^N \partial_{t_2^s} \Tr\left(\Lambda_{s^\prime} \phi_{s^\prime}^{-1}\partial_{t_1^r}\phi_{s^\prime}\right) - \sum_{s^\prime=1}^N \partial_{t_1^r} \Tr\left(\Lambda_{s^\prime} \phi_{s^\prime}^{-1}\partial_{t_2^s}\phi_{s^\prime}\right)
   - \partial_{t_2^s} H_{1, r}(\iota_\lda L) + \partial_{t_1^r} H_{2, s}(\iota_\lda L)\,.
\end{align*}
Using the equations of motion, we have
\begin{align*}
    &\partial_{t_2^s} H_{1, r}(\iota_{\lda}L)\\
    &= \sum_{s^\prime\neq r} \frac{1}{\zeta_r-\zeta_{s^\prime}} \Tr \left(\left(  - \frac{[A_s^2\,,\,A_r]}{(\zeta_s-\zeta_r)^2} + \sum_{{s^{\prime\prime}} \neq s} \frac{[A_s A_{s^{\prime\prime}} + A_{s^{\prime\prime}}A_s, A_r]}{(\zeta_s-\zeta_r)(\zeta_s-\zeta_{s^{\prime\prime}})} +  \frac{[A_s \Omega + \Omega A_s\,,\,A_r]}{\zeta_s-\zeta_r} \right) A_{s^\prime} \right)\\
    &~~+ \sum_{\substack{s^\prime \neq r\\s^\prime \neq s}} \frac{1}{\zeta_r-\zeta_{s^\prime}} \Tr \left(A_r \left(  - \frac{[A_s^2\,,\,A_{s^\prime}]}{(\zeta_s-\zeta_{s^\prime})^2} + \sum_{{s^{\prime\prime}} \neq s} \frac{[A_s A_{s^{\prime\prime}} + A_{s^{\prime\prime}}A_s, A_{s^\prime}]}{(\zeta_s-\zeta_{s^\prime})(\zeta_s-\zeta_{s^{\prime\prime}})} +  \frac{[A_s \Omega + \Omega A_s\,,\,A_{s^\prime}]}{\zeta_s-\zeta_{s^\prime}}  \right) \right)\\
    &~~+ \frac{1}{\zeta_r-\zeta_s} \Tr \Bigg(A_r \Bigg( \sum_{s^\prime \neq s} \frac{[A_s^2\,,\,A_{s^\prime}]}{(\zeta_s-\zeta_{s^\prime})^2} -\sum_{s^\prime \neq s}\sum_{{s^{\prime\prime}}\neq s}\frac{[A_s\,,\,A_{s^\prime} A_{s^{\prime\prime}}]}{(\zeta_s-\zeta_{s^\prime})(\zeta_s-\zeta_{s^{\prime\prime}})}\\
    &\hspace{50ex} -   \sum_{s^\prime \neq s} \frac{[A_s\,,\, A_{s^\prime} \Omega + \Omega A_{s^\prime}]}{\zeta_s-\zeta_{s^\prime}}  - [A_s\,,\, \Omega^2] \Bigg) \Bigg)\\
    &~~+\Tr \left(\left( - \frac{[A_s^2\,,\,A_r]}{(\zeta_s-\zeta_r)^2} +  \sum_{{s^{\prime}} \neq s} \frac{[A_s A_{s^{\prime}} +  A_{s^{\prime}}A_s, A_r]}{(\zeta_s-\zeta_r)(\zeta_s-\zeta_{s^{\prime}})}  + \frac{[A_s \Omega + \Omega A_s\,,\,A_r]}{\zeta_s-\zeta_r} \right) \Omega \right)\,. 
\end{align*}
This is seen to add up to zero by assembling the terms of the same nature (quartic, cubic or quadratic in $A$), manipulating the sums, using the ${\rm ad}$-invariance property $\Tr([A,B]C)=\Tr(A[B,C])$ and the identity
$$\frac{1}{(\zeta_r-\zeta_s)(\zeta_r-\zeta_{s^\prime})}+\frac{1}{(\zeta_s-\zeta_{s^\prime})(\zeta_r-\zeta_{s^\prime})} +\frac{1}{(\zeta_s-\zeta_r)(\zeta_s-\zeta_{s^\prime})}=0 \,.$$
Similar calculations give $\partial_{t_1^r} H_{2, s}(\iota_\lda L)=0$.
For the kinetic terms we have
\begin{align*}
    &\partial_{t_2^s} \sum_{s^\prime=1}^N \Tr\left(\Lambda_{s^\prime} \phi_{s^\prime}^{-1}\partial_{t_1^r}\phi_{s^\prime}\right) - \partial_{t_1^r} \sum_{s^\prime=1}^N \Tr\left(\Lambda_{s^\prime} \phi_{s^\prime}^{-1}\partial_{t_2^s}\phi_{s^\prime}\right)\\ 
    &= \sum_{s^\prime=1}^N \Tr \left((\partial_{t_2^s}A_{s^\prime}) (\partial_{t_1^r}\phi_{s^\prime}) \phi_{s^\prime}^{-1} \right) - \sum_{s^\prime=1}^N \Tr \left((\partial_{t_1^r}A_{s^\prime}) (\partial_{t_2^s}\phi_{s^\prime}) \phi_{s^\prime}^{-1} \right)\\
    &~~+ \sum_{s^\prime=1}^N \Tr \left( A_{s^\prime} [(\partial_{t_2^s}\phi_{s^\prime})\phi_{s^\prime}^{-1}\,, (\partial_{t_1^r}\phi_{s^\prime})\phi_{s^\prime}^{-1}] \right)\\
    &~~+ \sum_{s^\prime=1}^N \Tr \left( A_{s^\prime} \left((\partial_{t_2^s}\partial_{t_1^r}\phi_{s^\prime})\phi_{s^\prime}^{-1} - (\partial_{t_1^r}\partial_{t_2^s}\phi_{s^\prime})\phi_{s^\prime}^{-1}\right) \right).
\end{align*}
The commutativity of flows ensures that the last term equals zero. Further, using the relation
\begin{equation}\label{eq:gaudin-flowrearrange}
    \partial_{t_2^s}A_{s^\prime} = [(\partial_{t_2^s} \phi_{s^\prime})\phi_{s^\prime}^{-1}\,, A_{s^\prime}],~~s^\prime = 1,\dots,N\,,
\end{equation}
it is easy to see that the first and the third terms cancel each other. Finally, for the second term, using ${\rm ad}$-invariance, \eqref{eq:gaudin-flowrearrange} and the on-shell relations in \eqref{flow1} and \eqref{flow2}, we have
\begin{align*}
    &\sum_{s^\prime=1}^N \Tr \left((\partial_{t_1^r}A_{s^\prime}) (\partial_{t_2^s}\phi_{s^\prime}) \phi_{s^\prime}^{-1} \right)\\
    &= \Tr \left((\partial_{t_1^r}A_r) (\partial_{t_2^s}\phi_r) \phi_r^{-1} \right) + \sum_{s^\prime \neq r} \Tr \left((\partial_{t_1^r}A_{s^\prime}) (\partial_{t_2^s}\phi_{s^\prime}) \phi_{s^\prime}^{-1} \right)\\
    &= -\sum_{s^\prime \neq r} 
    \Tr \left( \frac{[A_r\,,\,A_{s^\prime}]}{\zeta_r - \zeta_{s^\prime}} (\partial_{t_2^s}\phi_r) \phi_r^{-1} \right) - \Tr \left([A_r\,,\,\Omega] (\partial_{t_2^s}\phi_r) \phi_r^{-1} \right)
    \\    &~~
    + \sum_{s^\prime \neq r} \Tr \left(\frac{[A_r\,,\,A_{s^\prime}]}{\zeta_r - \zeta_{s^\prime}} (\partial_{t_2^s}\phi_{s^\prime}) \phi_{s^\prime}^{-1} \right)\\
    &= -\sum_{s^\prime \neq r} \frac{\Tr (A_{s^\prime} \partial_{t_2^s} A_r)}{\zeta_r - \zeta_{s^\prime}} - \Tr(\Omega \partial_{t_2^s}A_r) -\sum_{s^\prime \neq r} \frac{\Tr (A_r \partial_{t_2^s} A_{s^\prime})}{\zeta_r - \zeta_{s^\prime}}\\
    &=-\partial_{t_2^s} H_{1, r}(\iota_{\lda}L)
\end{align*}
which we previously showed to be zero.

\section{Conclusion}\label{ccl}

In this work, we provided an answer to the problem of constructing all the coefficients in Lagrangian $1$-form for a large class of finite-dimensional integrable systems (any model fitting the Lie dialgebra framework). A reinterpretation of our construction is that we proved that any collection of compatible equations in the Lax form 
\begin{equation}
    \partial_{t_k}L=[R_\pm \nabla H_k(L),L]\,,~~ k=1,\dots,N\,,
\end{equation}
is {\it variational}, by explicitly providing a collection of Lagrangians assembled in a multiform. The closure relation is equivalent to the involutivity of the Hamiltonians $H_k$. This is a corollary of our stronger result, Theorem \ref{prop_double_zero}. 

We recast our construction in a more general context which makes it clear how it descends from a ``free'' (phase space) Lagrangian on the cotangent bundle of a Lie group by reduction. This procedure is well-known in the Hamiltonian framework and we have explained how it translates into our framework, by exploiting the correspondence between moment maps and Noether currents. This offers a larger perspective on our results. On the one hand, it may lead to the possibility of constructing Lagrangian multiforms for systems of Calogero-Moser type by using reduction ideas appropriately. From the point of view of $r$-matrices, a strong motivation, and at the same time an anticipated difficulty, is the appearance of dynamical $r$-matrices in such systems. Comparison with the early work on Calogero-Moser multiforms \cite{YKLN} would be beneficial. On the other hand, it shows that our Lagrangian coefficients turn out to have a structure similar to those appearing in so-called geometric actions. The latter can be traced back (at least) to~\cite{W,AFS, AS,ANPZ} and are concerned with quantisation using Feynman's path integral in conjunction with coadjoint orbit methods. This interesting connection deserves further investigation.

Of the many models we could have used to illustrate the results in the present work, we chose the open Toda chain and the Gaudin model, two emblematic finite-dimensional integrable systems. The motivation for studying the finite Gaudin model is Vicedo's construction of a class of non-ultralocal field theories as affine Gaudin models \cite{Vic}. We very much hope that the present results combined with the approach of \cite{CSV} and Vicedo's construction will allow us to overcome the current limitation of Lagrangian multiforms to only ultralocal field theories.

\section*{Acknowledgements} V.C. wishes to thank T. Skrypnyk for helpful discussions on higher Gaudin Hamiltonians and B. Vicedo and M. Vermeeren for helpful feedback and comments on our draft. V.C. is grateful to M. Semenov-Tian-Shansky for pointing out that the associative YBE in Remark \ref{rem_assoc} was already introduced in \cite{STS2} and is related to the older notion of Rota-Baxter algebras. We want to express our special gratitude to one referee whose insightful comments led to the content of Section \ref{reduction}. V.C. and M.D. would like to thank the Isaac Newton Institute for Mathematical Sciences for support and hospitality during the programme {\it Dispersive Hydrodynamics} when initial work on this paper was undertaken (EPSRC Grant Number EP/R014604/1). A.A.S. is funded by the School of Mathematics EPSRC Doctoral Training Partnership Studentship (Project Reference Number 2704447).

\section*{Conflict of interest statement}
On behalf of all authors, the corresponding author states that there is no conflict of interest.

\section*{Data availability statement}
This manuscript has no associated data.

\bibliographystyle{sn-aps-M}
\bibliography{biblio}

\begin{comment} 

\end{comment}

\end{document}